\newtheorem{theorem}{Theorem}
\newtheorem{condition}{Condition}
\newtheorem{lemma}{Lemma}
\newtheorem{corollary}{Corollary}
\newtheorem{remark}{Remark}
\newtheorem{proposition}{Proposition}
\def\theequation{\arabic{section}.\arabic{equation}}
\def\thefigure{\arabic{figure}}
\newcommand{\newatop}[2]{\genfrac{}{}{0pt}{}{#1}{#2}}
\newcommand{\cc}[1]{{\mathcal{#1}}}
\newcommand{\puno}{{\pmb{+1}}}
\newcommand{\zero}{{\pmb{0}}}
\newcommand{\muno}{{\pmb{-1}}}
\newcommand{\pmuno}{{\pmb{\pm1}}}
\newlength{\pecettawidth}
\begin{document}
\title{\normalsize\Large\bfseries Homogeneous and heterogeneous 
nucleation in the three--state Blume--Capel model}

\author[1]{Emilio N.M.~Cirillo\footnote{emilio.cirillo@uniroma1.it}}
\author[2]{Vanessa~Jacquier\footnote{v.jacquier@uu.nl}}
\author[2]{Cristian~Spitoni\footnote{C.Spitoni@uu.nl}}

\affil[1]{Dipartimento di Scienze di Base e Applicate per l'Ingegneria, 
             Sapienza Universit\`a di Roma, 
             via A.\ Scarpa 16, I--00161, Roma, Italy.}

\affil[2]{Institute of Mathematics,
University of Utrecht, Budapestlaan 6, 3584 CD Utrecht.} 

\date{\empty} 

\maketitle

\begin{abstract}
The metastable behavior of the 
stochastic Blume--Capel model with Glauber dynamics is 
studied when zero-boundary conditions are considered. The presence of zero-boundary conditions changes drastically the metastability scenarios of the model: 
\emph{heterogeneous nucleation} will  be proven in the region of the parameter space where the chemical potential is  larger than the external magnetic field.
\end{abstract}

\keywords{Glauber dynamics, Blume--Capel model, metastability, nucleation,
          low temperature dynamics, effect of boundary conditions}

\begin{multicols}{2}

\section{Introduction}
\label{s:int}
\par\noindent
We study the metastable behavior of the stochastic Blume--Capel model 
\cite{blume1966,capel1966} under the Glauber dynamics with zero-boundary 
conditions.

The metastable behavior of the Blume--Capel model 
has been firstly rigorously studied in \cite{co1996} in finite volume in the 
limit of temperature tending to zero. In that paper the parameters have 
been chosen so that the metastable state is unique. 
The same regime is studied in 
\cite{cn2013,ll2016,cns2017} choosing the parameters in such a way that 
the model exhibits two not degenerate in energy metastable states
\cite{bet2021potts}.
The regime of infinite volume has been considered in 
\cite{mo2001,llm2019}.

Metastability is a widely studied phenomenon that has been investigated
on mathematical grounds 
in the past fifty years from different point of views and with 
several approaches. We will use, here, the so--called 
\emph{pathwise approach}, originally proposed in 
\cite{cassandro1984metastable} and developed in 
several more recent studies
\cite{olivieri1995markov,manzo2004essential,cirillo2015metastability,olivieri2005large}. 
This method provides a standard way to characterize metastable 
states and a technique to compute its exit time and to describe 
the typical exit trajectories. 

Two more approaches to the rigorous mathematical description of 
metastability have been developed in the last decades, 
the \emph{potential--theoretic approach}
\cite{mathieupicco,bovier2002metastability,bovier2016metastability} 
and the  
\emph{trace method} \cite{beltran2010tunneling}.

The study of metastability is typically conducted for periodic 
boundary conditions; these are, indeed, a rather natural setup in this 
context. In particular periodic boundary conditions were considered 
in all the studies of the Blume--Capel model mentioned above.
In the present paper we shall consider the case of zero-boundary 
condition, which is particularly important from the point 
of view of applications. Indeed, not periodic boundary conditions 
mimic the presence of a defect in the system.

In the presence of defects (or boundaries), the nucleus of the new phase forms in contact with the impurities (or boundaries), so that the properties of the impurities control the nucleation rate. The nucleation starts indeed at phase boundaries or impurities, since at these sites the
 free energy barrier is lower, and the nucleation is facilitated. Therefore, the nucleation observed in practice is usually catalyzed and it is named \emph{heterogeneous nucleation}: see for instance \cite{cantor2003, perepezko2003}  for the crystallization case, and \cite{smorodin2006} for the condensation.

 In order to understand the general mechanisms triggering the heterogeneous nucleation,  Monte Carlo simulations for simple toy models (e.g., Ising and lattice gas models) are often used, see \cite{binder2016}. For instance, in \cite{page2006} Monte Carlo simulations for two-dimensional Ising models are used for studying the roles of \emph{pores} on the surface in the nucleation process.  The simulations show that the nucleation occurring at pores has nucleation rate of orders of magnitude higher than the one starting on flat surfaces. This behavior is very common for porous materials, which often present indeed the well known phenomenon of \emph{capillary condensation}, i.e., the condensation of liquid bridges in the pores, see \cite{restagno2000}.

Simulations of two dimensional Ising model have also been used for studying the the role of microscopic impurities (i.e., sites with fixed spins) in the 
bulk \cite{sear2006}: the heterogeneous nucleation, starting from a single fixed spin, is more than four orders of magnitude faster than homogeneous
nucleation. Therefore, small microscopic impurities strongly promote nucleation, making very difficult to purify a sample sufficiently in order to observe homogeneous nucleation. The same conclusions are obtained as well for the two-dimensional Potts model \cite{sear2005} with competitive nucleating phases.

Heterogeneous nucleation plays also a pivotal role in the process of \emph{crystallization of proteins} on surfaces (see \cite{saridakis2009}): by tuning the
geometrical properties of the surface (porosity, pore size, roughness), heterogeneous nucleation can be activated, enhancing 
the probability of obtaining crystals with appropriate size. The paper \cite{curcio2010} uses a two-dimensional Ising model for showing the dependence of the nucleation rate on the the polymeric surfaces used as substrate for heterogeneous nucleation. Different rough surfaces are modeled indeed with different profiles of fixed spins at the boundaries.

The effect of the boundary conditions on the metastable behavior 
was studied on rigorous terms in \cite{cl1998}
in the framework of the Ising model; there the free boundary condition
case was considered.
The authors proved that the main features characterizing the 
metastable behavior in the case of periodic boundary conditions 
remain unchanged. But some new effects show up: the main difference 
with the periodic case is that the nucleation phenomenon is no more 
spatially homogeneous, in the sense that the critical 
droplet, which has to be formed to nucleate the stable state, 
appears necessarily at one of the four corners of the lattice. 
Other details are different, such as the size of the critical droplet and, 
consequently, the exponential estimate of the exit time. 

In this paper we shall show that, due to the three--state 
character of the Blume--Capel model, the metastability 
scenario proven for periodic boundary conditions \cite{co1996} changes 
deeply when different boundary conditions are considered. 
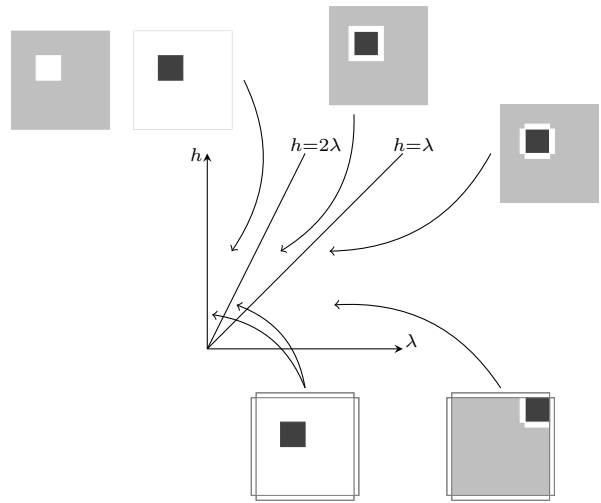
\begin{figure}[H]
\begin{center}
\begin{tikzpicture}[scale=0.65]
  \draw[-stealth](4,4)--(8,4);
  \draw[-stealth](4,4)--(4,8);
  \draw (4,4)--(8,8);
  \draw (4,4)--(6,8);
  \node[] at (6.,8.2) {${\phantom{m}_{h=2\lambda}}$};
  \node[] at (8.,8.2) {${\phantom{m}_{h=\lambda}}$};
  \node[] at (3.55,8.) {${\phantom{m}_{h}}$};
  \node[] at (7.95,4.2) {${\phantom{m}_{\lambda}}$};
  [very thick]
  \draw[->] (4.75,9.5) to [bend left] (4.5,6.);
  \draw[->] (7.,8.8) to [bend left] (5.5,6.);
  \draw[->] (9.8,8.) to [bend left] (6.5,6.);
  \draw[->] (6.,3.2) to [bend right] (4.1,4.7);
  \draw[->] (6.,3.2) to [bend right] (4.6,4.9);
  \draw[->] (10.,3.2) to [bend right] (6.6,4.9);

  \draw[fill=lightgray,lightgray] (0,8.5) rectangle (2,10.5);
  \draw[fill=gray,gray] (0.5,9.5) rectangle (1.,10);
  \draw[fill=gray,gray] (2.5,8.5) rectangle (4.5,10.5);
  \draw[fill=black,black] (3.0,9.5) rectangle (3.5,10);

  \draw[fill=lightgray,lightgray] (6.5,9.) rectangle (8.5,11.);
  \draw[fill=black,black] (7.0,10.) rectangle (7.5,10.5);
  \draw[fill=gray,gray] (6.9,10.5) rectangle (7.6,10.6);
  \draw[fill=gray,gray] (6.9,9.9) rectangle (7.,10.5);
  \draw[fill=gray,gray] (7,9.9) rectangle (7.6,10.);
  \draw[fill=gray,gray] (7.5,10) rectangle (7.6,10.5);

  \draw[fill=lightgray,lightgray] (10.,7.) rectangle (12.,9.);
  \draw[fill=black,black] (10.5,8.) rectangle (11.,8.5);
  \draw[fill=gray,gray] (10.5,8.5) rectangle (11.,8.6);
  \draw[fill=gray,gray] (10.5,7.9) rectangle (11.,8.);
  \draw[fill=gray,gray] (10.4,8.) rectangle (10.5,8.5);
  \draw[fill=gray,gray] (11.,8.) rectangle (11.1,8.5);

  \draw[fill=gray,gray] (5,1.) rectangle (7,3.);
  \draw[fill=black,black] (5.5,2.) rectangle (6.,2.5);
  \draw[gray] (5.,3.) rectangle (7.,3.1);
  \draw[gray] (5.,0.9) rectangle (7.,1.);
  \draw[gray] (4.9,1.) rectangle (5.,3.);
  \draw[gray] (7.,1.) rectangle (7.1,3.);

  \draw[fill=lightgray,lightgray] (9.,1.) rectangle (11.,3.);
  \draw[fill=black,black] (10.5,2.5) rectangle (11.,3.);
  \draw[fill=gray,gray] (10.5,2.4) rectangle (11.,2.5);
  \draw[fill=gray,gray] (10.4,2.5) rectangle (10.5,3.);
  \draw[gray] (9.,3.) rectangle (11.,3.1);
  \draw[gray] (9.,0.9) rectangle (11.,1.);
  \draw[gray] (8.9,1.) rectangle (9.,3.);
  \draw[gray] (11.,1.) rectangle (11.1,3.);


  \draw[fill=lightgray,lightgray] (0,8.5) rectangle (2,10.5);
  \draw[fill=white,white] (0.5,9.5) rectangle (1.,10);
  \draw[fill=white,white] (2.5,8.5) rectangle (4.5,10.5);
  \draw[fill=darkgray,darkgray] (3.0,9.5) rectangle (3.5,10);

  \draw[fill=lightgray,lightgray] (6.5,9.) rectangle (8.5,11.);
  \draw[fill=darkgray,darkgray] (7.0,10.) rectangle (7.5,10.5);
  \draw[fill=white,white] (6.9,10.5) rectangle (7.6,10.6);
  \draw[fill=white,white] (6.9,9.9) rectangle (7.,10.5);
  \draw[fill=white,white] (7,9.9) rectangle (7.6,10.);
  \draw[fill=white,white] (7.5,10) rectangle (7.6,10.5);

  \draw[fill=lightgray,lightgray] (10.,7.) rectangle (12.,9.);
  \draw[fill=darkgray,darkgray] (10.5,8.) rectangle (11.,8.5);
  \draw[fill=white,white] (10.5,8.5) rectangle (11.,8.6);
  \draw[fill=white,white] (10.5,7.9) rectangle (11.,8.);
  \draw[fill=white,white] (10.4,8.) rectangle (10.5,8.5);
  \draw[fill=white,white] (11.,8.) rectangle (11.1,8.5);

  \draw[fill=white,white] (5,1.) rectangle (7,3.);
  \draw[fill=darkgray,darkgray] (5.5,2.) rectangle (6.,2.5);
  \draw[gray] (5.,3.) rectangle (7.,3.1);
  \draw[gray] (5.,0.9) rectangle (7.,1.);
  \draw[gray] (4.9,1.) rectangle (5.,3.);
  \draw[gray] (7.,1.) rectangle (7.1,3.);

  \draw[fill=lightgray,lightgray] (9.,1.) rectangle (11.,3.);
  \draw[fill=darkgray,darkgray] (10.5,2.5) rectangle (11.,3.);
  \draw[fill=white,white] (10.5,2.4) rectangle (11.,2.5);
  \draw[fill=white,white] (10.4,2.5) rectangle (10.5,3.);
  \draw[gray] (9.,3.) rectangle (11.,3.1);
  \draw[gray] (9.,0.9) rectangle (11.,1.);
  \draw[gray] (8.9,1.) rectangle (9.,3.);
  \draw[gray] (11.,1.) rectangle (11.1,3.);
\end{tikzpicture}
\end{center}
\caption{Schematic representation of the behavior of the Blume--Capel model
in the region $h,\lambda>0$ in the case of periodic boundary
condition (top pictures) and in the case of zero-boundary
condition (bottom pictures). Light gray 
for minuses, dark gray for pluses, and white for zeros.}
\label{fig:int000}
\end{figure}
The Hamiltonian of the Blume--Capel model depends on two 
parameters, the magnetic field $h$ and the chemical 
potential $\lambda$. The spin variables can take three values, 
$-1$, $0$, and $+1$.
We limit our discussion to the 
case $\lambda,h>0$, where the chemical potential term 
equally favors minus and plus spins with respect to zeroes and 
the magnetic field favors pluses and disadvantages 
minuses with respect to the zeroes.
In this parameter region, in the periodic case, it was proven in 
\cite{co1996} the following result
(see Figure~\ref{fig:int000} for a schematic description): 
the stable state is the homogeneous plus state and the metastable state is
the homogeneous minus states. 
Moreover, for $h>2\lambda$ the system exits the metastable state 
via the formation of a zeroes square droplet and reaches the 
homogeneous zero state. Then, at a random time the transition 
from the zero state to the stable state is realized via the formation 
of a plus square droplet. 
For $2\lambda>h$ the system exits the metastable state 
via the formation of a plus square droplet separated by the sea of minuses 
by a layer of zeroes of width one (with minus at the corners 
in the case $\lambda>h$). In this way the stable state is directly 
reached. 

This scenario changes drastically when zero-boundary conditions 
are considered: for $h>\lambda$ the metastable state is the 
homogeneous zero state and the plus stable state is reached via 
the formation of a plus square droplet at any point in the lattice. 
For $\lambda>h$, on the contrary, the situation is similar to 
the periodic boundary condition case, but, starting from the 
minus metastabe state, the stable phase is nucleated 
at one of the four corners of the lattice
via the formation of a plus square droplet
separated by the sea of minus by a one site zero layer. 
Thus, the nucleation is spatially 
homogeneous for $h>\lambda$ and spatially not homogeneous for 
$\lambda>h$. 
This scenario will be proved rigorously in the region $\lambda>h$ of the 
parameter plane.

The paper is organized as follows. 
In Section~\ref{s:mrs} we introduce the model.
In Section~\ref{section:main_results} we state the main results. 
In particular in Section~\ref{s:heu} we present the heuristic 
study of the metastable behavior in the whole parameter 
region $h,\lambda>0$, while in Section~\ref{s:res} we state formal 
results for the restricted region $\lambda>h>0$. 
Section~\ref{s:pro-th} is devoted to the proof of the results
stated in Section~\ref{s:mrs} and \ref{section:main_results}, 
while the proofs of the more technical lemmas are reported in 
Section~\ref{s:pro-le}.

\section{Model and definitions}
\label{s:mrs}
\par\noindent
In this section we first define the model and then state our main 
results. Proofs are postponed to the following sections.

\subsection{The lattice}
\label{s:lat}
\par\noindent
We consider the set $\mathbb{Z}^2$ embedded in $\mathbb{R}^2$ 
and call \emph{sites} its elements.
Given two sites $i,i'\in\mathbb{Z}^2$ we let $|i-i'|$ be their 
Euclidian distance. 
Given $i\in\mathbb{Z}^2$, we say that 
$i'\in\mathbb{Z}^2$ is a \emph{nearest neighbor} 
of $i$ if and only if $|i-i'|=1$. 
Pairs of neighboring sites will be called \emph{bonds}.
A set $I\subset\mathbb{Z}^2$ is \emph{connected} if and 
only if for any $i\neq i'\in I$ there exists a sequence 
$i_1,i_2,\dots,i_n$ of sites of $I$ such that 
$i_1=i$, 
$i_n=i'$, and 
$i_k$ and $i_{k+1}$ are nearest neighbors for any $k=1,\dots,n-1$.

A \emph{column}, resp. a \emph{row} of $\mathbb{Z}^2$ as a sequence of vertical, resp. horizontal, connected sites.

Given $I\subset \mathbb{Z}^2$ we call 
\emph{internal boundary} $\partial^-I$ of $I$
the set of sites in $I$ having a nearest neighbor outside 
$I$. 
The \emph{bulk} of $I$ is the set $I\setminus\partial^-I$, 
namely, the set of sites of $I$ having four nearest neighbors 
in $I$. 
We call 
\emph{external boundary} $\partial^+I$ of $I$
the set of sites in $\mathbb{Z}^2\setminus I$ having 
a nearest neighbor inside $I$. 

A set $R\subset\mathbb{Z}^2$ is called a \emph{rectangle} 
(resp.~\emph{square}) if the union of the 
closed unit squares of $\mathbb{R}^2$ 
centered at the site of $R$ with sides parallel 
to the axes of $\mathbb{Z}^2$ is a rectangle (resp.~a square) 
of $\mathbb{R}^2$.
The \emph{sides} of a rectangle are the four maximal connected 
subsets of the its internal boundary lying on straight lines 
parallel to the axes of $\mathbb{Z}^2$.
The \emph{length} of one side of a rectangle is the number of sites 
belonging to the side itself. 
A \emph{quasi-square} is a rectangle with side lengths equal to $n$ and $ n+1$.

For any set $I\subset\mathbb{Z}^2$ we call \emph{rectangular envelope}
of $I$ the smallest (with respect to inclusion) rectangle 
$R\subset\mathbb{Z}^2$ such that $I\subset R$. 
Two rectangles of $\mathbb{Z}^2$ are called \emph{interacting} if there 
exists a site not belonging to them at distance one from both of them.
Given a finite set $I\subset\mathbb{Z}^2$, the \emph{bootstrap 
construction} associates to $I$ a collection of not interacting 
rectangles through the following sequence of 
operations:
i) partition $I$ in maximal connected subsets;
ii) Consider the family of rectangles obtained by collecting 
the rectangular envelope of each maximal connected subset of $I$;
iii) Partition the family of rectangles in maximal sequences 
of pairwise interacting rectangles;
iv) Consider a new family of rectangles obtained by collecting 
the rectangular envelope of the union of the rectangles of each 
of the maximal sequences constructed at point iii);
v) Repeat the operations iii) and iv) until the family of 
rectangles constructed at point iv) is made of pairwise not 
interacting rectangles. 

\subsection{The Blume--Capel model}
\label{s:mod}
\par\noindent
Consider the square $\Lambda=\{1,\dots,L\}^2\subset \mathbb{Z}^2$.
Let $\{-1,0,+1\}$
be the \emph{single spin state space} and $\mathcal{X}:=\{-1,0,+1\}^{\Lambda}$
be the \emph{configuration} or \emph{state} space. 
With $\puno$, $\muno$, $\zero$
we denote the homogeneous configurations
in which all the spins are equal to 
$+1$, $-1$, and $0$, respectively.
Let $\eta \in \mathcal{X}$ and $A \subseteq \Lambda$, we denote by $\eta_A$ the \emph{restricted configuration} of $\eta$ on the subset $A$. 
We say that two configurations $\sigma$ and $\eta$ are 
\emph{communicating} if and only if they differ at most for the value 
of the spin at one site, and we denote by $\sigma \sim \eta$. 

The \emph{Hamiltonian} of
the model is
\begin{equation}
\label{mod005}
\begin{array}{rcl}
H(\eta)
&\!\!=&\!\!
{\displaystyle
\frac{J}{2}\sum_{\newatop{i,j\in\Lambda:}{|i-j|=1}}[\eta(i)-\eta(j)]^2
+
J\sum_{i\in\partial^-\Lambda}
 \sum_{\newatop{j\in\mathbb{Z}^2\setminus\Lambda:}{|i-j|=1}}
[\eta(i)]^2
}
\\
&\!\!&\!\!
{\displaystyle
-\lambda\sum_{i\in\Lambda}\eta(i)^2
-h\sum_{i\in\Lambda}\eta(i)
}
\end{array}
\end{equation}
for any $\eta\in\mathcal{X}$, where
$J>0$ is called the \emph{coupling constant}, 
$\lambda,h\in\mathbb{R}$ are  called \emph{chemical potential}
and 
\emph{magnetic field} respectively. 
The first term at the right--hand side of \eqref{mod005} 
will be called \emph{internal interaction} term, 
the second term  \emph{boundary interaction} term, and 
the last two will 
be called \emph{site} terms.
We stress that the second term accounts for 
the interaction between the spins at the sites of the internal boundary 
of $\Lambda$ and the zero external boundary conditions: each site of the
internal boundary contributes with one single bond, excepted for the 
four sites at the corners of $\Lambda$, which contributes with 
two bonds each. 
We will refer to $H(\eta)$ as the \emph{energy} of the configuration $\eta$.

In order to state our results we will rely on the following assumptions 
on the parameters of the model\footnote{With the notation $0<a\ll b$ we mean
$0<a< cb$ for some suitable positive constant $c$ that we are not 
interested to compute exactly.}.

\begin{condition}
\label{c:mod007}
We assume that the parameters of the model satisfy the following 
properties:

\begin{enumerate}
\item $J\gg\lambda,h>0$, \label{mod007a}
\item $L>\Big(\frac{2J}{\lambda-h}\Big)^3$, \label{mod007b}
\item $\frac{2J}{\lambda+h}, \, \frac{2J}{\lambda-h}, \, \frac{2J+\lambda-h}{\lambda+h}, \, \frac{J+\lambda+h}{h}$ are not integers. \label{mod007c}
\end{enumerate}
%
\end{condition}


The Gibbs measure
associated with the Hamiltonian \eqref{mod005} is
\begin{equation}
\label{mod010}
\mu_\beta(\eta)=\frac{1}{Z_\beta}\exp\{-\beta H(\eta)\}
\end{equation}
where $Z_\beta=\sum_{\eta'\in\cc{X}} \exp\{-\beta H(\eta')\}$
is the \emph{partition function} 
and $\beta>0$ is the inverse \emph{temperature}.

The time evolution of the model will be 
defined by assuming that spins evolve according to a Glauber dynamics, 
with the Metropolis weights.
More precisely, 
we consider the discrete time Markov chain $\sigma_t\in\mathcal{X}$, 
with $t\ge0$, with transition matrix $p_\beta$ defined as follows: 
$p_\beta(\eta,\eta')=0$ for 
$\eta,\eta'\in\mathcal{X}$ not communicating configurations,
\begin{equation}
\label{mod015}
p_\beta(\eta,\eta')
=
 \frac{1}{2|\Lambda|}e^{-\beta[H(\eta')-H(\eta)]_+}
\end{equation}
for $\eta,\eta'\in\mathcal{X}$ communicating configuration such 
that $\eta\neq\eta'$
(where, for any real $a$, we let $[a]_+=a$  if $a>0$ and $0$ if $a<0$),
and 
\begin{equation}
\label{mod020}
p_\beta(\eta,\eta)
=1-\sum_{\eta'\neq\eta}p_\beta(\eta,\eta')
\end{equation}
for any $\eta\in\mathcal{X}$.
The dynamics can be described as follows: at each time a site is chosen 
with uniform probability $1/|\Lambda|$ and a spin value differing from the one 
at the chosen site is selected with probability $1/2$, then 
the flip of the spin at the chosen site 
to the selected spin value is performed with the Metropolis 
probability. 

The probability measure induced by the Markov chain 
started at $\eta$ is denoted by 
$P_\eta$ and the related 
expectation is 
denoted by $E_\eta$.
 
\begin{lemma}
\label{t:mod000}
The Markov chain defined above
is reversible with respect to the Gibbs measure 
\eqref{mod010}, i.e.,
the detailed balance condition 
\begin{equation}
\label{mod050}
\mu_\beta(\eta)p_\beta(\eta,\eta')
=
\mu_\beta(\eta')p_\beta(\eta',\eta)
\end{equation}
is satisfied
for any $\eta,\eta'\in\mathcal{X}$.
\end{lemma}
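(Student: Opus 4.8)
The plan is to verify the detailed balance condition \eqref{mod050} directly from the explicit Metropolis form \eqref{mod015}, treating separately the three possibilities for the pair $\eta,\eta'$. First I would dispose of the trivial cases: when $\eta=\eta'$ both sides of \eqref{mod050} coincide by inspection, and when $\eta$ and $\eta'$ are \emph{not} communicating we have $p_\beta(\eta,\eta')=p_\beta(\eta',\eta)=0$ by the very definition of the transition matrix, so both sides vanish. The only substantive case is that of two distinct communicating configurations, and even there the diagonal term \eqref{mod020} plays no role.

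The key observation is an elementary algebraic identity for the Metropolis weights. For any two real numbers $a$ and $b$ one has $a+[b-a]_+=\max\{a,b\}=b+[a-b]_+$, whence
\[
e^{-a}\,e^{-[b-a]_+}=e^{-\max\{a,b\}}=e^{-b}\,e^{-[a-b]_+}.
\]
Applying this with $a=\beta H(\eta)$ and $b=\beta H(\eta')$, and recalling from \eqref{mod010} that $\mu_\beta(\eta)=Z_\beta^{-1}e^{-\beta H(\eta)}$, I would compute
\[
\mu_\beta(\eta)\,p_\beta(\eta,\eta')
=\frac{1}{2|\Lambda|Z_\beta}\,e^{-\beta H(\eta)}\,e^{-\beta[H(\eta')-H(\eta)]_+}
=\frac{1}{2|\Lambda|Z_\beta}\,e^{-\beta\max\{H(\eta),H(\eta')\}}.
\]
The right--hand side is manifestly symmetric under the exchange of $\eta$ and $\eta'$, so repeating the computation with the roles of $\eta$ and $\eta'$ interchanged yields the identical expression for $\mu_\beta(\eta')\,p_\beta(\eta',\eta)$. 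This establishes \eqref{mod050} in the remaining case and completes the argument.

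There is essentially no analytic obstacle here: the statement is the standard reversibility of a Metropolis chain, and its entire content is the symmetrization identity above. The only points deserving a little care are purely bookkeeping, namely that the prefactor $1/(2|\Lambda|)$ in \eqref{mod015} is independent of the configurations and therefore does not spoil the balance, and that the Hamiltonian \eqref{mod005}, however it is built up from its internal, boundary, and site terms, enters only through the difference $H(\eta')-H(\eta)$ and never needs to be evaluated in closed form.
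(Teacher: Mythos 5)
Your proof is correct and takes essentially the same route as the paper: after disposing of the trivial diagonal and non-communicating cases, both arguments verify detailed balance directly from the Metropolis form \eqref{mod015} and the Gibbs measure \eqref{mod010}. The only difference is cosmetic: the paper splits into the cases $H(\eta')>H(\eta)$, $H(\eta')<H(\eta)$, and $H(\eta')=H(\eta)$ and substitutes the explicit transition probabilities in each, whereas you absorb all three cases at once via the symmetrization identity $e^{-a}\,e^{-[b-a]_+}=e^{-\max\{a,b\}}$.
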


\subsection{Paths, energy costs, metastable states}
\label{s:pat}
\par\noindent
A sequence of configurations 
$(\omega_1,\omega_2,\dots,\omega_n)\in\mathcal{X}^n$ such that 
$\omega_i$ and $\omega_{i+1}$ are communicating for any $i=1,2,\dots,n-1$ 
is called a \emph{path of length} $n$.
A path $(\omega_1,\dots,\omega_n)$ is called \emph{downhill}
(resp.~\emph{uphill}) if and only if 
$H(\omega_i)\ge H(\omega_{i+1})$ 
(resp.~$H(\omega_i)\le H(\omega_{i+1})$) 
for any $i=1,2,\dots,n-1$.
In particular, a path $(\omega_1,\dots,\omega_n)$ is called \emph{two-steps downhill}
 if and only if $H(\omega_i)\ge H(\omega_{i+2}) \ge H(\omega_{i+1})$ 
for any $i=1,2,\dots,n-2$.
Given two configurations $\eta,\eta'\in\mathcal{X}$, the set of 
paths with first configuration $\eta$ and last configurations $\eta'$ 
is denoted by $\Omega(\eta,\eta')$.

Given a path $\underline{\omega}=(\omega_1,\dots,\omega_n)$, its \emph{height} 
$\Phi(\underline{\omega})$ is the maximal height reached by the 
configurations of the path, more precisely,
\begin{equation}
\label{pat000}
\Phi(\underline{\omega})
=\max_{i=1,\dots,n}H(\omega_i)
\;.
\end{equation}
Given two configurations $\eta,\eta'$, the \emph{communication height}
between $\eta$ and $\eta'$ is defined as 
\begin{equation}
\label{pat005}
\Phi(\eta,\eta')
=
\min_{\omega\in\Omega(\eta,\eta')}\Phi(\omega)
\;.
\end{equation}
Any path $\omega\in\Omega(\eta,\eta')$ such that 
$\Phi(\omega)=\Phi(\eta,\eta')$ is called \emph{optimal} 
for $\eta$ and $\eta'$.

The \emph{stability level} of
a configuration $\eta \in \mathcal{X}$ is
\begin{equation}
V_{\eta}:=\Phi(\sigma,\mathcal{I}_{\eta})-H(\eta),
\end{equation}
where $\mathcal{I}_{\sigma}$ is the set of configurations with energy strictly lower than $H(\eta)$. 
If $\mathcal{I}_{\sigma}$ is empty, then we define $V_{\sigma}=\infty$. 

The metastable states are those states where the stability level is maximum. We denote by $\Gamma_m$ the \emph{maximal stability level},
\begin{equation}\label{Gamma}
    \Gamma_m:=\max_{\sigma\in \mathcal{X}\setminus \mathcal{X}^s}V_{\sigma}.
\end{equation}

Moreover, we define the \emph{energy barrier}
as $\Phi(m, s)-H(m)$, where $m$ is a metastable state and $s$ is a ground state.

\subsection{Energy landscape}
\label{s:lan}
\par\noindent
A crucial ingredient for several results 
discussed in this section is the value of
the energy difference (\emph{energy cost}) associated with each possible spin flip.

\begin{table*}
\begin{center}
\begin{tabular}{c|c|c|c|c|c}
\hline\hline
minuses & zeroes & pluses & minus to zero & minus to plus & zero to plus\\
\hline\hline
4 & 0 & 0 & $4J+\lambda-h$ & $16J-2h$ & $12J-\lambda-h$ \\
\hline
3 & 1 & 0 & $2J+\lambda-h$ & $12J-2h$ & $10J-\lambda-h$ \\
\hline
3 & 0 & 1 &   $+\lambda-h$ & $8J-2h$ & $8J-\lambda-h$ \\
\hline
2 & 2 & 0 &   $+\lambda-h$ & $8J-2h$ & $8J-\lambda-h$ \\
\hline
2 & 1 & 1 & $-2J+\lambda-h$ & $4J-2h$ & $6J-\lambda-h$ \\
\hline
2 & 0 & 2 & $-4J+\lambda-h$ & $-2h$ & $4J-\lambda-h$ \\
\hline
1 & 3 & 0 & $-2J+\lambda-h$ & $4J-2h$ & $6J-\lambda-h$ \\
\hline
1 & 2 & 1 & $-4J+\lambda-h$ & $-2h$ & $4J-\lambda-h$ \\
\hline
1 & 1 & 2 & $-6J+\lambda-h$ & $-4J-2h$ & $2J-\lambda-h$ \\
\hline
1 & 0 & 3 & $-8J+\lambda-h$ & $-8J-2h$ & $-\lambda-h$ \\
\hline
0 & 4 & 0 & $-4J+\lambda-h$ & $-2h$ & $4J-\lambda-h$ \\
\hline
0 & 3 & 1 & $-6J+\lambda-h$ & $-4J-2h$ & $2J-\lambda-h$ \\
\hline
0 & 2 & 2 & $-8J+\lambda-h$ & $-8J-2h$ & $-\lambda-h$ \\
\hline
0 & 1 & 3 & $-10J+\lambda-h$ & $-12J-2h$ & $-2J-\lambda-h$ \\
\hline
0 & 0 & 4 & $-12J+\lambda-h$ & $-16J-2h$ & $-4J-\lambda-h$ \\
\hline\hline
\end{tabular}
\end{center}
\caption{Energy difference for a spin flip for all neighbor
configurations (opposite sign for reversed flip). 
The number of neighbor minuses, zeroes, and pluses
is reported in the first three columns and the energy difference 
in the last three.
For flips at the boundary (resp. corners) see the rows with at least 
one (resp. two) zero among the nearest neighbors.
}
\label{tab:heu000}
\end{table*}
The energy differences for a spin flip for all neighbor configurations  obtained from \eqref{mod005} are listed 
in table~\ref{tab:heu000}.
Since, as noted above, the boundary interaction term
is equal to the internal interaction with fixed zero condition 
in the external boundary, 
the energy difference associated with possible spin flips at the boundary 
is given by the rows of table~\ref{tab:heu000} with at least 
one zero among the nearest neighbors (at least two for the flip 
of a spin at the corners of $\Lambda$).

As we will see below, the homogeneous states $\muno$, $\zero$, and $\puno$ 
will play a crucial role in our study. 
We remark that, 
from \eqref{mod005}, it follows
\begin{equation}
\label{lan000}
H(\pmuno)=4JL-|\Lambda|(\lambda\pm h)
\,\textup{ and }\, H(\zero)=0.
\end{equation}
Thus, under the assumptions \eqref{mod007a} and \eqref{mod007b}, the 
energy hierarchy of the homogeneous states is
\begin{equation}
\label{lan002}
H(\puno)<H(\zero)<H(\muno)
\;\;\;\;\textup{ for } h\ge\lambda
\end{equation}
and 
\begin{equation}
\label{lan004}
H(\puno)<H(\muno)<H(\zero)
\;\;\;\;\textup{ for } h<\lambda.
\end{equation}

The \emph{ground state} of the system (or of the Hamiltonian)
is the configuration where 
the Hamiltonian \eqref{mod005} attains its absolute 
minimum\footnote{We note that if the second term at the right--hand side
of \eqref{mod005} was not present, then, both for free and periodic 
boundary conditions, for $\lambda>0$,
the ground state would be the plus homogeneous configuration $\puno$
for $h>0$ and 
the minus homogeneous configuration $\muno$
for $h<0$. This would follow 
from the fact that in these homogeneous states the 
interaction contribution to the Hamiltonian is zero and the site 
contribution is minimal.}.
\begin{lemma}
\label{t:lan00-5}
Under Condition~\ref{c:mod007} 
the homogeneous state $\puno$ 
is the ground state of the system.
\end{lemma}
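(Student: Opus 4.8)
The plan is to reduce the three--state problem to a two--state (lattice--gas) one by bounding $H$ from below with a functional that only records whether a spin is zero or not, and then to combine a discrete isoperimetric inequality with the largeness of $L$ granted by Condition~\ref{c:mod007}. First I would rewrite the boundary interaction term as an ordinary nearest--neighbor interaction against a fixed zero frame $\partial^+\Lambda$: since $\eta\equiv0$ on the frame, $[\eta(i)]^2=[\eta(i)-0]^2$, so the whole interaction part of \eqref{mod005} becomes $J\sum_{\langle i,j\rangle}[\eta(i)-\eta(j)]^2$, the sum running over all bonds with at least one endpoint in $\Lambda$ (this is exactly the identification already noted after Table~\ref{tab:heu000}).

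Next, for each site set $x_i:=\eta(i)^2\in\{0,1\}$ and use the two pointwise inequalities $\eta(i)\le\eta(i)^2$ (hence $-h\,\eta(i)\ge-h\,\eta(i)^2$) and $[\eta(i)-\eta(j)]^2\ge[\eta(i)^2-\eta(j)^2]^2=[x_i-x_j]^2$, both immediate for spins in $\{-1,0,+1\}$. They give
$$
H(\eta)\ \ge\ \widetilde H(x):=J\sum_{\langle i,j\rangle}[x_i-x_j]^2-(\lambda+h)\sum_{i\in\Lambda}x_i\ =\ J\,P(R)-(\lambda+h)|R|,
$$
where $R:=\{i\in\Lambda:\eta(i)\neq0\}$ and $P(R)$ is the number of bonds separating $R$ from the frame and from $\Lambda\setminus R$, i.e. the perimeter of $R$ in the framed box. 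By \eqref{lan000}, $\widetilde H$ evaluated on $R=\Lambda$ equals $4JL-(\lambda+h)L^2=H(\puno)$.

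The heart of the argument is the purely geometric claim $\widetilde H(R)\ge\widetilde H(\Lambda)$ for every $R\subseteq\Lambda$. To get it I would prove the isoperimetric bound $P(R)\ge 4|R|/L$: projecting $R$ on the two axes, each occupied column and each occupied row carries at least two perimeter bonds (the occupancy profile along a line starts and ends at the zero frame), so $P(R)\ge 2c+2r\ge4\sqrt{cr}\ge4\sqrt{|R|}$, where $c,r$ are the numbers of occupied columns and rows, $|R|\le cr$, and $\sqrt{|R|}\ge|R|/L$ because $|R|\le L^2$. Writing $m:=L^2-|R|$ for the number of zeros, this yields
$$
\widetilde H(R)-\widetilde H(\Lambda)=J\,P(R)-4JL+(\lambda+h)m\ \ge\ m\Big(\lambda+h-\tfrac{4J}{L}\Big)\ \ge\ 0,
$$
the last step using that Condition~\ref{c:mod007} forces $L>4J/(\lambda+h)$.

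Finally I would settle uniqueness. Equality $H(\eta)=H(\puno)$ forces $m=0$ above, i.e. $\eta$ is zero--free, and then forces equality in $-h\,\eta(i)\ge-h\,\eta(i)^2$ at every site, which for a zero--free configuration holds only where $\eta(i)=+1$; hence $\eta=\puno$. I expect the isoperimetric step to be the only genuine obstacle: one must count the frame bonds correctly (the four corners of $\Lambda$ carry two each) and check that the axis--projection survives the fixed zero frame. Conceptually, the competition being controlled here is between the boundary cost $4JL$ paid by $\puno$ and the bulk gain $(\lambda+h)L^2$, and the largeness of $L$ in Condition~\ref{c:mod007} is exactly what tips it in favour of $\puno$.
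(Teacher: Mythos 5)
Your proof is correct, but it takes a genuinely different route from the paper's. The paper argues constructively, via paths: given $\eta\neq\puno$, it first flips every minus to a plus (checking term by term in \eqref{mod005} that the energy cannot increase), and then, in the resulting configuration of zeros and pluses, applies the bootstrap construction together with a shrink-or-grow dichotomy on the plus rectangles --- shrink subcritical ones to reach $\zero$ and invoke $H(\puno)<H(\zero)$, grow supercritical ones until $\puno$ is built --- i.e.\ it exhibits explicit spin-flip sequences ending strictly below $H(\eta)$. Your argument is variational: the two pointwise inequalities $-h\,\eta(i)\ge-h\,\eta(i)^2$ and $[\eta(i)-\eta(j)]^2\ge[x_i-x_j]^2$ play the role of the paper's Case~1 (they erase the plus/minus distinction at no cost, the only nontrivial check being the $(+,-)$ bond, $4\ge0$), and the isoperimetric bound $P(R)\ge 2c+2r\ge4\sqrt{|R|}\ge 4|R|/L$ replaces the paper's Case~2 entirely, collapsing the bootstrap/growth analysis into the one-line comparison $\widetilde H(R)-\widetilde H(\Lambda)\ge m\,(\lambda+h-4J/L)\ge0$. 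What your approach buys: it is shorter and self-contained, it cleanly isolates the single place where largeness of $L$ enters, and the equality analysis gives uniqueness of the ground state for free, whereas the paper's final growth step for several interacting rectangles is only sketched ("by performing bootstrap mechanism steps\dots\ the $\puno$ configuration will eventually be constructed"). What the paper's approach buys: its flip-and-grow machinery is reused elsewhere (stability-level lemmas), so the proof doubles as a warm-up for those arguments. One remark on hypotheses: your last step needs $L>4J/(\lambda+h)$, which follows from items~1 and~2 of Condition~\ref{c:mod007} only in the regime $\lambda>h$ where item~2 is a genuine constraint; but this is exactly the same footing as the paper, whose proof invokes the hierarchy \eqref{lan002}--\eqref{lan004}, i.e.\ the identical inequality $4JL<(\lambda+h)L^2$, so no gap is introduced that is not already present in the original.
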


We say that a configuration $\eta\in\mathcal{X}$ 
is a \emph{local minimum} of the Hamiltonian if and only if 
for any $\eta'\neq\eta$ communicating with $\eta$ we have 
$H(\eta')>H(\eta)$.
Important examples of local minima, in suitable regions of the 
parameter plane $\lambda$--$h$,  are the homogeneous states. 
We make this remark rigorous in the following lemma. 

\begin{lemma}
\label{t:lan000}
Assume \eqref{mod007a} is satisfied.
For $h>\lambda$ 
the homogeneous state $\zero$ 
is a local minimum of the Hamiltonian. 
For $h<\lambda$ 
the homogeneous states $\zero$ and $\muno$ 
are local minima of the system. 
\end{lemma}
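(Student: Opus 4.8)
The plan is to verify the definition of local minimum directly: a homogeneous configuration $\eta$ is a local minimum precisely when every configuration obtained from $\eta$ by a single spin flip has strictly larger energy. Since all flips out of a homogeneous state fall into one of the neighbor classes tabulated in Table~\ref{tab:heu000}, the whole argument reduces to checking the sign of a finite list of energy costs under assumption \eqref{mod007a}. The one point that requires care is the role of the zero-boundary condition: as already noted after \eqref{mod005}, the boundary interaction term coincides with the internal interaction computed against fixed zero spins outside $\Lambda$. Consequently, for the purpose of the energy cost an edge site of $\Lambda$ behaves as an interior site with one extra zero nearest neighbor, and a corner site as an interior site with two extra zero nearest neighbors. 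This is exactly the prescription recorded in the caption of Table~\ref{tab:heu000}, so boundary and corner flips are read off from the rows with (at least) one and two zeros among the neighbors, respectively.

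For $\zero$ every site, whether interior, on an edge, or at a corner, sees four effective zero neighbors. I would therefore compute the two relevant costs directly from \eqref{mod005}: flipping a single zero to $+1$ changes the energy by $4J-\lambda-h$, and flipping it to $-1$ by $4J-\lambda+h$ (the latter following from the former via the symmetry $+1\leftrightarrow-1$, $h\to-h$, under which $H$ is invariant). Under \eqref{mod007a} both quantities are strictly positive, for all sites and in both regimes $h>\lambda$ and $h<\lambda$. Hence no single flip lowers the energy, and $\zero$ is a local minimum.

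For $\muno$ I would run through the three geometric types of sites, recalling that the only admissible moves out of the all-minus state flip a minus to a zero or to a plus. An interior minus has four minus neighbors, so by Table~\ref{tab:heu000} the two costs are $4J+\lambda-h$ and $16J-2h$; an edge minus has three minus and one effective zero neighbor, giving $2J+\lambda-h$ and $12J-2h$; a corner minus has two minus and two effective zero neighbors, giving $\lambda-h$ and $8J-2h$. All the plus-moves are positive under \eqref{mod007a}, as are the interior and edge minus-to-zero costs. The decisive quantity is the corner minus-to-zero cost $\lambda-h$, which is strictly positive exactly when $\lambda>h$. Thus in the regime $h<\lambda$ every single flip strictly increases the energy and $\muno$ is a local minimum, which is precisely the content of the statement.

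The main obstacle is not any single computation but the bookkeeping at the boundary: one must be sure that the zero-boundary interaction really reproduces the effect of fictitious zero neighbors, including the doubled contribution at the four corners where each site carries two external bonds. Once this reduction is justified, the corner minus-to-zero flip is seen to be the unique move whose sign depends on the comparison of $\lambda$ and $h$, and it is exactly this move that fails to be uphill when $h>\lambda$, explaining why $\muno$ is asserted to be a local minimum only in the region $\lambda>h$.
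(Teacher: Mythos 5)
Your proof is correct and takes essentially the same approach as the paper: both verify the local-minimum property by reading single-flip energy costs off Table~\ref{tab:heu000}, treating boundary and corner sites as interior sites with one or two fictitious zero neighbors. Your case analysis for $\muno$ is in fact slightly more complete than the paper's own proof, which cites only row 1 (interior sites), whereas the decisive check is precisely the corner flip you isolate (row 4, cost $\lambda-h$), the unique move whose sign depends on the comparison of $\lambda$ and $h$.
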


We stress that for $h>\lambda$ the state $\muno$ is not a local minimum, 
indeed, from row 1 in table~\ref{tab:heu000}, it follows that 
the four corner spins can be flipped to zero by decreasing the energy. 
Moreover, by repeating similar flips a downhill path from $\muno$ to 
$\zero$ can be constructed. 

Based on the above lemma, 
at the heuristic level, we can expect that 
the homogeneous states $\muno$ and $\zero$
are potential metastable states in the region of the parameter plane 
considered in the lemma. 

\section{Main results}
\label{section:main_results}
\par\noindent
In this section, we present the main results of the model. However, in Section \ref{s:heu} we first use some preliminary heuristic arguments  for describing the general metastable behavior in the region $0<h,\lambda\ll J$. Afterwards, we will state the actual theorem in the subregion $h>\lambda>0$.
\,\,
\newpage
\subsection{Heuristic discussion}\label{s:heu}
\par\noindent
We approach the heuristic study of the 
Blume--Capel model with zero-boundary conditions 
in the whole region $0<h,\lambda\ll J$.
We will have to distinguish several subregions where the 
metastable behavior will show peculiar features. 

This analysis is based on a very simple idea: the homogeneous states,
if local minima of the Hamiltonian, are potential metastable states 
of the system. When several possible metastable states are present, 
the true one is the one from which the system has to overcome 
the largest barrier to reach the stable state.
In order to compute such a barrier we imagine that the transition 
is realized through a sequence of local minima in which a droplet 
of stable phase grows in the sea of the metastable one. 

\subsubsection{Region $h>\lambda>0$}
\label{s:hla}
\par\noindent
In view of Lemma~\ref{t:lan000}
we are interested in the structures that give rise to local minima 
with zero background.

From rows 13--15 of table~\ref{tab:heu000} it follows that 
a configuration in which the sites with plus spin 
form a rectangle plunged in the sea of zeroes is a local minimum. 
We stress that the rectangular plus droplet can be located 
at one corner of the lattice $\Lambda$. 
We add that if the shape of the plus region is not a rectangle, 
then, since there exists at least a zero with more than two neighboring 
pluses, from rows 13--15 of tables~\ref{tab:heu000} it follows that 
the configuration is not a local minimum.

The energy of a square plus droplet of side length $\ell$ plunged 
in the sea of zeroes with respect to the energy of $\zero$ 
is $4JL-(\lambda+h)\ell^2$. 
Since its maximum is attained at $2J/(\lambda+h)$, we can infer 
that this is the critical length, in the sense 
that droplets with side length smaller than $2J/(\lambda+h)$
tend to shrink, otherwise they tend to grow. 
Moreover, we note that the difference of energy between 
the critical droplet and the configuration $\zero$ 
is $4J^2/(\lambda+h)$.

At the level of our very rough heuristic discussion, we can conclude that
the metastable state is the $\zero$ configuration, 
the transition to the stable state is performed via the 
nucleation of a square droplet of pluses 
of side length $2J/(\lambda+h)$ at any site of the lattice $\Lambda$ 
(homogeneous nucleation), 
and the exit time 
is of order $\exp\{\beta 4J^2/(\lambda+h)\}$.

\subsubsection{Region $\lambda>h>0$}
\label{s:hlo}
\par\noindent
In view of Lemma~\ref{t:lan000}
we are interested in the structures that give rise to local minima 
with zero or minus background. 

In the case of zero background, the same discussion as 
in Section~\ref{s:hla} suggests that the system can 
exit the state $\zero$ by overcoming the energy 
barrier $4J^2/(\lambda+h)$ and reaching the stable state $\puno$ 
via the formation of a critical square droplet of pluses 
with side length $2J/(\lambda+h)$. 
But also the possibility that the system abandons $\zero$ reaching 
$\muno$ must be explored: 
from rows 1, 2, and 4 of table~\ref{tab:heu000} it follows that 
a configuration in which the sites with minus spin 
form a rectangle plunged in the sea of zeroes is a local minimum. 
The energy of a square minus droplet of side length $\ell$ plunged 
in the sea of zeroes with respect to the energy of $\zero$ 
is $4JL-(\lambda-h)\ell^2$. 
The critical length is 
$2J/(\lambda-h)$ and the difference of energy between 
the critical droplet and the configuration $\zero$ 
is $4J^2/(\lambda-h)$.
Since in this parameter region $4J^2/(\lambda+h)<4J^2/(\lambda-h)$ we can 
conclude that the system, starting from $\zero$, will perform 
a direct transition to the stable state $\puno$ paying the 
energy cost $4J^2/(\lambda+h)$. 

For what concerns the minus background case, we note\footnote{We also 
note that a rectangle of zeroes in the sea of minuses 
is not a local minimum, since (see row 4 of table~\ref{tab:heu000})
the flip to minus of one zero with two zeroes and two minuses among its
neighbors (corner) decreases the energy of the configuration.
But this remark is not relevant from the metastability point of view, 
since, in view of \eqref{lan004}, the transition from $\muno$ to $\zero$ 
is of no interest in this region of the parameters.}
that a rectangle of pluses in the sea of minuses 
is not a local minimum, since (see row 6 of table~\ref{tab:heu000})
the flip to zero of one plus with two pluses and two minuses among its
neighbors (corner) decreases the energy of the configuration.

Some relevant structures that are local minima 
are reported in Figure~\ref{fig:app000}. 
To prove that the depicted structures are local minima the reader 
can use table~\ref{tab:heu000}.
The five structures in the figure will be addressed in the sequel as 
(a) \emph{frame},
(b) \emph{boundary frame},
(c) \emph{corner frame},
(d) \emph{chopped corner frame},
(e) \emph{chopped boundary frame}.

For each structure we compute its energy with respect to $\muno$ as 
a function of the side length $\ell$ of the internal plus square.
With an intuitive notation we have:
\begin{align}\label{eq:energy_frame}
\Delta_\textup{a}(\ell)
=&
-2h\ell^2
+4J\ell
+4J(\ell+2)
+4\ell(\lambda-h), \notag
\\
\Delta_\textup{b}(\ell)
=&
-2h\ell^2
+4J\ell
+2J(\ell+2)
+(4\ell+2)(\lambda-h), \notag
\\
\Delta_\textup{c}(\ell)
=&
-2h\ell^2
+4J\ell
+(4\ell+3)(\lambda-h), \notag
\\
\Delta_\textup{d}(\ell)
=&
-2h\ell^2
+2J\ell
+2J(\ell+1)-2J
+2\ell(\lambda-h), \notag
\\
\Delta_\textup{e}(\ell)
=&
-2h\ell^2
+3J\ell
+J(3\ell+2)
+3\ell(\lambda-h).
\end{align}
Now, we note that 
\begin{align}\label{eq:difference_energy_frame}
\Delta_\textup{a}-\Delta_\textup{d}
=&
4J\ell
+8J
+2\ell(\lambda-h), \notag
\\
\Delta_\textup{b}-\Delta_\textup{d}
=&
2J\ell
+4J
+(2\ell+2)(\lambda-h), \notag
\\
\Delta_\textup{c}-\Delta_\textup{d}
=&
(2\ell+3)(\lambda-h), \notag
\\
\Delta_\textup{e}-\Delta_\textup{d}
=&
2J\ell
+2J
+\ell(\lambda-h).
\end{align}
Since these differences are all positive, we can conclude that 
the mechanism providing the transition from $\muno$ to $\puno$ 
is the formation and growth of a chopped corner droplet. 

\begin{figure}[H]
\begin{center}
\begin{tikzpicture}[scale=0.7]
  \draw[fill=lightgray,lightgray] (0,0) rectangle (8,8);
  \draw[color=gray] (0,8) rectangle (8,8.2);
  \draw[gray] (-0.2,0) rectangle (0,8);
  \draw[gray] (0,-0.2) rectangle (8,0);
  \draw[gray] (8,0) rectangle (8.2,8);

  \draw[fill=black,black] (2,1) rectangle (3,2);
  \draw[fill=gray,gray] (2,2) rectangle (3,2.2);
  \draw[fill=gray,gray] (1.8,1) rectangle (2,2);
  \draw[fill=gray,gray] (2,0.8) rectangle (3,1);
  \draw[fill=gray,gray] (3,1) rectangle (3.2,2);
  \node[] at (3.7,1.5) {(a)};
 
  \draw[fill=black] (0.2,4) rectangle (1.2,5);
  \draw[fill=gray,gray] (0.2,5) rectangle (1.2,5.2);
  \draw[fill=gray,gray] (0,3.8) rectangle (0.2,5.2);
  \draw[fill=gray,gray] (0.2,3.8) rectangle (1.2,4);
  \draw[fill=gray,gray] (1.2,4) rectangle (1.4,5);
  \node[] at (1.9,4.5) {(b)};
 
  \draw[fill=black] (0.2,6.8) rectangle (1.2,7.8);
  \draw[fill=gray,gray] (0.2,7.8) rectangle (1.4,8);
  \draw[fill=gray,gray] (0,6.6) rectangle (0.2,8);
  \draw[fill=gray,gray] (0.2,6.6) rectangle (1.2,6.8);
  \draw[fill=gray,gray] (1.2,6.8) rectangle (1.4,7.8);
  \node[] at (1.9,7.3) {(c)};

  \draw[fill=black,black] (7,7) rectangle (8,8);
  \draw[fill=gray,gray] (6.8,7) rectangle (7,8);
  \draw[fill=gray,gray] (7,6.8) rectangle (8,7);
  \node[] at (6.3,7.5) {(d)};

  \draw[fill=black,black] (7,3) rectangle (8,4);
  \draw[fill=gray,gray] (7,4) rectangle (8,4.2);
  \draw[fill=gray,gray] (6.8,3) rectangle (7,4);
  \draw[fill=gray,gray] (7,2.8) rectangle (8,3);
  \node[] at (6.3,3.5) {(e)};

  \draw[fill=lightgray,lightgray] (0,0) rectangle (8,8);
  \draw[color=gray] (0,8) rectangle (8,8.2);
  \draw[gray] (-0.2,0) rectangle (0,8);
  \draw[gray] (0,-0.2) rectangle (8,0);
  \draw[gray] (8,0) rectangle (8.2,8);

  \draw[fill=darkgray,darkgray] (2,1) rectangle (3,2);
  \draw[fill=white,white] (2,2) rectangle (3,2.2);
  \draw[fill=white,white] (1.8,1) rectangle (2,2);
  \draw[fill=white,white] (2,0.8) rectangle (3,1);
  \draw[fill=white,white] (3,1) rectangle (3.2,2);
  \node[] at (3.7,1.5) {(a)};
 
  \draw[fill=darkgray] (0.2,4) rectangle (1.2,5);
  \draw[fill=white,white] (0.2,5) rectangle (1.2,5.2);
  \draw[fill=white,white] (0,3.8) rectangle (0.2,5.2);
  \draw[fill=white,white] (0.2,3.8) rectangle (1.2,4);
  \draw[fill=white,white] (1.2,4) rectangle (1.4,5);
  \node[] at (1.9,4.5) {(b)};
 
  \draw[fill=darkgray] (0.2,6.8) rectangle (1.2,7.8);
  \draw[fill=white,white] (0.2,7.8) rectangle (1.4,8);
  \draw[fill=white,white] (0,6.6) rectangle (0.2,8);
  \draw[fill=white,white] (0.2,6.6) rectangle (1.2,6.8);
  \draw[fill=white,white] (1.2,6.8) rectangle (1.4,7.8);
  \node[] at (1.9,7.3) {(c)};

  \draw[fill=darkgray,darkgray] (7,7) rectangle (8,8);
  \draw[fill=white,white] (6.8,7) rectangle (7,8);
  \draw[fill=white,white] (7,6.8) rectangle (8,7);
  \node[] at (6.3,7.5) {(d)};

  \draw[fill=darkgray,darkgray] (7,3) rectangle (8,4);
  \draw[fill=white,white] (7,4) rectangle (8,4.2);
  \draw[fill=white,white] (6.8,3) rectangle (7,4);
  \draw[fill=white,white] (7,2.8) rectangle (8,3);
  \node[] at (6.3,3.5) {(e)};
 
\end{tikzpicture}
\end{center}
\caption{Representation of local minima in the sea of minuses. Light gray 
for minuses, dark gray for pluses, and white for zeros.}
\label{fig:app000}
\end{figure}
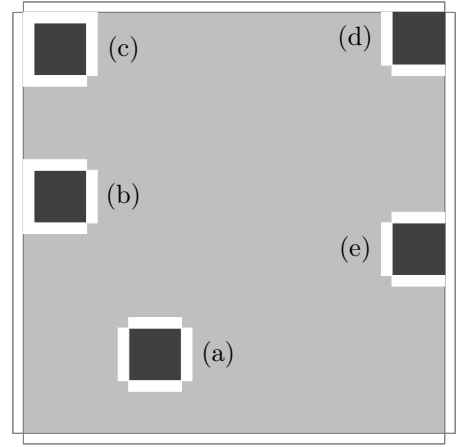

The length $\ell$ maximizing the energy (critical length) of such droplet 
is 
$[2J+(\lambda-h)]/(2h)$
and 
the energy of the critical droplet, with respect to $\muno$, 
in the limit $0<h<\lambda\sim0$ is
$2J^2/h$.

At the level of this heuristic analysis, 
it seems that 
the mechanism of the chopped corner frame is the best 
to perform the transition from the homogeneous $\muno$ state to the 
stable state $\puno$. 
This
transition is performed via the 
nucleation of a chopped corner frame 
of internal side length $[2J+(\lambda-h)]/(2h)$ 
(not homogeneous nucleation) 
and the exit time 
is of order $\exp\{\beta 2J^2/h\}$.
To establish which, between $\muno$ and $\zero$,
is the metastable state in the region $0<h<\lambda$
we note that in this region of the parameter plane 
$4J^2/(\lambda+h)<2J^2/h$ and, so,
the metastable state is $\muno$.
Moreover, we remark some relevant facts: 
the transition from the metastable to the stable state 
is direct, the nucleation is not homogeneous, 
and the exit time does not depend on $\lambda$.

\subsection{Main results for the region $\lambda>h>0$}\label{s:res}
\par\noindent
In the rest of the paper, we present the main results for the model in the region $0<h<\lambda$.
. 
The first theorem states that every configuration of 
$\mathcal{X}$ different from $\{\muno,\puno\}$ 
has a stability level strictly lower than $\Gamma$, 
where $\Gamma$ is the energy barrier to reach 
$\puno$ starting from $\muno$, i.e. $\Gamma=H(\sigma_s)-H(\muno)$
where $\sigma_s$ is the critical configuration represented 
in Figure \ref{fig:critical_configuration_saddle}. In particular,
\begin{equation}\label{eq:energy_barrier}
\Gamma=4Jl_c +2\lambda l_c-2h2l_c^2-2h
\end{equation}
where 
\begin{equation}\label{eq:critical_length}
    l_c=\lfloor\frac{2J+\lambda-h}{2h}\rfloor+1.
\end{equation}
\begin{figure}[H]
\begin{center}
    \includegraphics[scale=0.4]{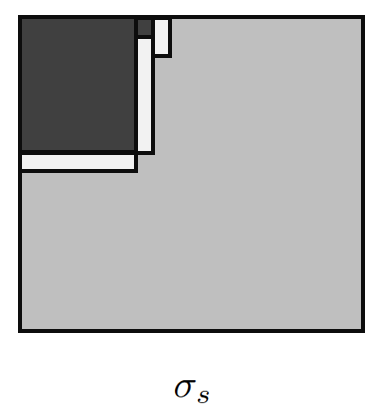}
    \caption{The critical configuration $\sigma_s$ which contains a critical chopped corner frame with side length $l_c$.}\label{fig:critical_configuration_saddle}
\end{center}
    \end{figure}
\begin{proposition}\label{thm:out_gammaBCG}
Let $\eta \in \mathcal{X}$ be a configuration 
such that $\eta \not \in \{\muno,\puno\}$, 
then $V_{\eta} <\Gamma$.
\end{proposition}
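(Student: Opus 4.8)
The plan is to reduce the estimate to the local minima of the Hamiltonian and then, for each of them, to exhibit an explicit escape path toward a configuration of strictly lower energy whose height stays strictly below $H(\muno)+\Gamma=H(\sigma_s)$. I would first dispose of the configurations that are not local minima: if $\eta\notin\{\muno,\puno\}$ is not a local minimum, then there is $\eta'\sim\eta$ with $H(\eta')<H(\eta)$, the two--step path $(\eta,\eta')$ has height $H(\eta)$, so $\eta'\in\mathcal{I}_\eta$ and $\Phi(\eta,\mathcal{I}_\eta)\le H(\eta)$, giving $V_\eta=0<\Gamma$. Hence it suffices to bound $V_\eta$ for the local minima different from $\muno$ and $\puno$.

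Next I would classify these local minima by means of the energy costs of Table~\ref{tab:heu000} and the bootstrap construction of Section~\ref{s:lat}. Reading the table one sees that in a zero background the only single--structure local minima are rectangular plus or minus droplets (rows 13--15, resp.\ rows 1, 2, 4), while in a minus background the bare plus or zero rectangles are \emph{not} local minima (a corner spin can always be lowered, see Section~\ref{s:hlo} and rows 4 and 6 of Table~\ref{tab:heu000}) and the local minima are exactly the frame configurations of Figure~\ref{fig:app000}; in a plus background a minority droplet can always be eroded toward $\puno$. For a configuration carrying several non--interacting such structures it is enough to act on a single one of them, since $V_\eta$ is the barrier to reach \emph{some} lower--energy configuration; thus the estimate reduces to the single--structure local minima.

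The core of the argument is the comparison of each escape barrier with $\Gamma$. Any configuration sitting in a sea of zeroes (in particular $\zero$ itself and any plus or minus droplet plunged in zeroes) reaches $\puno$ by nucleating and growing a plus droplet in the zero background; the height along this path exceeds $H(\eta)$ by at most the critical plus--droplet energy $4J^2/(\lambda+h)$ of Section~\ref{s:hla}, the engulfment of a distant minus droplet being downhill (row $0\,0\,4$ of Table~\ref{tab:heu000}). Since $\lambda>h$ gives $4J^2/(\lambda+h)<2J^2/h$, while by \eqref{eq:energy_barrier}--\eqref{eq:critical_length} the barrier $\Gamma$ equals $2J^2/h$ to leading order in $J$, item~\ref{mod007a} of Condition~\ref{c:mod007} yields $V_\eta\le 4J^2/(\lambda+h)<\Gamma$ for all such configurations. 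For the frame configurations I would use that $\sigma_s$ is the unique maximum, attained at internal side $l_c$, of the profile $\Delta_\textup{d}(\ell)$ along the optimal $\muno\to\puno$ tube of chopped corner frames. A subcritical chopped corner frame ($\ell<l_c$) reaches $\muno$ by shrinking along this tube; its energy satisfies $H(\eta)>H(\muno)$ and the continuation stays below the top $H(\sigma_s)$, whence $V_\eta<H(\sigma_s)-H(\eta)<H(\sigma_s)-H(\muno)=\Gamma$. A supercritical chopped corner frame ($\ell>l_c$) reaches $\puno$ by adjoining one further plus layer, so the only barrier is the protuberance cost of the first added spin, of order $J$, and thus $V_\eta<\Gamma$ by item~\ref{mod007a}. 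By \eqref{eq:difference_energy_frame} the frame, boundary--frame, corner--frame and chopped--boundary--frame variants lie strictly above the chopped corner frame at the same $\ell$, hence their wells are shallower and the same bounds hold a fortiori.

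The step I expect to be the main obstacle is the reduction underlying the second and third paragraphs for \emph{arbitrary} configurations, namely a recurrence estimate showing that every configuration can be driven, at cost strictly less than $\Gamma$, onto one of the regular single--structure shapes or onto $\{\muno,\zero,\puno\}$, and that \emph{all} the intermediate local saddles (the layer--addition saddles of supercritical frames, the layer--removal saddles of subcritical ones, and the saddles generated by the bootstrap erosion of superfluous droplets) stay strictly below the global saddle height $H(\sigma_s)$. This requires the isoperimetric bookkeeping of perimeters and interface energies along the reference paths, and the strictness of the separation between the relevant energy levels is precisely what the non--integrality requirements in item~\ref{mod007c} of Condition~\ref{c:mod007} are designed to guarantee. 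Once this recurrence is in place, the case analysis above closes the proof.
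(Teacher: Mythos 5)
Your reduction to local minima is sound as far as it goes (in this parameter regime no two communicating configurations have equal energy, so a non-minimum has $V_\eta=0$), and your treatment of the ``nice'' structures --- growth/shrinkage of chopped corner frames with saddles controlled by $H(\sigma_s)$, nucleation in a zero sea at cost $4J^2/(\lambda+h)<2J^2/h$ --- reproduces correctly what the paper does in Lemma~\ref{lemma:path_0_+} and in the upper-bound part of Theorem~\ref{thm:Identification}. But the step you yourself flag as ``the main obstacle'' is not a technical loose end: it is the entire content of the proposition, and your proposal leaves it unproved. A complete classification of the local minima of this model (arbitrary collections of plus rectangles, minus rectangles, frames at corners and boundaries, nested and mixed backgrounds) is never carried out in the paper, and for good reason: the paper avoids it altogether. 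Its proof partitions \emph{all} configurations --- not just local minima --- by elementary cluster geometry: presence of a $(+,-)$ bond (Lemma~\ref{lemma:bond_minus_plus}), non-rectangular plus clusters (Lemma~\ref{lemma:cluster_pluses}, $V_\eta<2(\lambda-h)$), subcritical convex sides (Lemma~\ref{lemma:subcritical_cluster}, $V_\eta<2J$), supercritical sides far from opposite spins (Lemma~\ref{lemma:supercritical_cluster}, $V_\eta<2J$), very long sides (Lemma~\ref{lemma:supercritical_cluster_no_space}, $V_\eta<5J$), and the intermediate regime (Lemma~\ref{lemma:middle_cluster}, $V_\eta<\Gamma^*=2J^2/h$), closing with $\zero$ (Lemma~\ref{lemma:path_0_+}). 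Each class carries an explicit quantitative bound strictly below $\Gamma$, and the classes cover $\mathcal{X}\setminus\{\muno,\puno\}$; this case analysis, together with the restriction to $\lambda/2<h<\lambda$, is exactly the ``recurrence estimate'' you postpone.

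A concrete instance of why the deferred step cannot be waved through: your claim that, for a configuration in a sea of zeroes, ``the engulfment of a distant minus droplet is downhill'' is false as stated. When the growing plus front reaches distance two from a minus cluster, the next zero to be flipped has a minus among its neighbours, and by Table~\ref{tab:heu000} (row with one minus, two zeroes, one plus) that flip costs $4J-\lambda-h>0$; the interaction between plus and minus clusters is precisely what forces the paper to assume ``distance strictly greater than two'' in Lemma~\ref{lemma:supercritical_cluster} and to devote the separate, delicate Lemmas~\ref{lemma:supercritical_cluster_no_space} and~\ref{lemma:middle_cluster} to clusters that do have nearby minuses. Until you supply arguments of this type for arbitrary mixed configurations, the proposal proves the proposition only on a proper subset of $\mathcal{X}$.
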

This result suggests that the only configurations with a stability level 
greater than or equal to $\Gamma$ could be
$\muno$, $\puno$. This is confirmed by Theorem \ref{thm:Identification}, 
where we identify the unique metastable state $\muno$ 
and the stable state $\puno$ in the region 
$\lambda>h>0$.

In the following theorem, we state
the recurrence of the system to the set $\{\muno, \puno\}$. In particular, Equation \eqref{eq:recurrence_BCG} implies that the system reaches with high probability either the state $\muno$ (which is a local minimizer
of the Hamiltonian) or the  ground state in a time shorter than 
$e^{\beta (\Gamma+\epsilon)}$, uniformly in the starting configuration 
$\eta$ for any $\epsilon >0$. In other words we can say that the dynamics speeded up by a 
time factor of order $e^{\beta \Gamma}$ reaches with high probability 
$\{ \muno, \puno \}$. 
\begin{theorem}[Recurrence property]\label{thm:recurrence_property_BCG} 
For any $\epsilon>0$ and sufficiently large $\beta$, the function
\begin{align}\label{eq:recurrence_BCG}
    \beta \to \sup_{\eta \in \mathcal{X}} \mathbb{P}_{\eta}(\tau_{\{\muno, \puno\}}> e^{\beta(\Gamma+\epsilon)})
    \end{align}
    is SES\footnote{We say that a function $\beta \mapsto f(\beta)$ is super exponentially small (SES) if $\lim_{\beta\to\infty}\frac{\log{f(\beta)}}{\beta}=-\infty.$}.
\end{theorem}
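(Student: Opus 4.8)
The plan is to obtain the statement as a direct corollary of the model--independent recurrence estimate underlying the pathwise approach, feeding in Proposition~\ref{thm:out_gammaBCG} to control the stability levels. Recall the general recurrence result (see \cite{manzo2004essential,olivieri2005large,cirillo2015metastability}): for any target set $\mathcal{A}\subseteq\mathcal{X}$ and any $\epsilon>0$, if one sets $V^*_{\mathcal{A}}:=\max_{\eta\in\mathcal{X}\setminus\mathcal{A}}V_{\eta}$, then the function $\beta\mapsto\sup_{\eta\in\mathcal{X}}\mathbb{P}_{\eta}(\tau_{\mathcal{A}}>e^{\beta(V^*_{\mathcal{A}}+\epsilon)})$ is SES. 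This estimate rests only on reversibility (Lemma~\ref{t:mod000}), on the Metropolis form of the rates \eqref{mod015}, and on the finiteness of $\mathcal{X}$; it is non--trivial precisely when $V^*_{\mathcal{A}}<\infty$, i.e.\ when $\mathcal{A}$ contains every global minimiser of $H$.

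First I would choose $\mathcal{A}=\{\muno,\puno\}$. Since $\puno$ is the unique ground state by Lemma~\ref{t:lan00-5}, the only configuration $\eta$ with $\mathcal{I}_{\eta}=\emptyset$, hence with $V_{\eta}=\infty$, is $\puno$ itself; as $\puno\in\mathcal{A}$, this guarantees $V^*_{\mathcal{A}}<\infty$ and that the descent iteration underlying the general estimate terminates inside $\mathcal{A}$.

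Next I would bound $V^*_{\mathcal{A}}$. By Proposition~\ref{thm:out_gammaBCG} every configuration $\eta\notin\{\muno,\puno\}$ satisfies $V_{\eta}<\Gamma$. Because $\mathcal{X}$ is finite, the maximum defining $V^*_{\mathcal{A}}$ is attained, so in fact $V^*_{\mathcal{A}}<\Gamma$ with a strict inequality. Consequently $e^{\beta(\Gamma+\epsilon)}\ge e^{\beta(V^*_{\mathcal{A}}+\epsilon)}$ for all $\beta>0$, whence the event inclusion $\{\tau_{\mathcal{A}}>e^{\beta(\Gamma+\epsilon)}\}\subseteq\{\tau_{\mathcal{A}}>e^{\beta(V^*_{\mathcal{A}}+\epsilon)}\}$ yields, uniformly in $\eta$, the pointwise bound $\mathbb{P}_{\eta}(\tau_{\mathcal{A}}>e^{\beta(\Gamma+\epsilon)})\le\mathbb{P}_{\eta}(\tau_{\mathcal{A}}>e^{\beta(V^*_{\mathcal{A}}+\epsilon)})$. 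Taking the supremum over $\eta$ and invoking the general recurrence estimate with exponent $V^*_{\mathcal{A}}$, the right--hand side is SES, and therefore so is the function in \eqref{eq:recurrence_BCG}.

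In this reduction the genuine difficulty is entirely absorbed into Proposition~\ref{thm:out_gammaBCG}: the theorem itself is a clean corollary, and the only points requiring care are the two bookkeeping facts above, namely that $\puno\in\mathcal{A}$ secures $V^*_{\mathcal{A}}<\infty$, and that the finiteness of $\mathcal{X}$ upgrades the configuration--wise bound $V_{\eta}<\Gamma$ to the uniform strict bound $V^*_{\mathcal{A}}<\Gamma$. The substantive work --- the full energy--landscape analysis showing that no configuration outside $\{\muno,\puno\}$ can trap the dynamics above height $\Gamma$ --- is precisely the content of the assumed Proposition, and that is where the real obstacle lies rather than in the present statement.
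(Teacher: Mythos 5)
Your proposal is correct and follows essentially the same route as the paper: the paper's proof is a one--line application of \cite[Theorem 3.1]{manzo2004essential} with $V^*=\Gamma$, using Proposition~\ref{thm:out_gammaBCG} as the sole model--dependent input, exactly as you do. Your additional bookkeeping (that $\puno\in\mathcal{A}$ ensures $V^*_{\mathcal{A}}<\infty$, and that finiteness of $\mathcal{X}$ upgrades $V_{\eta}<\Gamma$ to a uniform strict bound) simply makes explicit the details the paper leaves implicit.
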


In the next theorem we identify the metastable state and we compute the maximal stability level. Recalling the
$\Gamma$ in \eqref{eq:energy_barrier}, we have

\begin{theorem}\label{thm:Identification} (Identification of metastable state)
In the region $\lambda>h>0$, the unique metastable state is $\muno$ and 
$\Gamma_m=\Gamma$.
\end{theorem}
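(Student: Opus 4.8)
The plan is to reduce the whole statement to the single identity $V_{\muno}=\Gamma$ and then to read off the conclusion from Proposition~\ref{thm:out_gammaBCG}. First I would fix the stable set: by Lemma~\ref{t:lan00-5} together with Condition~\ref{c:mod007}, the configuration $\puno$ is the unique absolute minimum of $H$, so $\mathcal{X}^{s}=\{\puno\}$; moreover $\mathcal{I}_{\puno}=\emptyset$ gives $V_{\puno}=\infty$, consistently with $\puno$ being stable. Hence in \eqref{Gamma} the maximum is taken over $\mathcal{X}\setminus\{\puno\}$, so that $\Gamma_m=\max\{V_{\muno},\,\sup_{\eta\notin\{\muno,\puno\}}V_\eta\}$. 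By Proposition~\ref{thm:out_gammaBCG} the second term is strictly less than $\Gamma$. Therefore, once I prove $V_{\muno}=\Gamma$, it follows immediately that $\Gamma_m=\Gamma$ and that $\muno$ is the \emph{only} non-stable configuration attaining the maximal stability level, i.e.\ the unique metastable state.

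It thus remains to show $V_{\muno}=\Gamma$, that is $\Phi(\muno,\mathcal{I}_{\muno})=H(\muno)+\Gamma$ with $\mathcal{I}_{\muno}=\{\zeta\in\mathcal{X}:H(\zeta)<H(\muno)\}$. The upper bound $V_{\muno}\le\Gamma$ is the easy half. Following the heuristics of Section~\ref{s:hlo}, I would build an explicit reference path from $\muno$ to $\puno$: nucleate a chopped corner frame and grow its internal plus square one side length at a time up to the critical value $l_c$ of \eqref{eq:critical_length}, controlling each step by the energy bookkeeping \eqref{eq:energy_frame}--\eqref{eq:difference_energy_frame}, and then let the supercritical plus phase invade the entire lattice downhill. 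The maximal height along this path equals $H(\sigma_s)=H(\muno)+\Gamma$, with $\Gamma$ given by \eqref{eq:energy_barrier}. Since $H(\puno)<H(\muno)$ by \eqref{lan004}, we have $\puno\in\mathcal{I}_{\muno}$, whence $\Phi(\muno,\mathcal{I}_{\muno})\le\Phi(\muno,\puno)\le H(\sigma_s)$ and $V_{\muno}\le\Gamma$.

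The matching lower bound $V_{\muno}\ge\Gamma$ is the main obstacle: I must show that \emph{every} path leaving $\muno$ and reaching a configuration of energy strictly below $H(\muno)$ attains height at least $H(\sigma_s)$. The strategy is geometric. Any configuration with $H<H(\muno)$ must contain a supercritical plus region, because adding plus or zero structure to the minus sea strictly increases the energy until a critical droplet has formed; indeed the energy of a frame-type local minimum grows with the internal side length up to $l_c$. Reducing an arbitrary growing cluster to its rectangular envelope via the bootstrap construction of Section~\ref{s:lat}, and using that among the frame-type structures the chopped corner frame is energetically the cheapest (the positivity of the differences in \eqref{eq:difference_energy_frame}), one shows that any such path must cross a configuration of energy at least $H(\sigma_s)$. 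I expect the delicate points to be controlling \emph{all} descent trajectories at once — rather than a single reference path — and ruling out cheaper escapes, either through mixed zero/plus configurations or through droplets nucleated away from a corner; this rests on the detailed energy estimates carried out in Section~\ref{s:pro-th}.
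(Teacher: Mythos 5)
Your reduction of the theorem to the single identity $V_{\muno}=\Gamma$, and your upper bound via the reference path growing a chopped corner frame up to the critical length $l_c$, match the paper's proof. The genuine gap is in the lower bound $V_{\muno}\ge\Gamma$, which is the heart of the theorem and which you only sketch. The route you propose --- take an arbitrary path, replace growing clusters by their rectangular envelopes via the bootstrap construction, and invoke the positivity of the differences in \eqref{eq:difference_energy_frame} --- does not yield a proof. The comparisons in \eqref{eq:difference_energy_frame} are between five specific \emph{local-minimum} structures (frame, boundary frame, corner frame, chopped corner frame, chopped boundary frame) at equal internal side length; they control nothing about the disordered, non-droplet configurations an arbitrary escape path may pass through. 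Likewise, the bootstrap envelope is a purely geometric operation with no energy monotonicity in the direction you need, and your opening claim that ``adding plus or zero structure to the minus sea strictly increases the energy until a critical droplet has formed'' is exactly the statement that requires proof, not an available tool. You acknowledge the difficulty (``controlling all descent trajectories at once'') but then defer it to ``the detailed energy estimates carried out in Section~\ref{s:pro-th}'', which is circular: those estimates \emph{are} the missing proof.

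What the paper actually does, and what your proposal lacks, is a foliation argument: configuration space is sliced into manifolds $\mathcal{M}_{n^+}$ of fixed number of pluses, with critical value $n^+_c=l_c(l_c-1)$. First, a chain of lemmas (Corollary~\ref{lem:sequence_of_minuses} through Lemma~\ref{lem:min_crit_quad_no_boundary}, resting on isoperimetric-type inequalities for clusters) proves that $\sigma_c=\sigma^F_{l_c-1,l_c}$ is the energy minimizer on $\mathcal{M}_{n^+_c}$ (Lemma~\ref{lem:sigma_c_minimum}); this step rules out precisely the ``cheaper escapes'' you worry about (droplets away from corners, mixed zero/plus configurations, columns of zeros). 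Second, a sequence of min--max lemmas (Lemmas~\ref{lem:path_sigma_c}--\ref{lem:step_2_minmax}) shows that \emph{any} path from $\muno$ that reaches $\mathcal{M}_{n^+_c+2}$ must attain height at least $H(\sigma_s)$, by forcing passage through the intermediate configurations $\tilde\sigma_c$ and $\sigma_s$ with energies \eqref{eq:energy_sigma_protuberance}--\eqref{eq:energy_saddle}, and by a separate analysis (Lemmas~\ref{lem:equivalence_tilde_sigma_c}--\ref{lem:conf_S_different_tilde_sigmac}) of the set $\mathscr{S}$ of configurations degenerate in energy with $\tilde\sigma_c$. Since every configuration in $\mathcal{I}_{\muno}$ lies beyond $\mathcal{M}_{n^+_c+2}$, this yields $V_{\muno}\ge H(\sigma_s)-H(\muno)=\Gamma$. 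Without this (or an equivalent) mechanism for bounding all paths simultaneously, your argument establishes only the upper bound, and the theorem does not follow.
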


Last goals is finding the asymptotic behavior as $\beta \to \infty$ 
of the transition time for the system started at the metastable state $\muno$. 
\begin{theorem}[Asymptotic behavior of $\tau_{\puno}$ in probability]\label{thm:transition_time_BCG}
For any $\epsilon>0$, we have
\begin{equation}
    \lim_{\beta \to \infty} \mathbb{P}_{\muno}(e^{\beta(\Gamma-\epsilon)}< \tau_{\puno}<e^{\beta(\Gamma+\epsilon)}) =1.
\end{equation}
\end{theorem}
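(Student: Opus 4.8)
The plan is to derive the statement from the two standard matching bounds of the pathwise approach, exploiting the results already established. The inputs I would take for granted are the recurrence property (Theorem~\ref{thm:recurrence_property_BCG}) and the identification of $\muno$ as the unique metastable state with $V_{\muno}=\Gamma_m=\Gamma$ (Theorem~\ref{thm:Identification}), together with the fact, read off from the reference path through the critical chopped corner frame $\sigma_s$, that $\Phi(\muno,\puno)=H(\muno)+\Gamma$ and that $\sigma_s$ is a gate beyond which a (two-steps) downhill path descends to $\puno$. I would split the proof into a lower bound, $\mathbb{P}_{\muno}(\tau_{\puno}<e^{\beta(\Gamma-\epsilon)})\to0$, and an upper bound, $\mathbb{P}_{\muno}(\tau_{\puno}>e^{\beta(\Gamma+\epsilon)})\to0$.

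For the lower bound I would observe that, since $H(\puno)<H(\muno)$, the stable state lies in the set $\mathcal{I}_{\muno}$ of configurations with energy strictly below $H(\muno)$, so that $\tau_{\puno}\ge\tau_{\mathcal{I}_{\muno}}$. It therefore suffices to bound from below the escape time from the valley of $\muno$. Because $\muno$ is a local minimum with stability level $V_{\muno}=\Gamma$, the standard reversibility estimate on the probability of an up-crossing of the energy level $H(\muno)+\Gamma$ within a time $e^{\beta(\Gamma-\epsilon)}$ yields $\mathbb{P}_{\muno}(\tau_{\mathcal{I}_{\muno}}<e^{\beta(\Gamma-\epsilon)})\to0$, hence the claim. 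This part is essentially soft and relies only on Theorem~\ref{thm:Identification}.

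For the upper bound I would use a renewal argument based on excursions from $\muno$. Writing $p_\beta=\mathbb{P}_{\muno}(\tau_{\puno}<\tau^+_{\muno})$ for the probability that a single excursion reaches $\puno$ before returning to $\muno$, the reference path through $\sigma_s$ followed by the descending path to $\puno$ gives the lower bound $p_\beta\ge e^{-\beta(\Gamma+\epsilon/2)}$ for $\beta$ large. The number of excursions needed to hit $\puno$ is then dominated by a geometric variable of mean $p_\beta^{-1}\le e^{\beta(\Gamma+\epsilon/2)}$, while by the recurrence property (Theorem~\ref{thm:recurrence_property_BCG}) each excursion returns to $\{\muno,\puno\}$ within a time $e^{\beta\epsilon/4}$ up to SES corrections. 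A Markov bound on the accumulated time then gives $\mathbb{P}_{\muno}(\tau_{\puno}>e^{\beta(\Gamma+\epsilon)})\to0$.

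I expect the genuine obstacle to be the escape-probability estimate $p_\beta\ge e^{-\beta(\Gamma+\epsilon/2)}$ in the upper bound, which is the only place where the fine structure of the landscape enters. One must verify that $\sigma_s$ is indeed a gate: that every path from $\muno$ leaving the valley reaches a configuration of energy at least $H(\muno)+\Gamma$ of the chopped-corner-frame type, and that from $\sigma_s$ a \emph{two-steps downhill} path (in the sense defined above, since a genuinely downhill path need not exist for this three-state dynamics, the supercritical frame growing by alternating zero-to-plus and minus-to-plus flips) leads to $\puno$ rather than back to $\muno$. Granting the landscape description underlying Proposition~\ref{thm:out_gammaBCG} and the growth mechanism of the chopped corner frame, this reduces to a controlled one-excursion lower bound, after which the two bounds combine to give the stated convergence in probability.
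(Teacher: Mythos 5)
The paper's own proof is a one--line citation: it applies the general pathwise--approach result \cite[Theorem 4.1]{manzo2004essential}, whose hypotheses are precisely what Proposition~\ref{thm:out_gammaBCG} and Theorem~\ref{thm:Identification} supply ($V_\eta<\Gamma$ for all $\eta\notin\{\muno,\puno\}$, and $V_{\muno}=\Gamma_m=\Gamma$). You instead attempt to re--derive that black box. That is legitimate in principle, and your lower bound is sound: since $\puno\in\mathcal{I}_{\muno}$ and $\Phi(\muno,\mathcal{I}_{\muno})-H(\muno)=V_{\muno}=\Gamma$, every path to $\mathcal{I}_{\muno}$ must reach the level set $S=\{\eta:H(\eta)\ge H(\muno)+\Gamma\}$, and the reversibility estimate $\mathbb{P}_{\muno}(\tau_S\le T)\le T\,|\mathcal{X}|\,e^{-\beta\Gamma}$ disposes of times $T=e^{\beta(\Gamma-\epsilon)}$. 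The upper bound, however, has two genuine gaps.

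First, the excursion accounting fails. Theorem~\ref{thm:recurrence_property_BCG} guarantees return to $\{\muno,\puno\}$ within $e^{\beta(\Gamma+\epsilon)}$, \emph{not} within $e^{\beta\epsilon/4}$: a configuration holding an intermediate subcritical droplet relaxes back to $\muno$ only on the scale $e^{\beta V_\eta}$, and such stability levels, while strictly below $\Gamma$, are of order $\Gamma^*=2J^2/h$ (cf.\ Lemma~\ref{lemma:middle_cluster}), so no recurrence on a scale $e^{\beta\epsilon/4}$ holds. With the bound the paper actually provides, your computation (number of excursions $\times$ worst--case excursion length) gives $e^{\beta(\Gamma+\epsilon/2)}\cdot e^{\beta(\Gamma+\epsilon')}=e^{\beta(2\Gamma+\dots)}$, far beyond the target $e^{\beta(\Gamma+\epsilon)}$. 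What rescues a renewal argument is that the \emph{typical} (mean) excursion from $\muno$ is short --- polynomial in $|\Lambda|$ --- which must be proved separately (e.g.\ by a Kac--formula/reversibility argument for the chain restricted to the valley of $\muno$), or one must replace excursion counting by a block argument in which each block of length $e^{\beta(\Gamma+\delta)}$ accommodates many short excursions with SES control on the long ones. Second, the estimate $p_\beta=\mathbb{P}_{\muno}(\tau_{\puno}<\tau^+_{\muno})\ge e^{-\beta(\Gamma+\epsilon/2)}$ does not follow from ``following the reference path'': that path is only two--steps downhill, i.e.\ it oscillates, so the probability of tracking it is $(2|\Lambda|)^{-n}e^{-\beta\sum_i[\Delta_i]_+}$, and the sum of uphill increments already between $\muno$ and $\sigma_s$ exceeds $\Gamma$ by roughly $l_c^2(\lambda+h)$ (the regained dips), itself of order $\Gamma$, and grows to order $L^2(\lambda-h)$ if the path is followed to $\puno$; naive path--following therefore yields a uselessly small exponent. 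The bound is true, but its proof requires reversibility in the form of the series--resistance (capacity) lower bound, in which only the maximal edge height $\Phi(\underline{\omega})=H(\muno)+\Gamma$ enters rather than the sum of uphill increments, or alternatively a cycle--decomposition induction. Note also that for this purpose a \emph{single} optimal path suffices, so the gate property of $\sigma_s$ that you single out as the key obstacle is not what is needed here: gates matter for statements about critical configurations and tubes of typical trajectories, not for the hitting--time asymptotics claimed in this theorem.
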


\section{Proof of main results}\label{s:pro-th}
\par\noindent
In this section we collect the proofs of all the lemmas 
stated in Section~\ref{s:mrs} and of all theorems stated in Section \ref{section:main_results}.

\subsection{Proof of Lemma~\ref{t:mod000}}
\par\noindent
The statement is trivial
in the cases $\eta$ and $\eta'$ not communicating
and $\eta'=\eta$. Thus, suppose $\eta\neq\eta'$ are 
communicating: if $H(\eta)=H(\eta')$ then 
\eqref{mod050} is immediate, on the other hand 
if $H(\eta')>H(\eta)$ (the opposite case can be treated analogously) 
the statements follows from the definition of the Gibbs measure
\eqref{mod010} and the fact that 
\begin{displaymath}
p_\beta(\eta,\eta')
=
\frac{1}{2|\Lambda|}e^{-\beta[H(\eta')-H(\eta)]}
\textup{ and }
p_\beta(\eta',\eta)
=
\frac{1}{2|\Lambda|}
.
\end{displaymath}
\qed

\subsection{Proof of Lemma~\ref{t:lan00-5}}
\par\noindent
Recall we assumed that Condition~\ref{c:mod007} is in force. 

Case 1: pick a configuration $\eta\neq\puno$, 
such that there is at least 
a minus spin. 
Consider the configuration 
$\eta'$ obtained by flipping in $\eta$ all
the minuses to plus. 
We $H(\eta')<H(\eta)$, indeed, 
i) the internal interaction term 
at the right--hand side of \eqref{mod005}
is smaller for $\eta'$ since 
nothing changes for the bonds between minus spins of $\eta$ and 
for the bonds in which, in $\eta$,
one site has spin minus and the other has spin zero, 
on the other hand the interaction decreases if, in $\eta$, 
one of the sites of the bond has spin minus and the other has spin plus;
ii) the boundary interaction term is the same in $\eta$ and $\eta'$;
iii) the chemical potential term in $\eta'$ is the same as the one in $\eta$;
iv) the magnetic field in $\eta'$ is smaller than that in $\eta$ 
by the amount $2h$ for each flipped spin. 
If $\eta'=\puno$ the proof is over, otherwise there exists in $\eta'$ 
at least a zero spin and the proof is completed in the following case.

Case 2:
consider a configuration $\eta'\neq\puno$, 
such that there is no minus spin.
Consider 
the configuration $\eta''$ obtained by flipping to plus all 
the zero spins in $\eta'$ associated with the sites belonging to one of the 
not interacting rectangles obtained by applying the bootstrap 
construction (see Section~\ref{s:lat}) 
to the set of sites where $\eta'$ is plus one. 
If $\eta''\neq\eta'$ then $H(\eta'')<H(\eta')$ because it is possible 
to construct a downhill path from $\eta'$ to $\eta''$ such that 
at each step a zero spin with at least two neighboring plus sites
and no neighboring minus is flipped to plus decreasing 
the energy of the configurations 
(see rows 13--15 in table~\ref{tab:heu000}).
If $\eta''=\puno$ the proof is over.
In case $\eta''\neq\puno$, let $\ell$ be the largest side length 
of the rectangles in which $\eta''$ is plus one: 

Case 2.1: suppose $\ell<2J/(h+\lambda)$. 
Consider the configuration $\eta'''$ obtained by flipping to zero 
all the pluses in one of the sides of length $\ell$. 
From \eqref{mod005} we get
$H(\eta''')-H(\eta'')=-2J+(\lambda+h)\ell$, which implies 
$H(\eta''')<H(\eta'')$.
By removing one side after the other we prove $H(\zero)<H(\eta')$ and, 
from \eqref{lan002}, which is valid under the hypotheses of 
this lemma, we get 
$H(\puno)<H(\eta')$.

Case 2.2: suppose $\ell>2J/(h+\lambda)$. Now, consider one of the rectangles
on which $\eta''$ is plus one with maximal side length  
equal to $\ell$. Consider the configuration $\eta''''$ obtained 
by flipping to plus all the zeros associated with sites neighboring 
one of the sides of this rectangle whose length is equal 
to $\ell$. 
From \eqref{mod005} we get
$H(\eta'''')-H(\eta'')=2J-(\lambda+h)\ell$, which implies 
$H(\eta'''')<H(\eta'')$.

If $\eta''$ has a single rectangle of pluses, 
this growth mechanism can be continued until $\puno$ is 
obtained proving the statement of the lemma. 
If $\eta''$ has two or more rectangles of pluses, 
this growth mechanism can be continued until two or more interacting 
rectangles are found. In such a case, by performing bootstrap mechanism 
steps and boundary growth of rectangles the $\puno$ configuration 
will be eventually constructed completing the proof of the lemma. 
\qed

\subsection{Proof of Lemma~\ref{t:lan000}}
\par\noindent
Case $h>\lambda$: 
row 11 of table~\ref{tab:heu000}
implies 
that the state $\zero$ is a local minimum of the Hamiltonian, since 
all the possible spin flips have a positive energy cost. 

Case $h<\lambda$: the fact that $\zero$ is a local minimum is proven as above. 
Moreover, 
from row 1 of the tables~\ref{tab:heu000} 
it follows that the state $\muno$ is a local minimum of the Hamiltonian, 
as well.
\qed

\subsection{Proof of Proposition~\ref{thm:out_gammaBCG}}
\par\noindent
The proof of Proposition~\ref{thm:out_gammaBCG} is based on lemmas \ref{lemma:bond_minus_plus}-\ref{lemma:path_0_+}. which are listed at the end of this subsection.
We prove that for every configuration $\eta \not \in \{\muno,\puno\}$, the stability level is strictly smaller than the energy barrier $\Gamma$. For technical reasons, we consider the restricted region $\frac{\lambda}{2}<h<\lambda$, in which the metastable behavior is the same of the region $0<h<\lambda$. 
First of all, given a configuration $\eta \in \mathcal{X}$, we consider the set $\mathcal{C}(\eta) \subseteq \Lambda$ defined as the union of the closed unitary square centered at sites $i$ with the boundary contained in the dual of $\mathbb{Z}^2$ and such that $\eta(i)= +1$. The maximal connected components $C_1, . . . , C_m$, with $m \in \mathbb{N}$, of $\mathcal{C}(\eta)$ are called \emph{clusters of pluses}. We define in the same way the \emph{clusters of minuses} and the \emph{clusters of zeros}. If the boundary of a cluster forms internal right angles, then we call them \emph{convex corners}. Otherwise, we call the other angles \emph{concave angles}. Moreover, we call \emph{convex side} of a cluster the side with both adjacent convex corners. Otherwise, we call the side \emph{concave side}. We observe that each cluster has at least one convex side, since $\Lambda$ is finite and there are zero-boundary conditions.
We partition the set of all configurations in any subsets according to peculiar properties of the contained clusters and we provide the stability level of each of these subsets. In particular, we first analyze the configurations with at least a cluster of pluses and we find their stability level, see Lemmas \ref{lemma:bond_minus_plus}, \ref{lemma:cluster_pluses}, \ref{lemma:subcritical_cluster}, \ref{lemma:supercritical_cluster}, \ref{lemma:supercritical_cluster_no_space}, \ref{lemma:middle_cluster}.
Then, we continue to analyze the remaining configurations by computing the stability level for such configurations that contains only zero and minus spins, see Lemma \ref{lemma:supercritical_cluster}, \ref{lemma:path_0_+}.
In this way, we conclude the proof.
\qed

\begin{lemma}\label{lemma:bond_minus_plus}
Let $\eta$ be a configuration that contains a bond of type $(+,-)$, then there exists a configuration $\eta'$ communicating with $\eta$ with a downhill path. 
\end{lemma}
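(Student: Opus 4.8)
The plan is to show that any $\eta$ containing a $(+,-)$ bond fails to be a local minimum: I will exhibit a single site whose flip to $0$ strictly lowers the energy, so that the one-step path from $\eta$ to the resulting $\eta'$ is downhill and $\eta'$ communicates with $\eta$, as required. The naive idea of flipping one of the two endpoints of the prescribed bond does not always work. Indeed, if the plus endpoint sees three pluses and one minus while the minus endpoint sees three minuses and one plus (a straight $(+,-)$ interface), then, reading the relevant rows of Table~\ref{tab:heu000}, each of the four flips at the two endpoints carries a positive cost (e.g. the minus-to-zero cost is $+\lambda-h>0$ and the minus-to-plus cost is $8J-2h>0$). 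Hence the downhill flip must be sought not at an arbitrary bond but at an extremal point of the interface.

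First I would fix a maximal straight run of the plus--minus interface, i.e. a maximal sequence of parallel, consecutive dual edges each separating a plus site from a minus site, containing the dual edge of the given $(+,-)$ bond. Because $\Lambda$ is finite and the external boundary is frozen to $0$ (so that both the plus region and the minus region are finite and the $0$ phase effectively surrounds them), this run cannot continue indefinitely and therefore has two endpoints where the pattern breaks: the pair of sites immediately beyond one end is no longer a plus--minus pair, one of them being a $0$, a same-type spin, or, for a run reaching $\partial^-\Lambda$, a boundary site contributing a phantom zero.

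The core of the argument is then local. At such an endpoint I would single out the last plus site $p$ and the last minus site $m$ of the run. Each already has one neighbour of its own type along the run and one opposite-type neighbour across the interface; the breaking of the pattern forces one of $p$, $m$ to acquire a second opposite-type (or zero) neighbour in the two remaining directions. I would then read off, from the rows of Table~\ref{tab:heu000} with the appropriate counts of neighbouring minuses, zeroes, and pluses, the cost of flipping that site to $0$, treating boundary sites through the rule that each missing neighbour counts as a zero. In every admissible sub-case the cost carries a negative multiple of $J$ (at worst $-2J+\lambda\pm h$), which under Condition~\ref{c:mod007}\eqref{mod007a} ($J\gg\lambda,h$) is strictly negative; thus the flip is downhill and yields the desired $\eta'$.

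The main obstacle is precisely this terminal case analysis: one must enumerate, for the endpoint site, which of $\{+,-,0,\textup{boundary}\}$ the two undetermined neighbours can be and check in each instance that the flip to $0$ strictly decreases the energy, while also handling the degenerate situation of an isolated $(+,-)$ bond (a run of length one), where both ends break simultaneously and $p$, $m$ must be examined directly. Once this local verification is complete, finiteness of $\Lambda$ guarantees that the extremal endpoint exists, and the lemma follows; I note that only $J\gg\lambda,h>0$ is used here, not the finer restriction $\lambda>h$.
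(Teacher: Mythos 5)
Your proposal is correct and follows essentially the same route as the paper: the paper likewise singles out the straight $(+,-)$ interface (plus with three plus neighbours, minus with three minus neighbours) as the only case where no single flip from Table~\ref{tab:heu000} is immediately downhill, then walks along the two adjacent columns (or rows) until the $(+,-)$ pattern breaks or the internal boundary of $\Lambda$ is reached, where the zero-boundary conditions (your ``phantom zeros'') supply a downhill flip via rows 5 or 9 of the table. Your unified treatment of the interior break and the boundary case, together with the explicit endpoint case analysis, merely spells out details that the paper's proof leaves implicit.
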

 \begin{lemma}\label{lemma:cluster_pluses}
Let $\eta$ be a configuration that contains at least a cluster of pluses. If this cluster has a shape different from a rectangle, then $V_\eta<2(\lambda-h)$.
 \end{lemma}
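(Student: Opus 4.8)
The plan is first to dispose of configurations containing a $(+,-)$ bond, and then to work with a plus cluster completely surrounded by zeros, for which a non-rectangular shape always exhibits a concave corner that can be filled at a controlled energy cost.

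First I would invoke Lemma~\ref{lemma:bond_minus_plus}: if $\eta$ has a bond of type $(+,-)$, then there is a one-step downhill move to a communicating $\eta'$ with $H(\eta')<H(\eta)$, so $\eta'\in\mathcal{I}_\eta$ and the trivial path $(\eta,\eta')$ has height $H(\eta)$, giving $V_\eta=0<2(\lambda-h)$. Thus I may assume that $\eta$ has no $(+,-)$ bond, so that every plus of the given cluster has only plus and zero neighbours and the cluster is surrounded by a zero external boundary. Since the cluster is connected but not a rectangle, its boundary must turn inward somewhere: a connected union of unit cells is a rectangle if and only if its boundary has no concave corner (this is where I would use the \emph{convex side} structure recalled before the lemma). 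Hence there is a site $i$ having two perpendicular nearest neighbours equal to $+1$ (the intermediate diagonal site being $+1$ as well); being adjacent to the cluster and there being no $(+,-)$ bond, $i$ is a zero.

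The core is a short reference path from $\eta$ that fills $i$. Write $\mathrm N,\mathrm E$ for the two plus neighbours of $i$ and $\mathrm S,\mathrm W$ for the other two. The key local fact is that each of $\mathrm S,\mathrm W$, if equal to $-1$, has at least two zero neighbours: the two cells bridging such a minus to the plus arms (those adjacent both to it and to a plus) cannot be plus, nor can they be minus next to a plus, hence they are forced to be zero. By the row $(2,2,0)$ of Table~\ref{tab:heu000}, clearing any minus among $\mathrm S,\mathrm W$ therefore costs at most $\lambda-h$, and these two flips do not interfere, since $\mathrm S$ and $\mathrm W$ are not nearest neighbours and clearing one does not alter the spin of any neighbour of the other. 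After clearing the at most two minuses, $i$ has two plus and two zero neighbours, so by row $(0,2,2)$ flipping $i$ to $+1$ costs $-(\lambda+h)<0$. The final configuration differs from $\eta$ by an energy $k(\lambda-h)-(\lambda+h)$, with $k\in\{0,1,2\}$ the number of cleared minuses; this is $\le\lambda-3h<0$ in the region $\tfrac{\lambda}{2}<h<\lambda$ adopted in the proof of Proposition~\ref{thm:out_gammaBCG}, so it lies in $\mathcal{I}_\eta$. Along the path the height above $H(\eta)$ is reached just before the downhill fill and equals the sum of the clearing costs, i.e.\ at most $k(\lambda-h)\le 2(\lambda-h)$; this already gives $V_\eta\le 2(\lambda-h)$.

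To sharpen this to the strict inequality I would control $k$. If $i$ has at most one minus neighbour the peak is $\le\lambda-h<2(\lambda-h)$, and we are done. When both $\mathrm S$ and $\mathrm W$ are minus, the decisive object is their common diagonal neighbour $d$: if $d$ is a zero, then each of $\mathrm S,\mathrm W$ has three zero neighbours, so by row $(1,3,0)$ its clearing is downhill, the path is two-step downhill, and $V_\eta=0$; if $d=-1$ has exactly two minus neighbours one clears $d$ first at cost $\le\lambda-h$, after which $\mathrm S$ and $\mathrm W$ become downhill flips and the peak stays at $\lambda-h<2(\lambda-h)$. I expect the genuinely hard point to be the residual sub-case in which $i$ sits at a corner where the plus cluster meets a minus cluster and $d$ has three or more minus neighbours: there the naive path reaches exactly $2(\lambda-h)$, the obstruction cannot be removed cheaply, and closing the strict gap requires exploiting the global structure of the (necessarily non-rectangular) minus cluster via the minus-analogue of the present construction. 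I regard this bookkeeping — ensuring one never pays two full $\lambda-h$ increments at once — as the main obstacle; should it resist, the non-strict bound $V_\eta\le 2(\lambda-h)$, being far below the nucleation barrier $\Gamma$, already suffices for the use made of this lemma in Proposition~\ref{thm:out_gammaBCG}.
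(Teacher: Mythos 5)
Your proposal is correct (to the same standard as the paper) and takes essentially the same route as the paper's own proof: after disposing of $(+,-)$ bonds via Lemma~\ref{lemma:bond_minus_plus}, the paper classifies the local configurations that are not reducible by a single flip, and its types $(V)$ and $(VI)$ are precisely your concave-corner site with one, respectively two, minus neighbours, which it clears at cost at most $\lambda-h$ per flip (using the same forced-zero bridging-cell observation) before filling the corner downhill, with $H(\eta')<H(\eta)$ guaranteed by $h>\lambda/2$. The sub-case you flag as the ``main obstacle'' is not a gap relative to the paper: in that situation the paper's path for type $(VI)$ also peaks at exactly $2(\lambda-h)$ (two flips each costing exactly $\lambda-h$ when both minuses have exactly two zero neighbours), so the published argument likewise establishes only $V_\eta\le 2(\lambda-h)$ there, and your explicit analysis of the diagonal site $d$ toward the strict inequality is in fact finer than what the paper records.
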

 \begin{lemma}\label{lemma:subcritical_cluster}
If $\eta$ contains either a cluster of pluses with at least a convex side length $l_1<\frac{2J}{\lambda+h}$ or a cluster of minuses with at least a convex side length $l_2<\frac{2J}{\lambda-h}$, then $V_\eta<2J$.
 \end{lemma}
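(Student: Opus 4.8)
The plan is to bound $V_\eta$ from above by exhibiting, in each of the two cases, an explicit path from $\eta$ to a configuration of strictly lower energy whose height exceeds $H(\eta)$ by strictly less than $2J$. Since $V_\eta=\Phi(\eta,\mathcal{I}_\eta)-H(\eta)$ is a minimum over all paths into $\mathcal{I}_\eta$, any such path gives the desired bound.

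First I would dispose of configurations containing a bond of type $(+,-)$: for these Lemma~\ref{lemma:bond_minus_plus} produces a downhill move to a strictly lower configuration, so $V_\eta=0<2J$ and nothing remains to prove. Hence I may assume $\eta$ has no $(+,-)$ bond, and then the external boundary of every cluster of pluses (resp.\ minuses) consists \emph{only} of zeros, because the sole other admissible neighbor of a plus (resp.\ minus) would be a minus (resp.\ plus). In particular the relevant convex side borders zeros throughout.

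Now consider the plus case. Let $C$ be a cluster of pluses with a convex side $S$ of length $l_1<2J/(\lambda+h)$; taking $S$ horizontal, I enumerate its sites $s_1,\dots,s_{l_1}$ from left to right and flip them to zero in this order. Because both corners adjacent to $S$ are convex, the site above each $s_k$ is a zero, and so are the sites left of $s_1$ and right of $s_{l_1}$. When $s_k$ with $k<l_1$ is flipped, its upper and left neighbors are already zero, so it has at most two plus neighbors; reading Table~\ref{tab:heu000} in reverse, the cost is at most the value $\lambda+h$ attained with exactly two plus neighbors. When the final site $s_{l_1}$ is flipped, its upper, left and right neighbors are all zero, so its cost is at most $-2J+\lambda+h<0$. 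Thus the running energy rises by at most $\lambda+h$ at each of the first $l_1-1$ steps and then strictly decreases, so the height of the path exceeds $H(\eta)$ by at most $(l_1-1)(\lambda+h)$, while the endpoint differs from $\eta$ by at most $l_1(\lambda+h)-2J<0$ thanks to $l_1<2J/(\lambda+h)$ (an inequality that is clean since $2J/(\lambda+h)$ is not an integer). Hence the endpoint lies in $\mathcal{I}_\eta$ and $V_\eta\le (l_1-1)(\lambda+h)<2J$. The minus case is identical after replacing $\lambda+h$ by $\lambda-h$: each interior flip along the convex side of length $l_2$ costs at most $\lambda-h$ and the corner flip is downhill, giving a path of height at most $H(\eta)+(l_2-1)(\lambda-h)$ ending in a configuration of energy $H(\eta)+l_2(\lambda-h)-2J<0$, whence $V_\eta<2J$.

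The main obstacle is the local geometric bookkeeping that justifies the uniform bound $\lambda\pm h$ on the per-flip cost. I must check it carefully at the two convex corners, confirm that it is insensitive to the orientation of $S$, and handle the degenerate width-one cluster (in which the interior flips are themselves downhill, only improving the estimate). Once this per-flip count is pinned down, the telescoping of the costs together with the subcriticality condition $l<2J/(\lambda\pm h)$ closes the argument.
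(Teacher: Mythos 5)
Your proof is correct and follows essentially the same route as the paper's: erode the convex side spin by spin starting from a corner, bound each of the first $l_1-1$ (resp.\ $l_2-1$) flips by $\lambda+h$ (resp.\ $\lambda-h$) via Table~\ref{tab:heu000}, observe that the last flip is downhill by $2J-(\lambda\pm h)$, and conclude from $l<2J/(\lambda\pm h)$ that the endpoint is strictly lower in energy while the path height stays below $H(\eta)+2J$. Your preliminary reduction to configurations without $(+,-)$ bonds via Lemma~\ref{lemma:bond_minus_plus}, and your explicit treatment of the width-one cluster, are slightly more careful than the paper's bookkeeping (which tacitly assumes zeros outside the side), but they do not change the argument.
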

 \begin{lemma}\label{lemma:supercritical_cluster}
 Let $\eta$ be a configuration that contains either a cluster of pluses with at least a side length $l_1>\frac{2J}{\lambda+h}$ at distance strictly greater than two from a minus spin, or a cluster of minuses with at least a side length $l_2>\frac{2J}{\lambda-h}$ at distance strictly greater than two from a plus spin. 
 Then $V_\eta<2J$.
 \end{lemma}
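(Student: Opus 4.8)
The plan is to bound $V_\eta$ from above by exhibiting an explicit path from $\eta$ to a configuration of strictly smaller energy whose height exceeds $H(\eta)$ by strictly less than $2J$. Recall that $V_\eta=\Phi(\eta,\mathcal{I}_\eta)-H(\eta)$, with $\mathcal{I}_\eta$ the set of configurations of energy strictly below $H(\eta)$; hence it suffices to produce one such path, and the control on its height will immediately yield $V_\eta<2J$.

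First I would treat the supercritical cluster of pluses. I fix a convex side of length $l_1>2J/(\lambda+h)$ and grow the cluster by flipping to plus, one at a time, the $l_1$ zeros adjacent to that side, proceeding sequentially from one endpoint to the other. The hypothesis that the side lies at distance strictly greater than two from every minus guarantees that all sites involved in the flips — the flipped zeros and their four neighbours — sit in a pure zero background, so the energy costs are read directly from the rows of Table~\ref{tab:heu000} having no neighbouring minus. The first (corner) flip has one plus and three zero neighbours, costing $2J-\lambda-h$ (row $0\,3\,1$); each of the remaining $l_1-1$ flips has two pluses and two zeros, costing $-\lambda-h<0$ (row $0\,2\,2$). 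Thus the energy is maximal right after the first flip, equal to $H(\eta)+2J-\lambda-h$, while the total variation is $2J-(\lambda+h)l_1<0$, so the final configuration lies in $\mathcal{I}_\eta$. Consequently $\Phi(\eta,\mathcal{I}_\eta)\le H(\eta)+2J-\lambda-h$, whence $V_\eta\le 2J-\lambda-h<2J$.

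The supercritical cluster of minuses is handled by the symmetric construction: grow a convex side of length $l_2>2J/(\lambda-h)$ by flipping its $l_2$ adjacent zeros to minus. Again the distance condition places every flip in a zero background; the reversed entries of Table~\ref{tab:heu000} give cost $2J-(\lambda-h)$ for the first flip (row $1\,3\,0$) and $-(\lambda-h)<0$ for each subsequent one (row $2\,2\,0$), using $\lambda>h$. The height is therefore maximal after the first flip, equal to $H(\eta)+2J-(\lambda-h)$, and the total variation $2J-(\lambda-h)l_2<0$ lands the path in $\mathcal{I}_\eta$. Hence $V_\eta\le 2J-(\lambda-h)<2J$.

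I expect the only delicate point to be the geometric bookkeeping guaranteeing the quoted flip costs, rather than the energetics. One must check that starting the growth at a convex corner makes the first flipped site see exactly one equal-sign neighbour, and that the hypothesis of distance strictly greater than two is precisely what keeps the whole two-site-thick growth strip free of the opposite spin, so that the all-zero rows of Table~\ref{tab:heu000} apply throughout. For non-rectangular plus clusters the bound is already furnished by Lemma~\ref{lemma:cluster_pluses}, so here it suffices that every cluster possesses at least one convex side, along which the growth above can be performed. The crucial structural fact is simply that the maximal height of the path is attained at the single expensive corner flip and stays strictly below $H(\eta)+2J$.
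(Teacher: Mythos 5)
Your proof is correct and takes essentially the same route as the paper: grow the supercritical side by flipping its $l$ adjacent zeros one at a time, so that the single expensive first flip of cost $2J-(\lambda\pm h)$ bounds the height of the path, while the total variation $2J-(\lambda\pm h)\,l<0$ lands the final configuration in $\mathcal{I}_\eta$, giving $V_\eta\le 2J-(\lambda\pm h)<2J$. The only (cosmetic) difference is that where you restrict to convex sides and invoke Lemma~\ref{lemma:cluster_pluses} for non-rectangular plus clusters, the paper handles a concave side directly by observing that there the first flip is already downhill, so the height bound holds a fortiori.
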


 \begin{lemma}\label{lemma:supercritical_cluster_no_space}
 Let $\eta$ be a configuration that contains a cluster of pluses with at least a side length $l>\frac{2J+\lambda-h}{h}$. Then $V_\eta<5J$.
 \end{lemma}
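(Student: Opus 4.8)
The plan is to exhibit, for such an $\eta$, a reference path from $\eta$ to a configuration of strictly lower energy whose height exceeds $H(\eta)$ by less than $5J$; by the definition of the stability level this gives $V_\eta<5J$. First I would dispose of the trivial case: if $\eta$ contains a $(+,-)$ bond, Lemma~\ref{lemma:bond_minus_plus} yields a downhill path to a lower configuration and the bound is immediate, so I may assume pluses and minuses are separated by zeros. Since this lemma is meant to complement Lemma~\ref{lemma:supercritical_cluster} (the case with ``space''), I would also assume the long convex side of length $l$ is \emph{not} at distance greater than two from a minus; together with the absence of $(+,-)$ bonds this forces the local ``frame'' geometry, in which the side is flanked by a single row of zeros and then by minuses.

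The reference path advances the plus front by one row along this side. Writing the side in row $m$, columns $1,\dots,l$, with buffer zeros in row $m+1$ and minuses in row $m+2$, I would process the columns from left to right, interleaving two flips per column: first a minus$\to$zero flip at $(m+2,j)$ (advancing the buffer), then a zero$\to$plus flip at $(m+1,j)$ (advancing the plus phase). In the bulk these present the neighbourhoods ``$2$ minus, $2$ zero'' and ``$0$ minus, $2$ zero, $2$ plus'', with costs $+(\lambda-h)$ and $-(\lambda+h)$ from Table~\ref{tab:heu000}, so the net cost per bulk column is $-2h$. At the starting end I would first flip the diagonal corner minus at $(m+1,0)$ to zero (a cheap ``$2$ minus, $2$ zero'' step of cost $\lambda-h$); this turns the first conversion into a ``$0$ minus, $3$ zero, $1$ plus'' flip of cost $2J-\lambda-h$, instead of the ``$1$ minus, $2$ zero, $1$ plus'' flip of cost $4J-\lambda-h$ one would otherwise pay. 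One checks that no $(+,-)$ bond is ever created, so the quoted table rows are the correct ones throughout.

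Two estimates then close the argument. For the endpoint, summing the flip costs gives total energy change $4J+\lambda-h-2hl$; the hypothesis $l>\tfrac{2J+\lambda-h}{h}$ gives $2hl>4J+2\lambda-2h$, whence the change is strictly smaller than $-(\lambda-h)<0$ and the endpoint lies in $\mathcal{I}_\eta$. For the height, the energy along the path is largest near the starting end: the running maximum is attained at the buffer-advance step of the second column and equals $4J+2\lambda-4h$, all later peaks being smaller by multiples of $2h$. Since $\lambda,h\ll J$ by Condition~\ref{c:mod007}, one has $2\lambda-4h<J$, so the height is below $H(\eta)+5J$ and $V_\eta<5J$.

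The delicate point is precisely the behaviour at the corner and at the ends of the advancing row, where a cell can have three minus neighbours and a single minus$\to$zero flip may cost as much as $2J+\lambda-h$; two such expensive flips stacking would already threaten the $5J$ bound. The interleaved ordering together with the preliminary diagonal-corner flip is designed exactly so that at most the pair (corner buffer step)$+$(first conversion) accumulates, capping the barrier at $4J+2\lambda-4h$. I would also record the boundary sub-case in which the long side abuts the lattice border: there the external zero boundary condition replaces the corner minuses, which only lowers the relevant costs, so the same path works a fortiori. Finally I would note that it suffices to advance a single row---reaching one configuration of lower energy is enough for the stability-level bound---so no global growth to $\puno$ needs to be controlled.
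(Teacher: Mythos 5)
Your construction is essentially the paper's (reduce via Lemma~\ref{lemma:bond_minus_plus}, then advance the long side by one row with interleaved minus$\to$zero and zero$\to$plus flips), but it omits one flip that the argument cannot do without, and this opens a genuine gap at the far end of the row. The paper begins by flipping \emph{both} minuses at distance $\sqrt{2}$ from the side, paying $2(\lambda-h)$; you flip only the one at the starting end, $(m+1,0)$. Consequently, in the worst case (all spins at distance $\sqrt{2}$ and $2$ from the side being minuses) your last conversion, the flip of the zero at $(m+1,l)$ to plus, is performed while the far diagonal corner $(m+1,l+1)$ is still a minus: that zero then has ``$1$ minus, $1$ zero, $2$ plus'' neighbours, the flip costs $2J-\lambda-h$ rather than $-(\lambda+h)$ (Table~\ref{tab:heu000}), and it creates a $(+,-)$ bond. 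So your claim that no $(+,-)$ bond is ever created fails precisely at the last step, and the true total energy change of your path is $6J+\lambda-h-2hl$, not $4J+\lambda-h-2hl$. The hypothesis $l>\frac{2J+\lambda-h}{h}$, i.e.\ $2hl>4J+2(\lambda-h)$, is tuned exactly to make $4J+2(\lambda-h)-2hl$ negative; it does not make $6J+\lambda-h-2hl$ negative --- for $l$ just above the threshold this quantity is of order $2J>0$. Hence your endpoint need not lie in $\mathcal{I}_\eta$, and the bound $V_\eta<5J$ does not follow. (Your height estimate is unaffected: even with the $2J$ penalty the running energy stays below your peak $4J+2\lambda-4h$.)

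The repair is exactly what the paper does: also flip the far diagonal corner minus to zero at cost $\lambda-h$, either at the very start (as in the paper, giving peak $4J+3\lambda-5h<5J$) or even after your last conversion, where that minus has ``$2$ minus, $1$ zero, $1$ plus'' neighbours and the flip is downhill by $2J-\lambda+h$; either way one lands on the configuration with total change $4J+2(\lambda-h)-2hl<0$, closing the argument. A secondary point: you never reduce to rectangular clusters, whereas the paper first invokes Lemma~\ref{lemma:cluster_pluses} (at stability level $2(\lambda-h)\ll 5J$); without that reduction the two ends of your side of length $l$ need not have the corner geometry you assume (the cluster may continue past a concave corner), so your end-of-row bookkeeping is asserted rather than established. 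Your preliminary reduction via Lemma~\ref{lemma:supercritical_cluster} is legitimate, but it does not substitute for the rectangle reduction.
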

 \begin{lemma}\label{lemma:middle_cluster}
 Let $\eta$ be a configuration that contains a cluster of pluses with at least a side length $\frac{2J}{\lambda+h}<l<\frac{2J+\lambda-h}{h}$. Then $V_\eta<\Gamma^*$ where $\Gamma^*=\frac{2J^2}{h}$.
 \end{lemma}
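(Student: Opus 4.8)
The plan is to produce, for any such $\eta$, an explicit path to a configuration of strictly lower energy whose height exceeds $H(\eta)$ by less than $\Gamma^*=2J^2/h$. First I reduce to a canonical situation. If $\eta$ carries a bond of type $(+,-)$, Lemma~\ref{lemma:bond_minus_plus} lets me descend along a downhill path without ever raising the energy, so I may assume the plus cluster carrying the side of length $l$ is separated from the minuses by zeros. If that cluster is not a rectangle, Lemma~\ref{lemma:cluster_pluses} already gives $V_\eta<2(\lambda-h)<\Gamma^*$; and if no minus lies within distance two of its side of length $l$, Lemma~\ref{lemma:supercritical_cluster} gives $V_\eta<2J<\Gamma^*$. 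Thus the only case to treat is a rectangular plus cluster with a convex side of length $l$, wrapped by a width-one layer of zeros with minuses close by, i.e.\ a (possibly chopped or boundary) frame as in Figure~\ref{fig:app000}.

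The main construction grows the internal plus rectangle one row at a time by iterating the chopped corner step: advance the zero layer by flipping a row of adjacent minuses to zero (uphill, by the rows of Table~\ref{tab:heu000} with neighbour counts $(3,1,0)$ and $(2,2,0)$), then fill the vacated zero row with pluses (one uphill protuberance of cost $2J-\lambda-h$, row $(0,3,1)$, followed by downhill flips of cost $-(\lambda+h)$, row $(0,2,2)$). Each cycle realizes an increment of the profile $\Delta_\textup{d}$ of \eqref{eq:energy_frame}, a downward parabola in the internal side length peaking at $\ell=l_c$ (see \eqref{eq:critical_length}) with $\Delta_\textup{d}(l_c)\le (2J+\lambda-h)^2/(2h)$. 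I grow until the internal square reaches side $l_c$, after which every further step is downhill and leads monotonically to $\puno$, whose energy is far below $H(\eta)$. Along this path the height above $H(\eta)$ is at most $\Delta_\textup{d}(l_c)-\Delta_\textup{d}(l)$ plus a transient term bounded by the single-cycle frame-advancement cost $2J+O((\lambda-h)l_c)=O(J)$. The hypothesis $l>2J/(\lambda+h)$ forces $\Delta_\textup{d}(l)\ge\Delta_\textup{d}\big(2J/(\lambda+h)\big)$, a quantity of order $J^2/h$ (it equals $\Gamma^*\cdot 4\lambda h/(\lambda+h)^2$ to leading order), whereas $\Delta_\textup{d}(l_c)-\Gamma^*=(\lambda-h)(4J+\lambda-h)/(2h)$ and the transient are both only of order $J$. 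Since $J\gg\lambda,h$ by Condition~\ref{c:mod007}.\ref{mod007a}, in the restricted region $\lambda/2<h<\lambda$ these corrections are dominated by $\Delta_\textup{d}(l)$, so the height is $<H(\eta)+\Gamma^*$ and $V_\eta<\Gamma^*$. When $l>l_c$ the climb is absent and the barrier is $O(J)$.

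The main obstacle is twofold. First, the transient cost incurred while the zero frame is being advanced but not yet filled must be controlled uniformly, and for non-square rectangles one must argue that growing the shorter dimension up to criticality is the economical route; this is where the restricted region $\lambda/2<h<\lambda$ is used, precisely to keep $(\lambda-h)^2/(\lambda+h)^2$ safely below $1$. Second, and more seriously, a cluster sitting in the bulk is wrapped by a full frame (structure (a)) rather than a chopped one, and its profile $\Delta_\textup{a}$ peaks at $\ell=2l_c$ with height of order $4\Gamma^*$, so growth is far too expensive there. For such clusters the correct move is the reverse one: dissolve the droplet towards $\muno$, which is net downhill in $\Delta_\textup{a}$ throughout the middle range $l<2l_c$, with per-step transients bounded by the frame-retraction cost (rows $(1,2,1)$ or $(1,3,0)$) of at most $4J<\Gamma^*$. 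Reconciling the growth mechanism (corner and boundary clusters, and $l>l_c$) with the dissolution mechanism (bulk clusters, and $l<l_c$), while verifying that every intermediate height stays below $\Gamma^*$ across all rectangle dimensions, positions, and both sides of the critical length, is the heart of the proof.
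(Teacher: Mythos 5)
Your opening reductions (eliminate $(+,-)$ bonds via Lemma~\ref{lemma:bond_minus_plus}, force a rectangular cluster via Lemma~\ref{lemma:cluster_pluses}, force minuses within distance two via Lemma~\ref{lemma:supercritical_cluster}, and confine both side lengths to the middle window via Lemmas~\ref{lemma:subcritical_cluster} and~\ref{lemma:supercritical_cluster_no_space}) coincide with the paper's. After that, however, the paper's proof does not case on the \emph{position} of the cluster at all: it cases on the shorter side $m$ against the threshold $\tilde l=\lfloor(J+\lambda+h)/h\rfloor$. If both sides exceed $\tilde l$, it grows the rectangle using the \emph{bulk} (worst-case, hence position-independent) row cost $4J+2(\lambda-h)-2hm$ until the longer side passes $l_F=\lfloor(2J+\lambda-h)/h\rfloor$, and the bound $\bigl(4J+2(\lambda-h)-2h(\tilde l+1)\bigr)(l_F-\tilde l)+(4J+3\lambda-5h)<\Gamma^*$ holds precisely because the starting sides are already of order $J/h$. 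If instead $m\le\tilde l$, it shrinks the plus rectangle \emph{to zeros} (not towards $\muno$), cutting sides of length $m$ at net cost $(\lambda+h)m-2J$ per cut; the total, of order $2J^2(\lambda-h)/h^2$, lies below $\Gamma^*=2J^2/h$ exactly because $h>\lambda/2$. That is the real role of the restricted region; your stated reason for it --- keeping $(\lambda-h)^2/(\lambda+h)^2$ below $1$ --- is vacuous, since that ratio is below $1$ for every $h>0$.

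Your alternative, position-based dichotomy (grow at corners and boundaries, dissolve in the bulk) contains genuine gaps as written. First, you lump boundary clusters with corner ones, but by \eqref{eq:energy_frame} the boundary profiles $\Delta_\textup{b},\Delta_\textup{e}$ peak near $\tfrac{9}{4}\Gamma^*$ (at $\ell\approx 3J/(2h)$), not at $\Gamma^*+O(J)$ as $\Delta_\textup{d}$ does; for growth to stay below $H(\eta)+\Gamma^*$ there, you would need to show the starting energy already exceeds $\tfrac{5}{4}\Gamma^*$, a computation that again requires $h>\lambda/2$ and that appears nowhere in your argument. Second, the mechanics of your bulk dissolution are wrong: the uphill part of each shrinking cycle is the removal of a row of pluses, which is supercritical with respect to the surrounding zeros, so the climb is the cumulative $(\lambda+h)(m-1)$ (which can exceed $4J$ when $m$ is near $l_F$), not a per-step cost taken from rows $(1,2,1)$/$(1,3,0)$ of Table~\ref{tab:heu000} --- flipping a zero adjacent to a plus into a minus creates a $(+,-)$ bond and is never used in an efficient path. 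The $O(J)$ conclusion survives, but not by the argument you give. Finally, you yourself defer ``reconciling the growth mechanism with the dissolution mechanism \dots\ across all rectangle dimensions, positions, and both sides of the critical length'' as ``the heart of the proof'': that reconciliation, carried out uniformly, is exactly what the lemma demands, so what you have produced is a strategy sketch with a missing core rather than a proof; the paper's size-based dichotomy is precisely the device that makes such a position-by-position reconciliation unnecessary.
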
 
 \begin{lemma}\label{lemma:path_0_+}
The stability level of $\zero$ is strictly smaller than $\Gamma$, i.e., $V_{\zero}<\Gamma$
\end{lemma}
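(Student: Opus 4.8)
The plan is to bound the stability level of $\zero$ from above by exhibiting a single convenient path from $\zero$ to a configuration of strictly lower energy, and then to compare this bound with $\Gamma$. Since we are in the region $\lambda>h>0$, the hierarchy \eqref{lan004} gives $H(\puno)<H(\muno)<H(\zero)=0$, so $\mathcal{I}_{\zero}$ is non-empty and $V_{\zero}=\Phi(\zero,\mathcal{I}_{\zero})$ is finite; indeed any sufficiently large plus droplet has negative energy and therefore lies in $\mathcal{I}_{\zero}$.

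First I would construct a reference path $\underline{\omega}$ that nucleates and grows a square droplet of pluses in the bulk of $\Lambda$; there is room for this because $L$ is huge by Condition~\ref{c:mod007}.\ref{mod007b}. The path begins by flipping a single zero to plus, at cost $4J-\lambda-h$ (the line $0,4,0$ of Table~\ref{tab:heu000}), and then grows the droplet through the standard sequence of squares and quasi-squares: to pass from one rectangle to the next larger one, it first creates a protuberance on a side, at cost $2J-\lambda-h$ (line $0,3,1$), and then fills the remaining sites of that side, each such flip lowering the energy by at least $\lambda+h$ (lines $0,2,2$ and $0,1,3$). The growth is continued up to a square of side $\bar\ell=\lceil 4J/(\lambda+h)\rceil+1$, whose energy relative to $\zero$ is $E(\bar\ell)=4J\bar\ell-(\lambda+h)\bar\ell^2<0=H(\zero)$, so that the endpoint belongs to $\mathcal{I}_{\zero}$. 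Within each growth stage the energy increases only at the protuberance step and decreases afterwards, so the local maxima of the path are the protuberance configurations, with energies $E_{\mathrm{rect}}+(2J-\lambda-h)$, where $E_{\mathrm{rect}}=2J(a+b)-(\lambda+h)ab$ is the energy of a square or quasi-square visited along $\underline{\omega}$. Optimizing this quadratic (its maximum sits near the critical size $2J/(\lambda+h)$) yields $E_{\mathrm{rect}}\le 4J^2/(\lambda+h)+2J$ for every rectangle on the path, whence
\[
V_{\zero}\le\Phi(\underline{\omega})\le \frac{4J^2}{\lambda+h}+4J .
\]

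Finally I would compare this with $\Gamma$. From \eqref{eq:energy_barrier}--\eqref{eq:critical_length} a short computation gives $\Gamma=2J^2/h+O(J)$, so $\Gamma\ge 2J^2/h-CJ$ for a constant $C$ absorbing the linear correction hidden in the floor defining $l_c$. Subtracting the two estimates,
\[
\Gamma-V_{\zero}\ge \frac{2J^2}{h}-\frac{4J^2}{\lambda+h}-(C+4)J
=\frac{2J^2(\lambda-h)}{h(\lambda+h)}-(C+4)J .
\]
The leading term is strictly positive because $\lambda>h$, and it is of order $J^2$, while the correction is only of order $J$; since $J\gg\lambda,h$ by Condition~\ref{c:mod007}.\ref{mod007a}, the quadratic term dominates and the right-hand side is strictly positive, giving $V_{\zero}<\Gamma$. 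This is exactly the expected conclusion: $\zero$ escapes to a lower-energy configuration over the plus-droplet barrier $\sim 4J^2/(\lambda+h)$, which is cheaper than the chopped-corner-frame barrier $\sim 2J^2/h$ precisely when $h<\lambda$.

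I expect the main obstacle to be this last comparison: one must make the $O(J)$ error terms fully explicit---both the protuberance and quasi-square corrections in the bound on $\Phi(\underline{\omega})$ and the discretization correction coming from $l_c$---and verify that the order-$J^2$ gap $2J^2(\lambda-h)/[h(\lambda+h)]$ genuinely exceeds their sum. This is where Condition~\ref{c:mod007}.\ref{mod007a} is used quantitatively, and the non-integrality hypotheses of Condition~\ref{c:mod007}.\ref{mod007c} can be invoked to make $l_c$ and the discrete peak of the reference path unambiguous.
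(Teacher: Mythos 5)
Your proposal is correct and follows essentially the same route as the paper: a reference path from $\zero$ growing a plus droplet through squares and quasi-squares, with the peak bounded by the critical-droplet-plus-protuberance energy $\tfrac{4J^2}{\lambda+h}+O(J)$, then compared against $\Gamma$ through the intermediate quantity $\tfrac{2J^2}{h}$ using $\lambda>h$ and $J\gg\lambda,h$. The only cosmetic differences are that you stop the path at a supercritical square in $\mathcal{I}_{\zero}$ whereas the paper continues all the way to $\puno$, and that you write $\Gamma\ge \tfrac{2J^2}{h}-CJ$ where the paper uses the cleaner fact $\Gamma>\Gamma^*=\tfrac{2J^2}{h}$; neither affects the argument.
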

\noindent The proofs of the previous lemmas are in Section \ref{section:lemmas_recurrence}.

\subsection{Proof of Theorem~\ref{thm:recurrence_property_BCG}}
\par\noindent
Let $\Gamma$ as in definition \eqref{eq:energy_barrier}. By applying \cite[Theorem 3.1]{manzo2004essential} with $V^*=\Gamma$ and Proposition \ref{thm:out_gammaBCG}, we get the proof.
\qed

\subsection{Proof of Theorem~\ref{thm:Identification}}
\par\noindent
In this proof, we identify the unique metastable state and we compute the value of the maximal stability level. To do this, we first construct a \emph{reference path} to find an upper bond for the stability level of $\muno$, i.e. $V_\muno \leq \Gamma_m$, and then we give a lower bond of $V_\muno$ such that $V_\muno \geq \Gamma_m$ by using a new procedure based on the computation of the number of bonds in any configuration. In this way, we can conclude the proof.

\subsubsection{Upper bound for $V_{\muno}$} 
\par\noindent
We define the \emph{reference path} as a path from $\muno \to \puno$ consisting in a sequence of configurations with increasing clusters \emph{as close as possible to chopped corner frame} such that $\Phi(\omega)-H(\muno)=\Gamma_m$. Hence, $V_\muno \leq \Gamma_m$. 
We denote by $\sigma^F_{m,n}$ the configuration that contains a chopped corner frame with horizontal side length $m$ and vertical side length $n$. 
We choose the site in one of the four corners of $\Lambda$ and we consider its two nearest neighbors. We flip these three minus spins to zero leaving with an energy cost equals to $3(\lambda-h)$, see Table \ref{tab:heu000}.
Then, we flip the zero in the corner to plus increasing the energy by $4J-(\lambda+h)$. Thus the total energy cost to obtain $\sigma^F_{1,1}$, i.e. to form a chopped corner frame of both side lengths equal to one, is to $4J+2\lambda-4h$.\\
Next, we flip the minus spins at distance smaller than or equal to $\sqrt{2}$ from the zeros, and then we construct a square $2 \times 2$ of pluses. In this way a chopped corner frame of both side lengths two is formed, see Figure \ref{path_minus_plus}
\begin{figure}[H]
\begin{center}
    \includegraphics[scale=0.315]{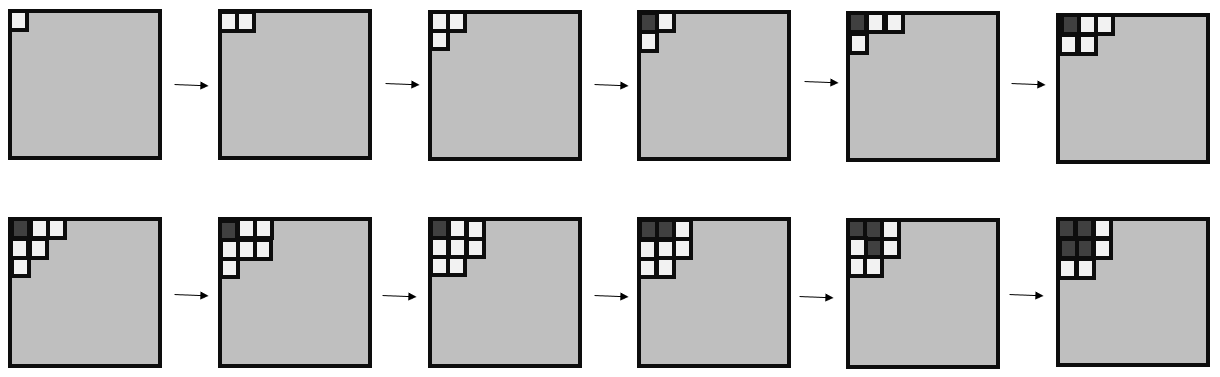}
    \caption{The first part of the reference path, from $\muno$ to $\sigma^F_{2,2}$, a chopped corner frame of both side lengths equal to two.}\label{path_minus_plus}
\end{center}
    \end{figure}
We grow up this chopped corner frame by considering a minus spin at distance one from this frame and from the boundary of $\Lambda$ (this is the effect of the zero-boundary conditions) and by flipping it to zero with an energy cost of $(\lambda-h)$. 
Then, we flip from zero to plus the unique zero with two zero nearest neighbours 
and we repeat these two steps to grow up the chopped corner frame $2 \times 2$ to a chopped corner frame $2 \times 3$. Thus, we obtained $\sigma^F_{2,3}$. 
Next, we grow up the chopped corner frame $2 \times 3$ by considering a minus spin along the longest side at distance one from the frame and the boundary of $\Lambda$ and by flipping it to zero.
Then, we flip from zero to plus the unique zero with two zero nearest neighbours  
and we repeat these two steps until we obtain $\sigma^F_{3,3}$. 
We continue in the same manner by flipping first a minus to zero and then a zero to plus, (see Figure \ref{fig:reference_path}) until the chopped corner frame invades all the lattice $\Lambda$ obtaining the configuration $\puno$. 
\begin{figure}[H]
\begin{center}
    \includegraphics[scale=0.34]{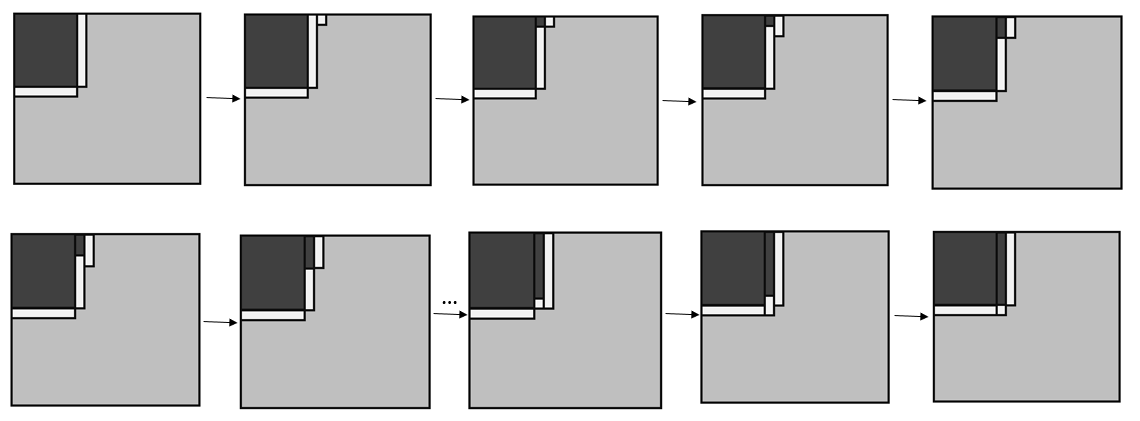}
    \caption{A part of the reference path from $\sigma^F_{n-1,n}$ to $\sigma^F_{n,n}$.}\label{fig:reference_path}
\end{center}
    \end{figure}
    
In the following we compute the communication height of this procedure. First of all, we compute the energy cost between the configuration $\muno$ and a configuration $\sigma^F_{m,n}$. Suppose $m \leq n$, we have
\begin{align}
  H(\sigma^F_{m,n})& =2J(n+m)+4JL-(\lambda+h)nm \notag \\
  & -(\lambda-h)(L^2-nm-n-m)
\end{align}
where $2(m+n)$ is the number of bonds $(0,+)$, $4L$ is the number of bonds $(0,-)$, $mn$ is the number of pluses, and $(L^2-nm-n-m)$ is the number of minuses in $\sigma^F_{m,n}$.
Thus, by equation \eqref{lan000}, we obtain
\begin{align}\label{eq:difference_energy_frame_rectangle}
  H(\sigma^F_{m,n})-H(\muno) & =2J(n+m)+\lambda(n+m) \notag \\
  & -h(2mn+n+m)
\end{align}
In particular,
\begin{align}
  & H(\sigma^F_{n,n-1})-H(\muno)=2J(2n-1)+\lambda(2n-1) \notag \\
  & -h(2n(n-1)+2n-1) \label{eq:difference_energy_frame1} \\
 & H(\sigma^F_{n,n})-H(\muno) =4Jn+2\lambda n-h(2n^2+2n) \label{eq:difference_energy_frame2} \\
  & H(\sigma^F_{n,n+1})-H(\muno) =2J(2n+1)+\lambda(2n+1) \notag \\
  & -h(2n(n+1)+2n+1) \label{eq:difference_energy_frame3}
\end{align}
%
We have that $H(\sigma^F_{n,n-1})>H(\sigma^F_{n,n})>H(\sigma^F_{n,n+1})$ for $n>\frac{2J+\lambda-h}{2h}$, and $H(\sigma^F_{n,n-1})<H(\sigma^F_{n,n})<H(\sigma^F_{n,n+1})$ for $n<\frac{2J+\lambda-h}{2h}$. Thus, the communication height $\Phi(\omega)$ along the reference path $\omega$ is equal to $\Phi(\sigma^F_{l_c,l_c-1},\sigma^F_{l_c,l_c})$, 
where $l_c$ is defined in \eqref{eq:critical_length}.
Starting from $\sigma^F_{l_c, l_c-1}$ to reach the configuration $\sigma^F_{l_c, l_c}$, the maximal height is given by the first three steps and its value is $2J-(\lambda+h)+2(\lambda-h)$. Indeed, the first step is the flip of the minus spin at distance one from the chopped corner frame and from the boundary of $\Lambda$ in to zero. The energy cost of this flip is $(\lambda-h)$. The second step is the flip of the unique zero with two zero nearest neighbours in to plus, and its energy cost is $2J-(\lambda+h)$. The third step is the flip of the minus at distance one from the first flipped minus and at distance two from the boundary of $\Lambda$ in to zero. the energy cost of this last flip is $(\lambda-h)$. The rest of the path to reach $\sigma^F_{l_c, l_c}$ is composed by a sequence of flipping a zero into a plus with the decrease of energy of $(\lambda+h)$ followed by flipping minus into zero with an energy cost of $(\lambda-h)$, thus it is a two-steps downhill path.
Hence, using the equations \eqref{eq:difference_energy_frame1},\eqref{eq:difference_energy_frame2} and \eqref{eq:difference_energy_frame3}, we have
\begin{align}\label{eq:gamma_BCG_esatta}
& \Phi(\muno,\puno)-H(\muno)=\Phi(\sigma_{l_c, l_c-1},\sigma_{l_c,l_c})-H(\muno) \notag \\
& \leq H(\sigma_{l_c, l_c-1})+2J-(\lambda+h)+2(\lambda-h)-H(\muno) \notag \\
& = 4Jl_c +2\lambda l_c-2h2l_c^2-2h
=\Gamma.
\end{align}
We note that $\Gamma>\Gamma^*$, where 
\begin{equation}\label{gamma*}
    \Gamma^*=\frac{2J^2}{h}.
\end{equation}

\subsubsection{Lower bound for $V_{\muno}$} 
\par\noindent
To find the lower bound of $\muno$, we use lemmas \ref{lem:zeros_column_L}-\ref{lem:step_2_minmax}, which are collected at the end of this subsection. We denote with $\mathcal{M}_{n^+}$ the manifold with a fixed number $n^+$ of pluses. Fixed $n^+_c=l_c(l_c-1)$, we define 
\begin{equation}\label{def:critical_conf}
    \sigma_c=\sigma^F_{l_c-1,l_c}
\end{equation}
see Figure \ref{fig:frame_step_selle}. 
\begin{remark}\label{rem:conf_critica} 
We observe that in $\sigma_c$ the smallest rectangle that contains the frame of pluses and zeros has side lengths $l_c$ and $l_c+1$. Moreover, the envelope of this rectangle contains $2(2l_c-1)$ bonds between a plus and a zero, and $2l_c+1$ bonds between a minus and a zero. The other bonds in the envelope are between two spins of the same type. Out of this envelope there are $4L-(2l_c+1)$ bonds between a minus and a zero according to the zero-boundary conditions, and the other bonds are between two spins of the same type.
\end{remark}
To find a lower bound for the stability level of $\muno$, we divide the proof into two main steps: in the first step, we prove that $\sigma_c$ in \eqref{def:critical_conf} is the energy minimizer in the manifold $\mathcal{M}_{n^+_c}$; in the second step we show that a path from $\mathcal{M}_{n^+_c}$ to $\mathcal{M}_{n^+_c+2}$ has minimal communication height  if it starts from $\sigma_c$.
Hence, in Lemma \ref{lem:sigma_c_minimum} we prove that $\sigma_c=\text{argmax}_{\xi \in \mathcal{M}_{n^+_c}}H(\xi)$.
To do this, we use corollary \ref{lem:sequence_of_minuses} and lemmas \ref{lem:region_zero_minus_1}, \ref{lem:region_zero_minus_2}, \ref{lem:zeros_column_L}, \ref{lem:cluster_pluses_different_quasi_square} and \ref{lem:min_crit_quad_no_boundary} to prove that 
if $\eta \in \mathcal{M}_{n^+_c}$ is a configuration that differs from $\sigma_c$ then $\eta \not \in M$, where 
\begin{equation}\label{def:set_minima}
    M=\{\sigma \, | \, H(\sigma)=\min_{\xi \in \mathcal{M}_{n^+_c}} H(\xi)\}.
\end{equation} 

Then, the second step consists in proving that a path from $\mathcal{M}_{n^+_c}$ to $\mathcal{M}_{n^+_c+2}$ with minimal communication height has to start $\sigma_c$. In particular, by applying lemmas \ref{lem:path_sigma_c}-\ref{lem:step_2_minmax}, we show that the path with minimal communication height crosses two peculiar configurations that we call $\tilde \sigma_c$ and $\sigma_s$.
The configuration $\sigma_c$, respectively $\tilde \sigma_c$, differs from $\sigma_s$ as shown in Figure \ref{fig:critical_configuration_saddle} and \ref{fig:frame_step_selle}. The energy of these two configurations are 
\begin{align} 
    & H(\tilde \sigma_c)=H(\sigma_c)+2J-(\lambda+h)+(\lambda-h), \label{eq:energy_sigma_protuberance} \\
    & H(\sigma_s)=H(\sigma_c)+2J-(\lambda+h)+2(\lambda-h). \label{eq:energy_saddle}
\end{align}
\begin{figure}[H]
        \begin{center}
        \includegraphics[scale=0.39]{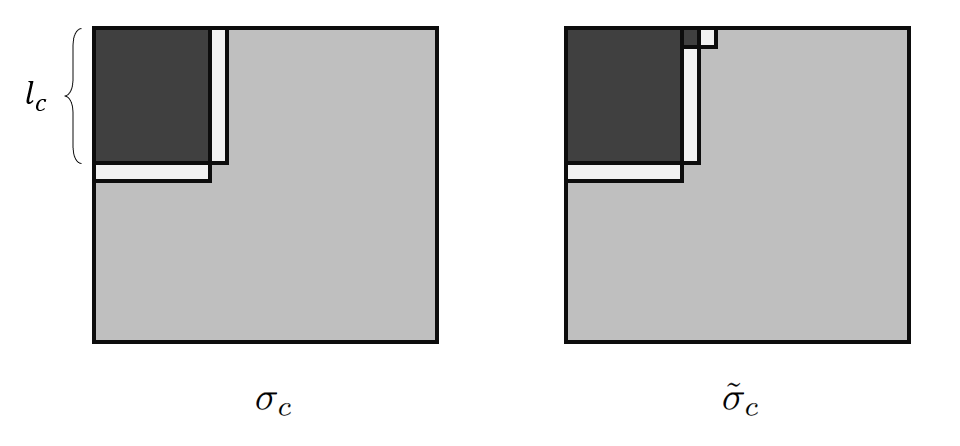}
        \caption{The shapes of the configurations $\sigma_c$ and $\tilde \sigma_c$. We note that the cluster of pluses can be attached in one of the four corners of $\Lambda$ and the protuberance can be attached along one of the two sides of the cluster of pluses. With an abuse of notation, we denote all of these configurations with $\sigma_c$ (or $\tilde \sigma_c$).}
        \label{fig:frame_step_selle}
        \end{center}
    \end{figure}
 Therefore $V_{\muno} \geq H(\sigma_s)=\Gamma$.
Summarizing, $V_{\muno}=\Gamma$ and for any $\eta \in \mathcal{X} \setminus \{\muno,\puno\}$ we have $V_\eta<\Gamma$.
Thus, by \cite[Theorem 2.4]{cn2013}, $\muno$ is the unique metastable state.
\qed

We define \emph{strip of pluses} (resp. \emph{strip of minuses}) the connected subset of a column or a row of $\Lambda$ filled with pluses only (resp. minuses only).
 \begin{corollary}\label{lem:sequence_of_minuses}
 Let $\eta$ be a configuration that contains a strip of minuses with at least a side length $l>\frac{2J}{\lambda-h}$ at distance strictly greater than two from a plus spin. Then $V_\eta<2J$.
 \end{corollary}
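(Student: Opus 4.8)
The plan is to construct an explicit path that grows the minus strip into the surrounding sea of zeros, exactly paralleling the minus–cluster argument of Lemma~\ref{lemma:supercritical_cluster}. Assume without loss of generality that the strip lies in a single column, occupying the sites $(x_0,y)$ with $y=y_1,\dots,y_1+l-1$, all carrying spin $-1$. Because the strip is at distance strictly greater than two from every plus spin, each site within distance two of the strip carries spin $-1$ or $0$; in particular the two parallel lines $(x_0\pm 1,y)$, $y=y_1,\dots,y_1+l-1$, contain no plus, and all the energy costs encountered below can be read off the rows of Table~\ref{tab:heu000} having $0$ pluses among the neighbours. This is precisely where the distance-$>2$ hypothesis is used.

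First I would produce the configuration $\eta'$ obtained from $\eta$ by flipping to $-1$ all the zeros on one of these adjacent lines, say the line $x_0+1$, thereby thickening the $1\times l$ strip into a minus block. I carry out these flips one site at a time, starting from an endpoint and proceeding along the line. The first flip turns a zero having a single minus neighbour (the facing strip site) and otherwise zeros, so by the row $(1,3,0)$ of Table~\ref{tab:heu000} its cost is $2J-(\lambda-h)$. Every subsequent flip turns a zero having at least two minus neighbours — the facing strip site and the site just flipped — so by the rows $(2,2,0)$, $(3,1,0)$, $(4,0,0)$ each of the remaining $l-1$ flips lowers the energy (by at least $\lambda-h$). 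Hence
\begin{equation*}
H(\eta')-H(\eta)=\bigl[2J-(\lambda-h)\bigr]-(l-1)(\lambda-h)=2J-l(\lambda-h),
\end{equation*}
which is strictly negative precisely because $l>\tfrac{2J}{\lambda-h}$, so $\eta'\in\mathcal{I}_\eta$.

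Along this path the energy first rises by $2J-(\lambda-h)$ and then only decreases, so the maximal height is attained immediately after the first flip and equals $H(\eta)+2J-(\lambda-h)$. Consequently
\begin{equation*}
V_\eta\le\Phi(\eta,\mathcal{I}_\eta)-H(\eta)\le 2J-(\lambda-h)<2J,
\end{equation*}
which is the assertion.

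The only delicate points are geometric, and I expect them to be the sole real obstacle, the energetics being a direct copy of the supercritical-growth estimate. I must check that at least one adjacent line is available to grow into: the zero-boundary condition together with Condition~\ref{c:mod007} guarantees ample room inside $\Lambda$, since $l\le L$ is far smaller than $L$. If the chosen adjacent line already contains some minuses, the strip is embedded in a thicker minus region and the very same flips are only cheaper — the relevant zeros then possess two minus neighbours from the outset — so the bound $V_\eta<2J$ persists; alternatively one simply invokes the minus-cluster statement of Lemma~\ref{lemma:supercritical_cluster} for the cluster containing the strip. In all cases the maximal height along the constructed path stays strictly below $H(\eta)+2J$, completing the proof.
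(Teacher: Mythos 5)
Your proof is correct and takes essentially the same approach as the paper: the paper's own proof simply embeds the strip in a minus cluster and invokes Lemma~\ref{lemma:supercritical_cluster}, whose proof (for the minus case) is precisely the growth mechanism you spell out — one uphill flip of cost $2J-(\lambda-h)$ into the adjacent line of zeros, followed by a downhill sequence, with net energy change $2J-l(\lambda-h)<0$ since $l>\frac{2J}{\lambda-h}$. Your inlined version, together with the observation that minuses already present on the adjacent line only make the flips cheaper (or, alternatively, the direct appeal to Lemma~\ref{lemma:supercritical_cluster}), reproduces the paper's argument.
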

\begin{lemma}\label{lem:region_zero_minus_1}
Let $\eta \in \mathcal{M}_{n^+_c}$ be a configuration that contains a rectangle $R$ with side lengths $l_1,l_2 >\lfloor \frac{2J}{\lambda-h} \rfloor+2$ with inside no plus spins. Assume that $\eta_{R\setminus \partial^- R} \neq \muno_{R\setminus \partial^- R}$. If $\eta(x)=-1$ for at least a site $x\in R \setminus \partial^- R$, then there exists a configuration $\eta'\in\mathcal{M}_{n^+_c}$ such that $H(\eta')<H(\eta)$.
\end{lemma}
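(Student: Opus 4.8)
The plan is to produce $\eta'$ by modifying $\eta$ only inside the rectangle $R$. Since $R$ contains no pluses, any such modification leaves the total number of pluses equal to $n^+_c$, so that automatically $\eta'\in\mathcal{M}_{n^+_c}$; it therefore suffices to lower the energy by rearranging the zeros and minuses of $R$ while keeping $\eta$ fixed on $\Lambda\setminus R$. On the sites of $R$ the Hamiltonian \eqref{mod005} reduces, up to the additive constant coming from $\Lambda\setminus R$, to a two-dimensional Ising-type energy: every $(0,-)$ bond costs $J$, every minus carries the weak site term $-(\lambda-h)$, and the spins of $\partial^-R$ (which, being in $R$, are only zeros and minuses) act as a fixed boundary condition. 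By hypothesis the bulk $R\setminus\partial^-R$ carries at least one minus and, being plus-free and different from $\muno_{R\setminus\partial^-R}$, at least one zero, so the restricted configuration is not homogeneous.

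First I would look for a single downhill flip. If the cluster of minuses meeting the bulk is not a rectangle it has a concave corner, and the zero filling that corner has at least two minus neighbours; flipping it to minus costs, by the rows of Table~\ref{tab:heu000} with $n_+=0$ and at least two neighbouring minuses, an energy $-(\lambda-h)<0$, and the resulting $\eta'$ already proves the claim. The same conclusion holds whenever some bulk zero has two or more minus neighbours; scanning the bulk zeros and selecting the topmost and then leftmost one shows, in fact, that such a zero must exist as soon as the minus phase reaches the top or the left edge of the bulk.

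It remains to treat the case in which the minuses of the bulk form isolated rectangles embedded in a zero background, so that, as in the local-minimum discussion of Section~\ref{s:hlo} (rows $1,2,4$ of Table~\ref{tab:heu000}), no single flip is downhill. Here I would compare $\eta$ with the two homogeneous fillings of the bulk, $\eta_-$ (all minus) and $\eta_0$ (all zero). An isolated minus rectangle of sides $p\times q$ changes the energy relative to $\eta_0$ by $-(\lambda-h)pq+2J(p+q)$, which is positive, hence $\eta_0$ is strictly lower, whenever the rectangle is subcritical; dually, a zero droplet in a minus background changes the energy relative to $\eta_-$ by $+(\lambda-h)(\text{area})+J(\text{perimeter})>0$, so $\eta_-$ is strictly lower whenever the zeros are the thin minority phase. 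In either alternative $\min\{H(\eta_-),H(\eta_0)\}<H(\eta)$.

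The hard part will be the genuinely balanced, two-phase situation, in which a mixed boundary $\partial^-R$ pins a straight domain wall and $\eta$ restricted to the bulk could a priori be the constrained Ising ground state, so that no rearrangement inside the bulk helps. I would dispose of this by exploiting that $R$ lies in the sea of minuses, so that $\partial^-R$ is predominantly minus: with a minus boundary condition the weak field favours minus, and since any internal interface of length $\ell$ costs $J\ell$ while the field recovers at most $(\lambda-h)$ per enclosed site, the unique minimizer of the restricted energy on a box of side larger than $2J/(\lambda-h)$ is $\eta_-$. Then $H(\eta_-)<H(\eta)$, completing the argument. The threshold $l_1,l_2>\lfloor 2J/(\lambda-h)\rfloor+2$ is precisely what forces the minus phase to be energetically preferred in the regime $J\gg\lambda,h$, and is therefore the hypothesis that makes the comparison close.
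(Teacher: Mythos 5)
Your opening reduction (modifications confined to $R$ preserve the number of pluses, hence membership in $\mathcal{M}_{n^+_c}$), your concave-corner flip for non-rectangular minus clusters, and your removal of subcritical rectangles by comparison with the all-zero filling $\eta_0$ are all correct, and the first two coincide with the paper's case (i) and part of its case (ii). The genuine gap is in your last case, the one you yourself flag as hard: a minus rectangle in a zero background whose area term beats its perimeter term (so that $\eta_0$ is \emph{not} lower), sitting inside a bulk whose fixed boundary $\partial^-R$ is not predominantly minus. Your resolution rests on two claims that the hypotheses do not support. First, ``since $R$ lies in the sea of minuses, $\partial^-R$ is predominantly minus'': the lemma assumes nothing of the sort --- it only forbids pluses inside $R$; the spins on $\partial^-R$ may be any mixture of zeros and minuses, and nothing at all is assumed about $\Lambda\setminus R$. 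Second, ``the unique minimizer of the restricted energy on a box of side larger than $2J/(\lambda-h)$ is $\eta_-$'': this is false for a mostly-zero boundary. Take $\partial^-R$ all zero and side lengths $l_1=l_2=s$ with $2J/(\lambda-h)<s-2<4J/(\lambda-h)$; filling the bulk with minuses costs interface $\approx 4Js$ against a field gain $(\lambda-h)(s-2)^2<4J(s-2)$, so $H(\eta_-)>H(\eta_0)$ and the minimizer is $\eta_0$, not $\eta_-$. Thus $\min\{H(\eta_0),H(\eta_-)\}<H(\eta)$ is exactly the assertion that still needs a proof for arbitrary mixed boundaries, and your sketch does not supply one.

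The paper closes precisely this case by a local move your toolkit (single flips, $\eta_0$, $\eta_-$) lacks: it \emph{grows} the supercritical rectangle. By Lemma~\ref{lemma:supercritical_cluster} (same mechanism as its proof), flipping to minus the row of zeros adjacent to a side of length $l>2J/(\lambda-h)$ costs $2J-l(\lambda-h)<0$ in total (first flip $+2J-(\lambda-h)$, the remaining $l-1$ flips $-(\lambda-h)$ each, cf.\ Table~\ref{tab:heu000}); this is downhill \emph{irrespective of the composition of $\partial^-R$}, stays inside $R$ (so no plus spin interferes, which is what the distance-two hypothesis of that lemma encodes), and involves only zero-to-minus flips, hence stays in $\mathcal{M}_{n^+_c}$. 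Your global-comparison strategy could probably be repaired --- e.g.\ one can check that if $2J(l_1+l_2)-(\lambda-h)l_1l_2>0$ then every rectangle inside $R$ is removal-subcritical so $\eta_0$ wins, and otherwise argue for $\eta_-$ --- but that repair is a nontrivial argument about arbitrary zero/minus boundaries, which is exactly the missing step; the paper's shrink-or-grow dichotomy avoids it altogether.
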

\begin{lemma}\label{lem:region_zero_minus_2}
Let $\eta \in \mathcal{M}_{n^+_c}$ be a configuration that contains a rectangle $R$ with side lengths $l_1,l_2 >\lfloor \frac{4J}{\lambda-h} \rfloor+2$ with inside no plus spins. Let $S=R \setminus \partial^- R$ and $\eta(x)=0$ for every $x\in S$, then the configuration $\eta'\in \mathcal{M}_{n^+_c}$ such that $\eta'_{\Lambda \setminus S}=\eta_{\Lambda \setminus S}$ and $\eta'_{S}=\muno_{S}$ has $H(\eta')<H(\eta)$.
\end{lemma}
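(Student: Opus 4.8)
The plan is to compute directly the energy difference produced by flipping every zero of the bulk $S$ into a minus, and to show that the volumetric gain coming from the site terms strictly dominates the surface cost coming from the internal interaction term of \eqref{mod005}. Since this flip changes neither the number nor the positions of the plus spins, the configuration $\eta'$ automatically belongs to $\mathcal{M}_{n^+_c}$, so that point requires no argument.

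First I would isolate the three contributions to $H(\eta')-H(\eta)$. Every site $x\in S$ has all four nearest neighbours inside $R\subseteq\Lambda$, hence in the bulk of $\Lambda$; consequently the boundary interaction term of \eqref{mod005} is identical in $\eta$ and $\eta'$ and drops out. The site terms change by $-(\lambda-h)$ for each of the $|S|=(l_1-2)(l_2-2)$ flipped spins, since $-\lambda\eta(x)^2-h\eta(x)$ passes from $0$ to $-\lambda+h$. For the internal interaction it suffices to examine the affected bonds: bonds with both endpoints in $S$ go from $(0,0)$ to $(-1,-1)$ and cost zero in either configuration, while bonds with no endpoint in $S$ are untouched; thus only the bonds joining $S$ to $\partial^-R$ contribute.

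At this stage I would invoke the hypothesis that $R$ contains no plus spin: every neighbour in $\partial^-R$ of a flipped site is therefore a zero or a minus. A bond $(x,y)$ with $x\in S$ and $y\in\partial^-R$ changes its cost by $+J$ if $\eta(y)=0$ and by $-J$ if $\eta(y)=-1$. Writing $b_0$ and $b_-$ for the numbers of such bonds and recalling that their total is the bond perimeter of the rectangle $S$, namely $2(l_1-2)+2(l_2-2)$, I obtain
\begin{align*}
H(\eta')-H(\eta) &=J(b_0-b_-)-(\lambda-h)(l_1-2)(l_2-2)\\
&\le 2J\big[(l_1-2)+(l_2-2)\big]-(\lambda-h)(l_1-2)(l_2-2),
\end{align*}
the bound corresponding to the worst case $b_-=0$. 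I would stress here that the no-plus assumption is precisely what caps each surface bond at cost $J$: a plus neighbour would instead create a $(-,+)$ bond of cost $4J$ and spoil the estimate.

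The concluding step is the inequality $2J\big[(l_1-2)+(l_2-2)\big]<(\lambda-h)(l_1-2)(l_2-2)$. Setting $a=l_1-2$ and $b=l_2-2$, this reads $\tfrac{2J}{\lambda-h}<\tfrac{ab}{a+b}=\big(\tfrac1a+\tfrac1b\big)^{-1}$, a harmonic-mean estimate. The hypothesis $l_1,l_2>\lfloor 4J/(\lambda-h)\rfloor+2$ forces $a,b\ge\lfloor 4J/(\lambda-h)\rfloor+1>4J/(\lambda-h)$, so $\tfrac1a+\tfrac1b<\tfrac{\lambda-h}{2J}$ and the estimate holds strictly, giving $H(\eta')<H(\eta)$. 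The only delicate points are the bookkeeping of the surface bonds together with the remark that the no-plus hypothesis bounds their cost, and the floor arithmetic that converts the stated threshold $4J/(\lambda-h)$ into the strict bound $a,b>4J/(\lambda-h)$ needed to make the harmonic-mean inequality sharp; everything else is a routine computation.
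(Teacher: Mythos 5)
Your proof is correct and takes essentially the same route as the paper's: a direct computation of $H(\eta')-H(\eta)$, with the boundary and bulk terms cancelling, the surface bond cost bounded by $2J\big[(l_1-2)+(l_2-2)\big]$, and the volumetric gain $(\lambda-h)(l_1-2)(l_2-2)$ winning because $l_1-2,\,l_2-2>4J/(\lambda-h)$. Your bookkeeping is in fact slightly more careful than the paper's, which writes the worst case (all zero neighbours on $\partial^- R$, i.e.\ your $b_-=0$) as an equality; the only extra content in the paper's proof is a remark that $\Phi(\eta,\eta')<\Gamma^*$, which is not required by the statement as given.
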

\noindent Recalling the definition of the set $M$ in \eqref{def:set_minima},
\begin{lemma}\label{lem:zeros_column_L}
If $\eta \in \mathcal{M}_{n^+_c}$ is a configuration that contains at least a column (or a row) with only zero spins, then $\eta \not\in M$. 
\end{lemma}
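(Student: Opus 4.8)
The plan is to show that any $\eta\in\mathcal{M}_{n^+_c}$ carrying a full zero column is not a minimizer by exhibiting a competitor $\eta'\in\mathcal{M}_{n^+_c}$ with $H(\eta')<H(\eta)$; since both $\Lambda$ and \eqref{mod005} are invariant under exchanging rows and columns, the full--row case is identical and I treat a column. First I would record the bond--counting form of the energy on the manifold. Writing $n^0,n^-$ for the numbers of zeros and minuses, $b_{+-},b_{+0},b_{-0}$ for the numbers of internal bonds joining the indicated spin values, and $B_0$ for the number of external (zero--boundary) bonds incident to a zero site, \eqref{mod005} together with $n^++n^0+n^-=L^2$ gives, after discarding all terms depending only on $L$ and on the fixed value $n^+_c$,
\begin{equation}
H(\eta)=4J\,b_{+-}+J\,b_{+0}+J\,b_{-0}-J\,B_0+(\lambda-h)\,n^0+\mathrm{const}.
\end{equation}
Two features drive the argument: each zero costs $(\lambda-h)>0$ in the bulk term, while the plus--interface is uniformly bounded, $b_{+0}\le 4n^+_c$ with $n^+_c=l_c(l_c-1)=O\big((J/h)^2\big)$ by \eqref{eq:critical_length}. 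Hence $Jb_{+0}=O(J^3/h^2)$, which by the hypothesis \eqref{mod007b} is negligible with respect to $(\lambda-h)L>8J^3/(\lambda-h)^2\ge 8J^3/h^2$ in the region $\lambda/2<h<\lambda$.

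Next I would split on the maximal cluster of zeros $Z$ containing the given full column, relative to the threshold $\kappa:=\lfloor 4J/(\lambda-h)\rfloor+2$. If $Z$ contains a $\kappa\times\kappa$ all--zero square $R$, then $R$ has no plus inside and its bulk $R\setminus\partial^- R$ is entirely zero, so Lemma~\ref{lem:region_zero_minus_2} applies verbatim and yields $\eta'\in\mathcal{M}_{n^+_c}$ with $H(\eta')<H(\eta)$; this settles every case in which $Z$ is fat. In the complementary ``thin'' case one flips the full column (more generally the whole cluster $Z$, when $Z$ is itself thin) from zero to minus; this preserves $n^+_c$, so $\eta'\in\mathcal{M}_{n^+_c}$. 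For an interior column whose $2L$ horizontal neighbours split into $P$ pluses, $M$ minuses and $Z_h$ zeros, a direct bond count gives
\begin{equation}
H(\eta')-H(\eta)=-(\lambda-h)L-2JL+4JP+2JZ_h+2J,
\end{equation}
which is strictly negative whenever $Z_h$ is not close to $L$, because $P\le 4n^+_c\ll L$. When $Z$ is the column itself, maximality forces $Z_h=0$ and the estimate is immediate.

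The step I expect to be the main obstacle is the control of the boundary term $-JB_0$ for clusters that hug $\partial^-\Lambda$: a full zero column sitting on an edge of $\Lambda$ genuinely lowers the energy through its $L$ favourable boundary--zero bonds, so flipping it to minus naively incurs a $+JL$ penalty. The resolution I would use is the exact cancellation visible in the per--bond computation: when such a column is flipped, the $+JL$ boundary cost is matched by the $-JL$ saving coming from the minus--interface on its interior side (each boundary site turning from zero to minus now bonds, at no cost, with the minus it faces), leaving only $-(\lambda-h)L+O(Jn^+_c)<0$. Making rigorous that a thin zero cluster adjacent to the edge necessarily faces minuses---rather than pluses, whose total interface is bounded by $4n^+_c$, or further zeros, which would return us to the fat case covered by Lemma~\ref{lem:region_zero_minus_2}---is the delicate bookkeeping; once it is in place, in every case $H(\eta')<H(\eta)$, whence $\eta\notin M$ by \eqref{def:set_minima}.
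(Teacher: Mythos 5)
Your strategy is genuinely different from the paper's: you build an explicit lower--energy competitor by flipping zeros to minuses, organized by a fat/thin dichotomy on the maximal zero cluster $Z$, whereas the paper argues by contradiction, using Lemmas \ref{lem:region_zero_minus_1}, \ref{lem:region_zero_minus_2}, Corollary \ref{lem:sequence_of_minuses} and Lemma \ref{lemma:subcritical_cluster} to force any $\eta\in M$ to be minus everywhere outside a rectangle of side at most $n^+_c\big(\lfloor 4J/(\lambda-h)\rfloor+2\big)$ containing all the pluses, so that a full zero column of length $L$ simply has no room to exist. Your route is plausible, and your bond form of $H$ and your single--column formula $-(\lambda-h)L-2JL+4JP+2JZ_h+2J$ are both correct; however, as written the argument has a genuine gap exactly at the point you defer as ``delicate bookkeeping''.

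The gap is the intermediate case: thin clusters strictly larger than the column. First, by maximality of $Z$ no site of $\Lambda$ facing $Z$ is a zero, so your trichotomy ``faces minuses / faces pluses / faces further zeros, which returns us to the fat case'' is ill--posed: ``faces further zeros'' can only mean $Z$ is strictly larger than the column, and that does \emph{not} make $Z$ fat. Thin clusters much larger than one column are perfectly possible --- $Z$ equal to the union of the all--zero column with the whole one--site boundary layer of $\Lambda$, a block of several adjacent all--zero columns, or comb-- and L--shaped slivers --- and for these neither of your two worked cases applies: the single--column flip fails because $Z_h$ can be of order $2L$, and no energy estimate is ever given for flipping all of $Z$. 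That estimate is the actual content of your approach: writing $b_{Z,\pm}$ for the bonds from $Z$ to pluses/minuses and $B_Z$ for the external--boundary bonds of $Z$, one has
\begin{equation}
H(\eta')-H(\eta)=-(\lambda-h)|Z|+3J\,b_{Z,+}-J\,b_{Z,-}+J\,B_Z,
\end{equation}
and the dangerous combination $J(B_Z-b_{Z,-})$ must be tamed by a pairing argument (each row or column whose lattice--edge endpoint lies in $Z$ either lies entirely in $Z$, contributing $L$ sites to $|Z|$, or produces a compensating $(0,\pm)$ bond on that line); nothing of this kind appears in your text, and the claimed ``exact cancellation'' is only verified in the one example where the inner neighbours of an edge column are all minuses. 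A secondary but real issue is that even in your handled cases the crude bounds do not close uniformly on $\lambda/2<h<\lambda$: you need $(\lambda-h)L$ to beat terms of size $cJn^+_c\approx cJ^3/h^2$ with $c$ around $16$, while Condition \ref{c:mod007} only guarantees $(\lambda-h)L>8J^3/(\lambda-h)^2$, and for $h$ close to $\lambda/2$ (where $h/(\lambda-h)\to 1$) this is weaker than what you use; a sharper bound on $b_{Z,+}$ or a more careful use of the bulk term is needed.
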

\begin{lemma}\label{lem:cluster_pluses_different_quasi_square}
    Let $\eta \in \mathcal{M}_{n^+_c}$ be a configuration that contains a cluster of pluses with a shape different from a quasi-square with side lengths $l_c$ and $l_c-1$. Then $\eta \not \in M$.
\end{lemma}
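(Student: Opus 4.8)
The plan is to exploit that on the manifold $\mathcal{M}_{n^+_c}$ the number of pluses is frozen, so minimizing $H$ reduces to a purely geometric (isoperimetric) problem for the cluster of pluses. Let $n_{+-},n_{+0},n_{-0}$ denote the numbers of nearest–neighbour bonds of type $(+,-),(+,0),(-,0)$, where the zero-boundary conditions are treated as fixed external zeros so that the boundary interaction term of \eqref{mod005} is absorbed into these counts; each edge of a plus cluster then contributes to $n_{+0}$ whether it faces an internal zero or the external boundary. Reading off the bond costs of Table~\ref{tab:heu000} one gets
\[
H(\eta)=J\,(4n_{+-}+n_{+0}+n_{-0})-(\lambda+h)N_+-(\lambda-h)N_-,
\]
with $N_+,N_-$ the numbers of plus and minus spins. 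On $\mathcal{M}_{n^+_c}$ one has $N_+=n^+_c$ and $N_-=L^2-n^+_c-N_0$, with $N_0$ the number of zeros, so up to an additive constant
\[
H(\eta)=J\,(4n_{+-}+n_{+0}+n_{-0})+(\lambda-h)\,N_0+\mathrm{const}.
\]
Hence $\eta\in M$ if and only if $\eta$ minimizes $F(\eta):=J(4n_{+-}+n_{+0}+n_{-0})+(\lambda-h)N_0$.

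First I would reduce a minimizer to a normal form. Using Lemmas~\ref{lem:region_zero_minus_1} and~\ref{lem:region_zero_minus_2} (no minus spin is trapped inside a large plus-free region, and no superfluous region of zeros survives) together with Lemma~\ref{lem:zeros_column_L} and the bond costs of Table~\ref{tab:heu000}, a configuration in $M$ must have the pluses arranged in one or more clusters, each wrapped in a zero layer of width one and immersed in a sea of minuses; in particular the minus sea is separated from the pluses by the zero buffer, so $n_{+-}=0$ (any residual $(+,-)$ bond would be removable by the local moves of Lemma~\ref{lemma:bond_minus_plus} while remaining in $\mathcal{M}_{n^+_c}$, strictly lowering $F$ because $J\gg\lambda,h$). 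With this normal form, $n_{+0}$ equals exactly the total perimeter of the plus region and is \emph{independent of the position} of the clusters, while $N_0$ and $n_{-0}$ are the size and outer perimeter of the enveloping zero layer.

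The core is then an isoperimetric comparison. Since $J\gg\lambda,h$, the term $J\,n_{+0}=J\cdot(\text{perimeter of the plus region})$ dominates $F$, and the remaining contributions $J\,n_{-0}+(\lambda-h)N_0$ are themselves non-decreasing in the plus perimeter; consequently $F$ is strictly increasing in the total plus perimeter, each (necessarily even) unit increase raising $F$ by at least $2J-C(\lambda,h)>0$. Thus a minimizer must realize the minimal perimeter of a polyomino of area $n^+_c=l_c(l_c-1)$: several clusters only increase the total perimeter, so the minimizer has a single plus cluster, and the discrete isoperimetric inequality for polyominoes on $\mathbb{Z}^2$ gives that the minimal perimeter of area $l_c(l_c-1)$ equals $2(2l_c-1)$ and, since the area has the form $k(k+1)$, is attained \emph{uniquely} by the $(l_c-1)\times l_c$ rectangle. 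Any plus cluster whose shape differs from this quasi-square therefore has perimeter at least $2(2l_c-1)+2$, hence $F$ strictly larger than that of the quasi-square configuration $\sigma_c\in\mathcal{M}_{n^+_c}$, so $\eta\notin M$; this is exactly the contrapositive of the statement.

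The two delicate points are as follows. First, the reduction to the normal form: because the model has three spin values, one must show that the buffer of zeros between pluses and minuses is exactly of width one and defect-free, which is the content imported from Lemmas~\ref{lem:region_zero_minus_1}--\ref{lem:zeros_column_L} and requires checking, via Table~\ref{tab:heu000}, that neither thickening nor thinning the zero layer, nor trapping a minus, lowers the energy. Second, and more substantively, one must secure the \emph{strict} isoperimetric uniqueness for the particular area $l_c(l_c-1)$ and verify that the subleading terms $(\lambda-h)N_0+J\,n_{-0}$ cannot compensate the $2J$ penalty of any shape of non-minimal perimeter; the gap $2J-C(\lambda,h)>0$ follows from $J\gg\lambda,h$, but the bound $C(\lambda,h)$ on the variation of the low-order terms must be made explicit in terms of the cluster geometry. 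These are the heart of the argument, the remainder being the bond bookkeeping of Table~\ref{tab:heu000}.
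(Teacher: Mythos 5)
Your starting point is correct and overlaps with the paper: on $\mathcal{M}_{n^+_c}$ one indeed has $H(\eta)=J(4n_{+-}+n_{+0}+n_{-0})+(\lambda-h)N_0+\mathrm{const}$, and the isoperimetric input from \cite{alonso1996three} (minimal perimeter $2(2l_c-1)$ for area $l_c(l_c-1)$, attained only by the quasi-square) is the same one the paper invokes. The first genuine gap is your ``normal form'' reduction. The structure you assume for a minimizer --- plus clusters each wrapped in a width-one zero layer, immersed in a sea of minuses, with $n_{+-}=0$ --- is not delivered by the lemmas you cite. Lemmas \ref{lem:region_zero_minus_1} and \ref{lem:region_zero_minus_2} only constrain plus-free \emph{rectangles} with sides larger than $\lfloor 2J/(\lambda-h)\rfloor+2$, so they say nothing about the zeros within distance of order $J/(\lambda-h)$ of a cluster: a mesoscopic blob of zeros around the pluses is not excluded, and it cannot be removed by single spin flips, since flipping a zero with four zero neighbours to minus \emph{raises} the energy by $4J-(\lambda-h)$ (row 11 of Table~\ref{tab:heu000}). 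The claim $n_{+-}=0$ via Lemma~\ref{lemma:bond_minus_plus} also does not serve the purpose: the downhill moves of that lemma may flip pluses to zero, hence leave $\mathcal{M}_{n^+_c}$, whereas membership in $M$ is decided by comparison \emph{within the manifold}. Most importantly, in the paper the normal form (single cluster, width-one layer, minus sea) is precisely the content of Lemmas \ref{lem:min_crit_quad_no_boundary} and \ref{lem:sigma_c_minimum}, which are proved \emph{after} and \emph{by means of} the present lemma; assuming it here is circular.

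The second gap is in the isoperimetric step itself. The quantity $Jn_{-0}+(\lambda-h)N_0$ is not a function of the plus perimeter, and its variation between competitors comes in integer multiples of $J$, not in a quantity $C(\lambda,h)$ that is small because $J\gg\lambda,h$. Writing $D$ for the union of the pluses and zeros and $e(D)$ for the number of bonds between $D$ and $\mathbb{Z}^2\setminus\Lambda$, one has, in your normal form, $n_{-0}=4L+\mathrm{per}(D)-2e(D)$: the term $-2Je(D)$ is an order-$J$ \emph{reward} for contact with the boundary of $\Lambda$ --- the very mechanism of heterogeneous nucleation in this paper --- and a non-quasi-square cluster hugging the corner could a priori use it to offset the $2J$ isoperimetric penalty. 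To close the argument one must prove, say, $\mathrm{per}(D)\ge 2e(D)$ for sets too small to span $\Lambda$, together with $(\lambda-h)(2l_c-1)<2J$, which is exactly where the restriction $h>\lambda/2$ enters; neither appears in your proposal, and both are hidden inside the unproven ``$C(\lambda,h)$''. The paper's proof avoids both difficulties: it never reduces to a normal form, but classifies all columns and rows of $\Lambda$ by their bond content (at least $4J$ for a line meeting a plus with no $(+,-)$ bond, at least $6J$ when a $(+,-)$ bond is present, at least $2J$ otherwise), feeds the isoperimetric bound of \cite{alonso1996three} into the number of lines meeting pluses, and then splits according to whether $\eta$ has more or fewer zeros than $\sigma_c$; the few-zeros case, where your width-one layer simply cannot exist, is exactly the case where a $(+,-)$ bond is forced and its extra $2J$ absorbs the unfavourable $(\lambda-h)$ terms.
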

 \begin{lemma}\label{lem:min_crit_quad_no_boundary}
If $\eta \in \text{argmax}_{\xi \in \mathcal{M}_{n^+_c}}H(\xi)$, then $\eta_Q=+1_Q$ and $\eta_{{\Lambda \setminus (Q \cap \partial^+ Q)}}=-1_{\Lambda \setminus (Q \cap \partial^+ Q)}$.
 \end{lemma}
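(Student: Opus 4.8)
The plan is to exploit the structural lemmas already proven to reduce the identification of a minimizer $\eta$ of $H$ over the manifold $\mathcal{M}_{n^+_c}$ (i.e.\ an element of the set $M$ in~\eqref{def:set_minima}) to two local questions: what the spins \emph{adjacent} to the plus cluster must be, and what the spins \emph{far} from it must be. The key simplification is that every move I use flips a spin only between $-1$ and $0$; such moves preserve the number of pluses, so every competitor stays in $\mathcal{M}_{n^+_c}$ and any strict energy decrease immediately contradicts minimality. By Lemma~\ref{lem:cluster_pluses_different_quasi_square} I may assume that the pluses of $\eta$ form quasi-squares of side lengths $l_c$ and $l_c-1$ (otherwise $\eta\notin M$ and there is nothing to prove), and since the plus count is exactly $n^+_c=l_c(l_c-1)$ this forces a single cluster $Q$. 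Thus $\eta_Q=+1_Q$ is granted and it remains to determine $\eta$ on $\Lambda\setminus Q$.

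First I would show that every site of $\partial^+ Q$ carries a zero. Each such site is a nearest neighbour of the all-plus set $Q$ and is itself not plus (the pluses are exactly the maximal cluster $Q$), so it is either $0$ or $-1$. If it were $-1$ it would lie on a $(+,-)$ bond, and flipping it to $0$ changes the energy by $(\lambda-h)-3J\,(\#\text{plus neighbours})+J\,(\#\text{minus neighbours})-J\,(\#\text{zero neighbours})$, which is negative because $J\gg\lambda,h$ and there is at least one plus neighbour; this contradicts minimality (equivalently one invokes the downhill move of Lemma~\ref{lemma:bond_minus_plus}). Hence $\eta(i)=0$ for all $i\in\partial^+ Q$.

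Next I would show that $\eta(i)=-1$ for every $i\in\Lambda\setminus(Q\cup\partial^+ Q)$. Here the mechanism is that in the regime $\lambda>h$ a zero with \emph{no} plus neighbour is disfavoured against a minus: flipping such a zero to $-1$ changes the energy by $(h-\lambda)+J\,(\#\text{zero neighbours}-\#\text{minus neighbours})$, which is negative whenever it does not have strictly more zero than minus neighbours. In particular the four frame corners (diagonal to the corners of $Q$, hence with no plus neighbour and two zero and two minus neighbours) are already driven to minus, which is exactly the \emph{chopped corner} feature. To eliminate the remaining zeros, which may cluster so that no single flip helps, I would appeal to the collective arguments: Lemma~\ref{lem:zeros_column_L} forbids a full zero column or row in a minimizer, Lemma~\ref{lem:region_zero_minus_2} replaces a large all-zero rectangle by minuses with a strict energy gain, and Lemma~\ref{lem:region_zero_minus_1} removes the mixed zero/minus rectangles still carrying minuses in their bulk. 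Iterating these manifold-preserving reductions drives every zero outside $\partial^+ Q$ to $-1$, so a genuine minimizer can have none, giving $\eta_{\Lambda\setminus(Q\cup\partial^+ Q)}=-1_{\Lambda\setminus(Q\cup\partial^+ Q)}$.

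The main obstacle is this last step. Individually a zero-to-minus flip need not lower the energy (a zero buried in zeros gains $+J$ per zero neighbour), so one cannot argue site by site; the region lemmas have to be combined carefully, and three populations of zeros must be kept apart: the zeros of $\partial^+ Q$, which are genuinely present and are \emph{protected} by the $+3J$ cost of exposing a plus to a minus; the frame corners and the bulk zeros, which must become minus; and the sites near $\partial^-\Lambda$, where the zero-boundary interaction term changes the bond counts and hence the signs in the estimates above. Organising these cases into a single chain of strict, plus-count-preserving improvements terminating at the asserted configuration is the crux of the argument.
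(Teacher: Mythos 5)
Your core argument is the paper's own proof: Lemma~\ref{lem:cluster_pluses_different_quasi_square} gives the single quasi-square $Q$ of pluses (your counting remark that $n^+_c=l_c(l_c-1)$ forces a \emph{single} cluster is the right justification), and the region Lemmas~\ref{lem:region_zero_minus_1} and~\ref{lem:region_zero_minus_2}, applied in $\Lambda\setminus Q$ together with Lemma~\ref{lem:zeros_column_L}, force $\eta$ to equal $-1$ on $\Lambda\setminus(Q\cup\partial^+Q)$. The paper's proof is exactly this two-step reduction, and it is no more explicit than you are about how the region lemmas are iterated, so on the content of the stated lemma you are aligned with the paper.

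One caveat. Your first step --- that every site of $\partial^+Q$ carries a zero --- is not part of this lemma (the paper establishes it at the beginning of the proof of Lemma~\ref{lem:sigma_c_minimum}), and as written it is incorrect. A minus in $\partial^+Q$ has exactly one plus neighbour (since $Q$ is a rectangle), and if its other three neighbours are all minuses, then by row~3 of Table~\ref{tab:heu000} flipping it to zero \emph{costs} $\lambda-h>0$; indeed your own formula evaluates to $(\lambda-h)-3J+3J=\lambda-h$ in that case, so the assertion that the flip is downhill ``because there is at least one plus neighbour'' fails. The paper circumvents precisely this obstruction with a two-site move: flip two adjacent minuses of $\partial^+Q$ to zero, for a total energy change $-2J+2(\lambda-h)<0$. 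Note also that invoking Lemma~\ref{lemma:bond_minus_plus} is not ``equivalent'' in this setting: the downhill path it provides may flip a plus to zero (row~9 of Table~\ref{tab:heu000}), which changes the number of pluses and exits the manifold $\mathcal{M}_{n^+_c}$, so it cannot be used to contradict minimality within $\mathcal{M}_{n^+_c}$. Since the statement you were asked to prove says nothing about $\partial^+Q$, this flaw sits in a superfluous step and does not undermine your proof of the lemma itself, but it would become a genuine gap if you reused the same single-flip argument where it actually matters, namely in Lemma~\ref{lem:sigma_c_minimum}.
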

\begin{lemma}\label{lem:sigma_c_minimum}
$H(\sigma_c)=\min_{\xi \in \mathcal{M}_{n^+_c}} H(\xi)$.
\end{lemma}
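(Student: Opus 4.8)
The plan is to prove the stronger statement that $\sigma_c$ is, up to the lattice symmetries that select one of the four corners and one of the two admissible protuberance sides, the \emph{unique} minimizer of $H$ on $\mathcal{M}_{n^+_c}$; this gives at once $H(\sigma_c)=\min_{\xi\in\mathcal{M}_{n^+_c}}H(\xi)$. To this end I would fix an arbitrary $\eta\in M$ (a configuration realizing the minimum, with $M$ as in \eqref{def:set_minima}) and progressively rigidify its geometry, feeding in the auxiliary lemmas one at a time, until $\eta$ is forced to coincide with $\sigma_c$.

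First I would pin down the plus phase. Since the number of pluses is frozen at $n^+_c=l_c(l_c-1)$, the only freedom lies in the arrangement of the pluses and in the placement of the zeros and minuses. By Lemma~\ref{lem:cluster_pluses_different_quasi_square}, any cluster of pluses whose shape is not the quasi-square with side lengths $l_c-1$ and $l_c$ forces $\eta\notin M$; since the total plus count is exactly $l_c(l_c-1)$, a splitting into two or more clusters would leave at least one cluster too small to be this quasi-square, so a minimizer must carry a single cluster of pluses occupying a quasi-square rectangle $Q$ of side lengths $l_c-1$ and $l_c$.

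Second I would pin down the zero and minus phases. In the regime $\lambda>h>0$ a minus site has site energy $-(\lambda-h)<0$ while a zero site has site energy $0$, so minuses are strictly cheaper than zeros wherever they are admissible; on the other hand a direct $(+,-)$ bond is expensive and, by Lemma~\ref{lemma:bond_minus_plus}, can always be removed along a downhill path, so every plus must be shielded from the minuses by zeros. The combination ``keep the separating zeros but waste none of them'' is exactly what Lemmas~\ref{lem:region_zero_minus_1} and \ref{lem:region_zero_minus_2}, together with Corollary~\ref{lem:sequence_of_minuses} and Lemma~\ref{lem:zeros_column_L}, enforce: a sufficiently large plus-free rectangle whose interior is not entirely minus, or whose interior is entirely zero, can be strictly lowered, and no full zero column or row can survive in a minimizer. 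Feeding these facts into Lemma~\ref{lem:min_crit_quad_no_boundary} rigidifies the minimizer to $\eta_Q=\puno_Q$, zeros exactly on $\partial^+Q$, and $\eta=\muno$ on all remaining sites.

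With the phase structure fixed, the only remaining degree of freedom is the position of $Q$ in $\Lambda$, and $H(\eta)-H(\muno)$ then depends on this position only through the number of separating zeros $|\partial^+Q\cap\Lambda|$ (each zero turned into a minus far from the pluses lowering the energy by $\lambda-h>0$). A quasi-square placed in the bulk needs a zero layer along all four sides, whereas one lodged in a corner of $\Lambda$ needs zeros only along the two sides not lying on $\partial^-\Lambda$: the zero-boundary condition supplies the separating ``zero'' for the two boundary sides through the boundary interaction term of \eqref{mod005}. Hence the corner placement strictly minimizes the number of zeros, and carrying out the bond count one recognizes the resulting configuration as $\sigma^F_{l_c-1,l_c}=\sigma_c$, whose energy is \eqref{eq:difference_energy_frame_rectangle} with $m=l_c-1$ and $n=l_c$; therefore $\sigma_c\in M$ and the claim follows. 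The hard part will be precisely this last step: converting the soft principle ``corner placement is best'' into a strict inequality requires tracking simultaneously the internal interaction bonds, the boundary interaction term, and the site terms as $Q$ is translated from the bulk to an edge and then into a corner, and ruling out intermediate competitors (a quasi-square abutting a single side of $\partial\Lambda$, or one with a partially collapsed zero layer). It is exactly this bookkeeping that makes the preceding lemmas indispensable as inputs; once the phase structure is rigid, the energy becomes an explicit function of the zero count and the inequality is driven by $\lambda>h>0$.
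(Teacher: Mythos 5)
Your overall architecture does match the paper's proof (rigidify the pluses into a single quasi-square $Q$ via Lemma~\ref{lem:cluster_pluses_different_quasi_square}, rigidify the exterior via Lemma~\ref{lem:min_crit_quad_no_boundary}, then compare positions of $Q$), but there is a genuine gap at the step where you assert that a minimizer has ``zeros exactly on $\partial^+Q$''. Lemma~\ref{lem:min_crit_quad_no_boundary} does not give this: it fixes $\eta_Q=\puno_Q$ and $\eta=\muno$ outside $Q\cup\partial^+Q$, and leaves the collar $\partial^+Q\cap\Lambda$ completely undetermined. Your substitute justification --- that $(+,-)$ bonds can be removed along a downhill path by Lemma~\ref{lemma:bond_minus_plus}, hence every plus is shielded by zeros --- fails for two reasons. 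First, $M$ is the set of minimizers \emph{within the manifold} $\mathcal{M}_{n^+_c}$ of fixed plus number; the downhill path of Lemma~\ref{lemma:bond_minus_plus} may flip a plus to zero (this is exactly what its proof does in the hard case), which exits $\mathcal{M}_{n^+_c}$ and therefore does not contradict minimality of $\eta$ in the manifold. Second, in the critical competitor --- the collar entirely filled with minuses, i.e.\ the plus quasi-square immersed directly in the minus sea --- \emph{no single} within-manifold spin flip lowers the energy: flipping a collar minus (three minus neighbors, one plus neighbor) to zero costs $+(\lambda-h)>0$ by row 3 of Table~\ref{tab:heu000}. So this configuration cannot be discarded by any of the arguments you invoke.

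The paper closes precisely this hole with a short two-case argument that your proposal is missing: if the collar contains a zero adjacent to a minus, flipping that minus to zero gains $-2J+(\lambda-h)<0$ (row 5), staying in $\mathcal{M}_{n^+_c}$; if the collar is all minuses, one flips \emph{two adjacent} collar minuses to zero, paying $+(\lambda-h)$ (row 3) and then gaining $-2J+(\lambda-h)$ (row 5), for a net change $-2J+2(\lambda-h)<0$, again within the manifold. Only after this is the minimizer known to be a frame, chopped boundary frame or chopped corner frame, and the positional comparison \eqref{eq:difference_energy_frame} identifies $\sigma_c$. A secondary inaccuracy in your last step: once the phase structure is fixed, the position-dependence of the energy is \emph{not} ``only through the number of separating zeros''; the bond structure changes as well (e.g.\ $\Delta_\textup{a}-\Delta_\textup{d}=4J\ell+8J+2\ell(\lambda-h)$ contains $J$-terms, because bulk placement requires both $(+,0)$ and $(0,-)$ bonds where corner placement uses single boundary bonds). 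You partly concede this when you describe the required bookkeeping, but the claim as stated is wrong, and the clean way out is exactly the paper's: quote the already computed energies \eqref{eq:energy_frame} and differences \eqref{eq:difference_energy_frame}.
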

\begin{lemma}\label{lem:path_sigma_c}
    If $\underline{\omega}$ is a path from $\sigma_c$ to $\mathcal{M}_{n^+_c+1}$ such that $\underline{\omega}=(\sigma_c, \eta_1, \eta_2, ... \eta_n)$, $n \geq 1$, with $\eta_n \in \mathcal{M}_{n^+_c+1}$ and $\eta_i \in \mathcal{M}_{n^+_c}$ for every $i=1,...,n-1$, then $\Phi(\underline{\omega}) \geq H(\tilde \sigma_c)$.
\end{lemma}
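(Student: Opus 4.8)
The plan is to rest everything on a single structural observation---that the set of plus spins is frozen along the whole path except at the very last step---and then to reduce the statement to a purely local estimate on that last step, to be read off from Table~\ref{tab:heu000}.

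First I would record the \emph{cluster-invariance} property. The configurations $\sigma_c=\eta_0,\eta_1,\dots,\eta_{n-1}$ all lie in $\mathcal{M}_{n^+_c}$, and two communicating configurations in the same manifold $\mathcal{M}_{n^+_c}$ necessarily differ by a flip that does \emph{not} change the number of pluses, i.e.\ a $0\leftrightarrow-1$ flip. Hence the plus-set is constant along $\eta_0,\dots,\eta_{n-1}$ and equals the plus quasi-square $Q$ of $\sigma_c$, and by Lemma~\ref{lem:sigma_c_minimum} we have $H(\eta_i)\ge H(\sigma_c)$ for every $i\le n-1$. At the final step $\eta_{n-1}\to\eta_n$ a single site $x\notin Q$ is flipped to $+1$; since $Q$ is a solid (quasi-)rectangle, every site of $\partial^+Q$ touches $Q$ on at most one side, so $x$ has $p\in\{0,1\}$ plus neighbours.

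Writing $\Delta:=H(\eta_{n-1})-H(\sigma_c)\ge0$ and letting $\delta$ be the energy cost of the last flip, one has $H(\eta_n)=H(\sigma_c)+\Delta+\delta$, and since $\Phi(\underline{\omega})\ge\max\{H(\eta_{n-1}),H(\eta_n)\}$ the claim $\Phi(\underline{\omega})\ge H(\tilde\sigma_c)=H(\sigma_c)+2J-2h$ (use \eqref{eq:energy_sigma_protuberance}) follows once we show $\Delta+[\delta]_+\ge 2J-2h$. Inspecting the \emph{minus to plus} and \emph{zero to plus} columns of Table~\ref{tab:heu000} under $\lambda>h>0$ and $J\gg\lambda,h$, and using $p\le1$ to discard all rows with two, three or four plus neighbours, the only last flips with $\delta<2J-2h$ are: (i) a $0\to+1$ flip with neighbours $(3\text{ zeros},1\text{ plus})$, giving $\delta=2J-\lambda-h$; and the $-1\to+1$ flips with neighbours $(4\text{ zeros})$, $(1\text{ minus},2\text{ zeros},1\text{ plus})$, or $(3\text{ zeros},1\text{ plus})$, giving $\delta\in\{-2h,\,-2h,\,-2J-2h\}$. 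For every \emph{other} admissible flip $\delta\ge2J-2h$, whence $H(\eta_n)\ge H(\tilde\sigma_c)$ and the bound is immediate.

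It remains to treat the four ``cheap'' cases, and this is the main obstacle: for each I would show that the prescribed neighbourhood of $x$ forces $\eta_{n-1}$ to carry enough excess energy, namely $\Delta\ge(2J-2h)-[\delta]_+$. In case (i) the plus neighbour of $x$ lies in $Q$, while in $\sigma_c$ the corresponding frame site borders a minus on its outer side; for $x$ to gain a \emph{third} zero neighbour that outer sea-site must already be a zero in $\eta_{n-1}$, i.e.\ $\eta_{n-1}$ has a zero where $\sigma_c$ has a minus. Flipping that displaced zero back to minus either lowers the energy by at least $\lambda-h$ (Table~\ref{tab:heu000}), giving $\Delta\ge\lambda-h=(2J-2h)-\delta$ after comparison with $H(\sigma_c)$, or $x$ sits inside a larger zero region, in which case Lemmas~\ref{lem:region_zero_minus_1}, \ref{lem:region_zero_minus_2} and \ref{lem:zeros_column_L} already force an even larger $\Delta$; the value $H(\tilde\sigma_c)$ of \eqref{eq:energy_sigma_protuberance} is then attained exactly by the protuberance configuration. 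In the three cases where $x$ was a minus, the very presence of several zero neighbours with little or no plus contact---and, whenever $x$ touches $Q$, of a plus--minus bond of cost $3J$---produces, through the bond-counting identity for $H$ at fixed plus-set (as in Lemma~\ref{lem:min_crit_quad_no_boundary}) together with the region estimates above, the stronger bound $\Delta\ge2J-2h$, so that $H(\eta_{n-1})\ge H(\tilde\sigma_c)$ directly. Converting these local neighbour constraints into global lower bounds on $\Delta$---ruling out that a compensating rearrangement of zeros and minuses elsewhere hides the cost---is the only delicate point, and it is precisely what the structural lemmas preceding this statement are designed to supply.
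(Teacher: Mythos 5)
Your reduction is sound and, up to the final step, follows the same route as the paper: the plus--set is frozen along $\eta_0,\dots,\eta_{n-1}$ because flips inside $\mathcal{M}_{n^+_c}$ can only exchange zeros and minuses, Lemma~\ref{lem:sigma_c_minimum} gives $H(\eta_i)\ge H(\sigma_c)$, and everything boils down to the bound $\Delta+[\delta]_+\ge 2J-2h$ for the last flip. Your enumeration of the ``cheap'' flips is also essentially right, and in fact more careful than the paper's proof, which only discusses the $0\to+1$ flip at a zero with three zero neighbours and one plus neighbour (small slip: the $-1\to+1$ flip with that neighbourhood costs $-4J-2h$, not $-2J-2h$). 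The genuine gap is in the treatment of the cheap cases, which is precisely the mathematical content of the lemma. For case (i) you need the quantitative bound $\Delta\ge\lambda-h$, and your dichotomy is not exhaustive: the displaced zero may have exactly one minus neighbour (for instance when a further neighbouring sea--site is also zero, or when it lies in a strip of zeros running along $\partial^-\Lambda$, whose external phantom spins count as zeros), in which case flipping it back to minus \emph{raises} the energy by $2J-(\lambda-h)$, while Lemmas~\ref{lem:region_zero_minus_1}, \ref{lem:region_zero_minus_2} and \ref{lem:zeros_column_L} are inapplicable because their hypotheses require plus--free rectangles with sides of order $J/(\lambda-h)$ or an entire column of zeros. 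Moreover, even where those lemmas do apply they yield only qualitative conclusions ($\eta\notin M$, or the existence of \emph{some} configuration of lower energy); none of them produces the specific gap $\lambda-h$, and passing from ``$H(\eta_{n-1})>H(\sigma_c)$'' to ``$H(\eta_{n-1})\ge H(\sigma_c)+(\lambda-h)$'' needs an additional argument (a discreteness or bond--counting estimate at fixed plus--set, with control on how many extra minuses $\eta_{n-1}$ may contain) that you do not supply.

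The same criticism applies with more force to cases (ii)--(iv), where the required bound is the much larger $\Delta\ge 2J-2h$ and you offer only a one--sentence gesture at ``the bond--counting identity together with the region estimates''. For comparison, the paper's proof handles the crux as follows: the only way to add a plus at cost smaller than $2J-2h$ above the current configuration is at a zero with three zero neighbours and one plus neighbour; $\sigma_c$ contains no such zero, since its frame zeros have one plus and at most two zero neighbours; hence before the plus is created the path must flip a minus to zero, which costs at least $\lambda-h$, so that $\Phi(\underline{\omega})\ge H(\sigma_c)+(\lambda-h)+2J-(\lambda+h)=H(\tilde\sigma_c)$ by \eqref{eq:energy_sigma_protuberance}. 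Your scheme, once the four quantitative lower bounds on $\Delta$ are actually proved, would be a complete argument (indeed tighter than the paper's, which ignores the $-1\to+1$ flips altogether); as written, however, that step is asserted rather than proved, and it is the whole point of the lemma.
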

\begin{lemma}\label{lem:step_1_minmax}
    If $\underline{\omega}$ is a path from $\mathcal{M}_{n^+_c}$ to $\mathcal{M}_{n^+_c+1}$, then $\Phi(\underline{\omega}) \geq H(\tilde \sigma_c)$.
\end{lemma}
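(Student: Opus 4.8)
The plan is to collapse the statement about an arbitrary path into a local statement about a single spin flip, and then to win that local statement by a short classification of the cost of creating a plus. First I would track the number of plus spins $N(\omega_t)$ along $\underline{\omega}=(\omega_1,\dots,\omega_n)$. A single flip changes $N$ by at most one, and $N(\omega_1)=n^+_c$ while $N(\omega_n)=n^+_c+1$, so a discrete intermediate--value argument forces the path to contain at least one \emph{crossing flip}, i.e.\ an index $t$ with $\omega_t\in\mathcal{M}_{n^+_c}$ and $\omega_{t+1}\in\mathcal{M}_{n^+_c+1}$; necessarily this flip creates a plus at some site $x$. It is also worth recording that every flip internal to $\mathcal{M}_{n^+_c}$ is a $0\leftrightarrow -1$ flip, hence leaves the set of plus sites fixed, which makes the neighbourhood of $x$ in $\omega_t$ unambiguous. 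Since $\Phi(\underline{\omega})\ge\max\{H(\omega_t),H(\omega_{t+1})\}$, it suffices to prove the local claim: for every $\zeta\in\mathcal{M}_{n^+_c}$ and every plus--creating flip $\zeta\to\zeta'$ one has $\max\{H(\zeta),H(\zeta')\}\ge H(\tilde\sigma_c)$.

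I would prove the local claim using the floor $H(\zeta)\ge H(\sigma_c)$ from Lemma~\ref{lem:sigma_c_minimum} and the identity $H(\tilde\sigma_c)=H(\sigma_c)+2J-2h$ coming from \eqref{eq:energy_sigma_protuberance}. Writing $\delta=H(\zeta')-H(\zeta)$ and reading its value off Table~\ref{tab:heu000} from the spin at $x$ and its neighbour counts, one finds three regimes. If $\delta\ge 2J-2h$ the claim is immediate, since $H(\zeta')\ge H(\sigma_c)+2J-2h=H(\tilde\sigma_c)$. If $\delta<0$ the flip is cheap, and I would show instead that $H(\zeta)$ is already at least $H(\tilde\sigma_c)$. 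A cheap flip comes in two flavours. Either $x$ is a minus with at least one plus neighbour (or an isolated minus in a region of zeros); then $\zeta$ contains a $(+,-)$ bond or an isolated minus, which is erased by a single $0\leftrightarrow -1$ flip internal to $\mathcal{M}_{n^+_c}$ lowering the energy by at least $2J-2h$ (erasing a minus with two plus neighbours saves $4J-\lambda+h$, cf.\ row ``2\,0\,2'' of Table~\ref{tab:heu000}; this uses Lemma~\ref{lemma:bond_minus_plus}). Or $x$ is a zero with at least two plus neighbours; then the plus cluster has a concave corner or a hole, so it differs from the $(l_c-1)\times l_c$ quasi--square and, by the isoperimetric content of Lemmas~\ref{lemma:cluster_pluses} and \ref{lem:cluster_pluses_different_quasi_square}, carries at least $2J$ of excess interface over $\sigma_c$. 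In either flavour, combined with the floor $H(\sigma_c)$, this gives $H(\zeta)\ge H(\sigma_c)+2J-2h=H(\tilde\sigma_c)$.

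The delicate regime is $0\le\delta<2J-2h$. A direct inspection of Table~\ref{tab:heu000} shows that, in the region $\lambda>h$, this interval is populated by the single value $\delta=2J-\lambda-h$, realised by a zero--to--plus flip at a site $x$ with one plus neighbour and three zero (or boundary) neighbours --- the \emph{clean protuberance} --- or, in a variant that produces a concavity, with two plus and one minus neighbours. For the concave variant the plus cluster is again non--quasi--square and the isoperimetric bound above applies. For the clean protuberance one has $\max\{H(\zeta),H(\zeta')\}=H(\zeta')=H(\zeta)+2J-\lambda-h$, so the target inequality is equivalent to the sharp baseline bound $H(\zeta)\ge H(\sigma_c)+(\lambda-h)$. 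This is the crux: the hypothesis that $x$ carries three zero neighbours forces $\zeta$ to contain at least one zero more than the optimal frame of $\sigma_c$, and a zero embedded among minuses costs $\lambda-h$ in site energy. I would establish this by the same bond--counting bookkeeping used for Lemma~\ref{lem:sigma_c_minimum}, expressing $H(\zeta)-H(\muno)$ through the numbers of $(0,+)$ and $(0,-)$ bonds and of zeros, and showing that a clean protuberance site cannot exist without one extra $(\lambda-h)$ relative to $\sigma_c$. For the plus set of $\sigma_c$ itself this is precisely Lemma~\ref{lem:path_sigma_c} (one must first pay the minus--to--zero flip of cost $\lambda-h$ before the plus flip), and for a general plus set it follows by combining that computation with the minimality of $\sigma_c$.

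The main obstacle is exactly this last step. The first two regimes need only crude bounds --- a floor, a large jump, or an interface count with room to spare --- whereas the protuberance regime is tight to first order and the inequality survives only because $\lambda>h$; any looseness of order $\lambda-h$ in the bond count would break it. I therefore expect the bulk of the work to be the rigorous, position--uniform proof that ``excess zeros cost at least $\lambda-h$'', carefully treating the sites $x$ lying on the edges and at the corners of $\Lambda$, where the neighbour counts and the boundary interaction term of \eqref{mod005} differ, so that the clean protuberance is confirmed as the cheapest crossing and its height is exactly $H(\tilde\sigma_c)$.
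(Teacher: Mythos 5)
Your reduction to a single crossing flip and the classification of its cost into three regimes is sound, and regimes 1 and 2(a) are handled correctly (for every cheap minus-to-plus flip, flipping that same minus to zero instead lowers the energy by at least $4J-\lambda+h$, so the floor $H(\sigma_c)$ of Lemma~\ref{lem:sigma_c_minimum} gives what you need); this skeleton is in fact close to the paper's, which also isolates the flip $\eta_i\sim\eta_{i+1}$ between the two manifolds. The genuine gap is in regime 2(b) and in the concave variant of regime 3, where you claim that a zero with two plus neighbours forces $H(\zeta)\ge H(\sigma_c)+2J-2h=H(\tilde\sigma_c)$ ``by the isoperimetric content of Lemmas~\ref{lemma:cluster_pluses} and \ref{lem:cluster_pluses_different_quasi_square}''. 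Those lemmas do not give this: Lemma~\ref{lemma:cluster_pluses} is a stability-level statement (it produces a path, not an energy floor), and Lemma~\ref{lem:cluster_pluses_different_quasi_square} yields only the strict inequality $H(\zeta)>H(\sigma_c)$, precisely because the $2J$ of excess interface of a non-quasi-square cluster can be traded against a \emph{deficit} of zeros, each missing zero being worth $-(\lambda-h)$, up to a total of $(2l_c-1)(\lambda-h)\approx 2J(\lambda-h)/h$, which is comparable to $2J$ when $h$ is near $\lambda/2$.

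This is not a hypothetical loss: take the quasi-square of $\sigma_c$, move one of its corner cells to a protuberance prolonging a side along $\partial^-\Lambda$, and dress it with the \emph{minimal} zero covering (the two concave corners each let one zero cover two plus faces, so $2l_c-2$ zeros suffice). This configuration lies in $\mathcal{M}_{n^+_c}$, contains a zero with two plus neighbours, and a direct count gives $H=H(\sigma_c)+2J-(\lambda-h)$ --- strictly below the $H(\sigma_c)+2J$ your interface argument asserts, and above $H(\tilde\sigma_c)$ only because $3h>\lambda$ in the working region $\lambda/2<h<\lambda$. For $h<\lambda/3$ this configuration followed by its $-2h$ flip at the concave corner would even be a crossing of height $<H(\tilde\sigma_c)$, so a bound derived from isoperimetry alone, with no reference to the zero count, cannot be correct: any valid treatment of regime 2(b) must quantify the perimeter-versus-zero-count trade-off, which your proposal never does. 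The same objection applies to the ``crux'' you defer in regime 3. Note how the paper sidesteps both uniform bounds with a different device: arguing by contradiction, it uses the quantization $H(\eta_i)-H(\sigma_c)=2Ja+b(\lambda-h)$ together with $b\ge -(2l_c-1)$ to pin down $H(\eta_i)=H(\sigma_c)+(\lambda-h)$ exactly, hence $\eta_i$ has exactly one \emph{more} zero than $\sigma_c$; only then does it invoke the isoperimetric estimate \eqref{eq:eta_no_quasi_quad_sigma_c}, in the favourable case $n^0_{\eta_i}>n^0_{\sigma_c}$ where the zero-count term has the helpful sign. Filling your two regimes would require reproducing an argument of that strength, so as written the proposal is incomplete.
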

\begin{lemma}\label{lem:path_tilde_sigma_c}
    If $\underline{\omega}$ is a path from $\tilde \sigma_c$ to $\mathcal{M}_{n^+_c+2}$ such that $\underline{\omega}=(\tilde \sigma_c, \eta_1, \eta_2, ... \eta_n)$, $n \geq 1$, with $\eta_n \in \mathcal{M}_{n^+_c+2}$ and $\eta_i \in \mathcal{M}_{n^+_c+1}$ for every $i=1,...,n-1$, then $\Phi(\underline{\omega}) \geq H(\sigma_s)$.
\end{lemma}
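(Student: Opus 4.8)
The plan is to exploit the fact that, since $\eta_1,\dots,\eta_{n-1}\in\mathcal{M}_{n^+_c+1}$, every move before the last one is a flip $-1\leftrightarrow 0$; hence the cluster $P$ of pluses stays \emph{frozen} along $(\tilde\sigma_c,\eta_1,\dots,\eta_{n-1})$ and coincides with the plus--cluster of $\tilde\sigma_c$, namely the $(l_c-1)\times l_c$ quasi--square in a corner of $\Lambda$ together with one unit protuberance (by the symmetry recalled in Figure~\ref{fig:frame_step_selle} I may fix its position). First I would show that $\tilde\sigma_c$ minimises $H$ over all configurations with this frozen $P$: by Lemma~\ref{lemma:bond_minus_plus} any $(+,-)$ bond can be removed by a downhill $-1\to0$ flip that does not touch $P$, so a minimiser has no $(+,-)$ bond; this forces $\partial^+P$ to be occupied entirely by zeros, and since each further interior zero costs $\lambda-h>0$ (the row with two zero and two minus neighbours of Table~\ref{tab:heu000}), the minimiser is exactly the one--layer coating, i.e.\ $\tilde\sigma_c$. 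In particular $H(\eta_i)\ge H(\tilde\sigma_c)$ for $i\le n-1$, and from \eqref{eq:energy_sigma_protuberance}--\eqref{eq:energy_saddle} one has $H(\sigma_s)-H(\tilde\sigma_c)=\lambda-h$.

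Since $\Phi(\underline\omega)\ge\max\{H(\eta_{n-1}),H(\eta_n)\}$, it is enough to prove this maximum is at least $H(\sigma_s)$. I would analyse the final move $\eta_{n-1}\to\eta_n$, which creates a plus at some site $x$, through the number $p\in\{0,1,2\}$ of plus--neighbours of $x$ (fixed by $P$): since $P$ is a rectangle carrying a single protuberance, $p=2$ occurs only at the unique concave corner $c^\ast=(l_c,2)$. Reading the costs off \eqref{mod005}, a $0\to+1$ flip at $x$ costs $2J(2+m-p)-\lambda-h$ and a $-1\to+1$ flip costs $4J(m-p)-2h$, where $m$ is the number of minus--neighbours of $x$. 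I then split into \emph{expensive} and \emph{cheap} last moves.

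For every last move other than a $0\to+1$ at $c^\ast$ with $m=0$, and other than a $-1\to+1$ with $m\le p$, the cost is at least $2J-\lambda-h$; together with $H(\eta_{n-1})\ge H(\tilde\sigma_c)$ and $J\gg\lambda,h$ this yields
\begin{equation*}
H(\eta_n)\ge H(\tilde\sigma_c)+2J-\lambda-h> H(\tilde\sigma_c)+(\lambda-h)=H(\sigma_s).
\end{equation*}
For the cheap moves I would instead bound $H(\eta_{n-1})$ from below. A cheap $0\to+1$ at $c^\ast$ (cost $-\lambda-h$, so $m=0$) forces the two non--cluster neighbours of $c^\ast$, in particular the site $(l_c+1,2)\notin\partial^+P$, to be zeros in $\eta_{n-1}$; re--running the minimisation of the first paragraph with this single extra interior zero imposed gives $H(\eta_{n-1})\ge H(\tilde\sigma_c)+(\lambda-h)=H(\sigma_s)$, the extremiser being precisely $\sigma_s$. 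A cheap $-1\to+1$ move ($m\le p$) forces $\eta_{n-1}$ either to carry a $(+,-)$ bond (if $p\ge1$, as $x=-1$ then touches $P$) or to surround an interior minus by four zeros (if $p=0$); in either case an interface contribution of order $J\gg\lambda-h$ raises $H(\eta_{n-1})$ well above $H(\sigma_s)$.

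In all cases $\max\{H(\eta_{n-1}),H(\eta_n)\}\ge H(\sigma_s)$, whence $\Phi(\underline\omega)\ge H(\sigma_s)$. The delicate point is the lower bound on $H(\eta_{n-1})$ in the cheap $0\to+1$ case: it is the isoperimetric statement that, once the zero coating of $P$ is constrained to contain the extra interior zero needed for a profitable insertion, its minimal--energy realisation is the coating of $\sigma_s$. I expect to obtain it by the bond--counting bookkeeping already used for Lemma~\ref{lem:sigma_c_minimum}, counting the $(+,0)$ and $(-,0)$ bonds of the envelope as in Remark~\ref{rem:conf_critica}, with the extra care imposed by the protuberance and by the contacts of $P$ with the zero boundary, where the chopped--corner geometry changes the constants from their bulk values.
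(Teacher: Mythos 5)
Your overall skeleton (the plus cluster is frozen along $\tilde\sigma_c,\eta_1,\dots,\eta_{n-1}$, then classify the final plus--creating flip by its cost) is coherent, and your cost formulas $2J(2+m-p)-\lambda-h$ and $4J(m-p)-2h$ and the identification of the cheap moves are correct. The genuine gap is that the proof rests on three global minimization statements that you never establish: (i) $H(\eta_i)\ge H(\tilde\sigma_c)$ for all $i\le n-1$, i.e.\ $\tilde\sigma_c$ minimizes $H$ over the frozen--cluster class; (ii) the constrained minimum with an extra zero imposed at the outer neighbour of $c^\ast$ equals $H(\sigma_s)$; (iii) the constrained minimum with a minus forced adjacent to $P$ exceeds $H(\sigma_s)$. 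Your justification of (i) fails as stated: Lemma~\ref{lemma:bond_minus_plus} only guarantees \emph{some} downhill path, and in its proof the flip may be performed on the plus spin (row 9 of Table~\ref{tab:heu000}), which would unfreeze $P$; moreover a $(+,-)$ bond cannot in general be removed by a downhill $-1\to0$ flip, since a minus with one plus and three minus neighbours flips to zero at cost $+(\lambda-h)>0$ (row 3 of Table~\ref{tab:heu000}). So the chain ``no $(+,-)$ bonds, hence zero coating of $\partial^+P$, hence $\tilde\sigma_c$'' breaks at its first link. The statements (i)--(iii) are plausibly true, but proving them requires exactly the kind of global bond--counting developed for Lemma~\ref{lem:sigma_c_minimum} (which in the paper needs several auxiliary lemmas and Condition~\ref{c:mod007}), and you would have to carry it out three times; as written, this --- the entire mathematical content of the lemma --- is deferred, not proven.

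The gap is also avoidable, because the statement follows from looking at the \emph{first} step of the path rather than the last, which is the paper's route. Since $H(\sigma_s)=H(\tilde\sigma_c)+(\lambda-h)$ by \eqref{eq:energy_sigma_protuberance}--\eqref{eq:energy_saddle}, it suffices to check that every move out of $\tilde\sigma_c$ costs at least $\lambda-h$. If $n=1$, the single move creates a plus, and in $\tilde\sigma_c$ the cheapest such flip is at the concave corner, whose neighbourhood is $(1,1,2)$ in Table~\ref{tab:heu000}, of cost $2J-(\lambda+h)>\lambda-h$. If $n\ge2$, the first move is a $-1\leftrightarrow0$ flip; in $\tilde\sigma_c$ no minus touches a plus and no minus has more than two zero neighbours, so every minus--to--zero flip costs at least $\lambda-h$ (rows 1, 2, 4 of Table~\ref{tab:heu000}), while every zero--to--minus flip creates a $(+,-)$ bond and costs at least $2J-\lambda+h$. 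Hence in either case $\Phi(\underline\omega)\ge H(\eta_1)\ge H(\tilde\sigma_c)+(\lambda-h)=H(\sigma_s)$, with no minimization over the manifold $\mathcal{M}_{n^+_c+1}$ needed. You should either adopt this first--step argument or supply the three bond--counting lemmas your version requires.
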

\noindent We define the set $\mathscr{S}$ as the set of all configurations of $\mathcal{M}_{n^+_c+1}$ such that
\begin{itemize}
    \item[a.] the bonds of type $(+,-)$ are not present;
    \item[b.] the union of the cluster of pluses is composed by only one cluster and its semi-perimeter is equal to $2l_c$;
    \item[c.] the minimal rectangle that contains the cluster of pluses has either side lengths $(l_c,l_c)$ or $(l_c+1,l_c-1)$;
    \item[d.] the envelope of the cluster of pluses has a corner that coincides with a corner of $\Lambda$.
    \item[e.] there is only one strip of minuses in each column and row of $\Lambda$.
\end{itemize}
We note that $\tilde \sigma_c \in \mathscr{S}$.
\begin{lemma}\label{lem:equivalence_tilde_sigma_c}
    Let $\eta \in \mathcal{M}_{n^+_c+1}$ be such that $H(\eta)=H(\tilde \sigma_c)$, then $\eta \in \mathscr{S}$.
\end{lemma}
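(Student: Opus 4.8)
The plan is to prove that $H(\tilde\sigma_c)$ is the minimal energy attained on the manifold $\mathcal{M}_{n^+_c+1}$ and that attaining it forces each of the five properties defining $\mathscr{S}$. The workhorse is the bond decomposition of the Hamiltonian already used in this section: writing $n^0$ for the number of zeros and $b_{+0},b_{-0},b_{+-}$ for the numbers of $(+,0)$, $(-,0)$ and $(+,-)$ interfaces (with the zero-boundary condition treated as an external layer of zeros), one obtains from \eqref{mod005}, after eliminating $n^-=L^2-(n^+_c+1)-n^0$,
\begin{equation*}
H(\eta)=J(b_{+0}+b_{-0})+4Jb_{+-}+(\lambda-h)n^0-2h(n^+_c+1)-(\lambda-h)L^2 .
\end{equation*}
Since $n^+_c+1$ and $L^2$ are fixed on $\mathcal{M}_{n^+_c+1}$, minimizing $H$ amounts to minimizing $J(b_{+0}+b_{-0})+4Jb_{+-}+(\lambda-h)n^0$. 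I would then read off each property of $\mathscr{S}$ by showing that any violation strictly increases one of these terms beyond the value attained at $\tilde\sigma_c$.

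First I would establish property (a). If $\eta$ contained a $(+,-)$ bond, Lemma~\ref{lemma:bond_minus_plus} yields a communicating configuration reached by a downhill step; concretely, flipping the offending minus to zero keeps $\eta$ in $\mathcal{M}_{n^+_c+1}$ while strictly lowering the energy, contradicting $H(\eta)=H(\tilde\sigma_c)=\min_{\mathcal{M}_{n^+_c+1}}H$. Hence $b_{+-}=0$, every plus is screened from the minus sea by zeros, and $b_{+0}$ equals the full perimeter of the plus region. Next I would localize the zeros: Lemmas~\ref{lem:zeros_column_L}, \ref{lem:region_zero_minus_1} and \ref{lem:region_zero_minus_2} forbid a minimizer from carrying a full zero column or row, or a block of zeros in the bulk of a large minus rectangle, since in each case an interior zero can be flipped to minus, strictly decreasing $(\lambda-h)n^0$ without creating $(+,-)$ bonds. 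This confines the zeros to a width-one layer wrapping the plus region and, together with (a), produces property (e): in each row and column the minuses form a single strip.

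I would then pin down the plus cluster. Because $b_{+0}$ is the perimeter of the plus region and $n^+_c+1=l_c^2-l_c+1$, the discrete isoperimetric inequality gives $b_{+0}\ge 4l_c$, with equality only for a single cluster, convex in each row and column, inscribed in a rectangle of semi-perimeter $2l_c$; a second cluster or any indentation strictly increases either $b_{+0}$ or the length of the enclosing zero frame. This is the content I would import from the analogue of Lemma~\ref{lem:cluster_pluses_different_quasi_square}, and it delivers property (b) and, after checking which bounding boxes of semi-perimeter $2l_c$ can host $l_c^2-l_c+1$ cells, property (c). Finally, property (d) follows from the corner argument underlying Lemma~\ref{lem:min_crit_quad_no_boundary}: seating the cluster in a corner of $\Lambda$ lets two of its sides use the zero-boundary condition as a free separator, minimizing $n^0$ and $b_{-0}$ simultaneously, whereas any edge or interior placement forces extra zeros and raises the energy above $H(\tilde\sigma_c)$.

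The delicate point, and the step I expect to be the main obstacle, is the sharp bookkeeping in the last paragraph: one must verify that the minimal perimeter $4l_c$, the minimal zero count, and the corner placement are attained \emph{simultaneously} by exactly the configurations of $\mathscr{S}$ and by no others. Note that $b_{+0}$ alone does not see corner versus interior placement, so the separation between admissible and inadmissible shapes must come entirely from comparing $n^0$ and $b_{-0}$. In particular, distinguishing the boxes $(l_c,l_c)$ and $(l_c+1,l_c-1)$ from more elongated boxes of the same semi-perimeter requires matching the exact number of zeros in the interior L-shaped frame against $H(\tilde\sigma_c)$ through \eqref{eq:energy_sigma_protuberance}, using Condition~\ref{c:mod007} to guarantee that the relevant thresholds are non-integer and the resulting comparisons strict. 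Once these equalities are shown to be forced, properties (a)--(e) all hold and $\eta\in\mathscr{S}$.
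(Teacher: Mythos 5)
Your bond decomposition of the Hamiltonian is correct, but the argument built on it has two genuine gaps. The first is the step establishing property (a). You claim that if $\eta$ contains a $(+,-)$ bond, then flipping the offending minus to zero stays in $\mathcal{M}_{n^+_c+1}$ and strictly lowers the energy. This is false: by row 3 of Table~\ref{tab:heu000}, flipping to zero a minus whose neighbours are three minuses and one plus \emph{costs} $\lambda-h>0$, and flipping to zero a plus whose neighbours are three pluses and one minus costs $\lambda+h>0$ (row 10 reversed). Hence for a flat interface between a plus cluster and the minus sea --- which certainly occurs in $\mathcal{M}_{n^+_c+1}$, since $l_c$ is of order $J/h$ --- no single spin flip at either endpoint of the $(+,-)$ bond decreases the energy. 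Nor does citing Lemma~\ref{lemma:bond_minus_plus} repair this: the lower-energy configuration it produces may be reached by flipping pluses to zeros (its proof does exactly this in the critical case), i.e.\ it may leave the manifold $\mathcal{M}_{n^+_c+1}$, and then there is no contradiction with minimality \emph{over the manifold}. This is precisely why the paper never argues locally here: in the proofs of Lemmas~\ref{lem:cluster_pluses_different_quasi_square} and \ref{lem:equivalence_tilde_sigma_c} a $(+,-)$ bond is excluded by a global row/column count (any column or row containing such a bond falls in the classes $\alpha_2,\alpha_3$ and contributes at least $6J$), not by a local flip.

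The second gap is that your entire plan is conditioned on the identity $H(\tilde\sigma_c)=\min_{\xi\in\mathcal{M}_{n^+_c+1}}H(\xi)$, which you assert but never prove; it is not available elsewhere in the paper either (Lemma~\ref{lem:sigma_c_minimum} concerns $\mathcal{M}_{n^+_c}$, and Lemmas~\ref{lem:zeros_column_L}, \ref{lem:region_zero_minus_1}, \ref{lem:region_zero_minus_2}, which you invoke, are also stated on $\mathcal{M}_{n^+_c}$ rather than $\mathcal{M}_{n^+_c+1}$). Proving this minimality is essentially the same work as the ``sharp bookkeeping'' you explicitly defer, and it is not soft: one must rule out, for instance, configurations that save zeros (lowering $(\lambda-h)n^0$ by up to order $l_c(\lambda-h)\sim J(\lambda-h)/h$) at the price of creating $(+,-)$ bonds, a comparison that requires the restriction $h>\lambda/2$. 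The paper sidesteps the minimality question altogether: it starts from the exact identity $H(\eta)-H(\tilde\sigma_c)=2J\,(2\alpha_1+\alpha_4+3\alpha_2+3\alpha_3-2L-2l_c)+(n^-_{\tilde\sigma_c}-n^-_\eta)(\lambda-h)$, uses $J\gg\lambda-h$ to split the single equation $H(\eta)=H(\tilde\sigma_c)$ into the two separate equations $2\alpha_1+\alpha_4+3(\alpha_2+\alpha_3)=2(L+l_c)$ and $n^-_\eta=n^-_{\tilde\sigma_c}$, and then, combining with the isoperimetric bound $\sum_{i=1}^{3}\alpha_i\geq 2l_c$ from \cite{alonso1996three}, deduces $\alpha_2+\alpha_3=0$ and $\alpha_1=2l_c$; properties (a)--(e) of $\mathscr{S}$ are then read off from these equalities and a short geometric case analysis. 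Your decomposition could in principle support such an argument, but as written the proposal relies on a step that fails and leaves its cornerstone unproven, so it does not constitute a proof.
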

\begin{lemma}\label{lem:conf_S_different_tilde_sigmac} 
Let $\eta \in \mathscr{S} \setminus \{ \tilde \sigma_c\}$. Every path $\underline{\omega}: \eta \to \muno$ is such that $\Phi(\underline{\omega}) > H(\sigma_s)$.
\end{lemma}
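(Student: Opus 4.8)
The plan is to prove the statement as a barrier estimate: I will show that $\eta$ cannot be joined to $\muno$ by a path all of whose configurations have energy $\le H(\sigma_s)$. Since $\muno$ has no plus spins while $\eta\in\mathcal{M}_{n^+_c+1}$, any path $\underline\omega:\eta\to\muno$ must at some step decrease the number of pluses from $n^+_c+1$ to $n^+_c$, and the whole argument reduces to showing that, staying below $H(\sigma_s)$, this decrease is impossible unless $\eta=\tilde\sigma_c$.

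First I would fix the energy budget. Every $\xi\in\mathcal{M}_{n^+_c+1}$ satisfies $H(\xi)\ge\min_{\mathcal{M}_{n^+_c+1}}H$, and using the bond/site decomposition of \eqref{mod005} together with properties a)--e) I would compute this minimum and show that, in the restricted region $\lambda/2<h<\lambda$, one has $H(\sigma_s)-H(\xi)<\lambda+h$ for all such $\xi$. This is the only place the restriction $h>\lambda/2$ enters, and it is exactly what makes the local analysis below work.

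Next I would read off from Table~\ref{tab:heu000} the cost of every elementary flip and isolate those compatible with the budget. The relevant facts are: detaching a plus at a convex corner of its cluster costs $\lambda+h$; detaching a plus from a straight edge costs $2J+\lambda+h$; turning a separating zero into a minus (creating a $(+,-)$ bond) costs of order $2J$; while removing a unit protuberance, i.e. a plus with a single plus-neighbour, costs $\lambda+h-2J<0$. Since the budget is strictly smaller than $\lambda+h$, the only flip that lowers the number of pluses without crossing $H(\sigma_s)$ is the removal of a protuberance, and no admissible flip may create a $(+,-)$ bond. The combinatorial input is then the classification of protuberances inside $\mathscr{S}$: if the cluster of a configuration of $\mathscr{S}$ carried a protuberance, deleting it would leave a convex cluster of $n^+_c=l_c(l_c-1)$ cells of semi-perimeter $2l_c-1$, and the only such polyomino is the full $l_c\times(l_c-1)$ rectangle; hence a cluster with a protuberance is a rectangle plus one cell, i.e. exactly $\tilde\sigma_c$ (up to the rotations and reflections subsumed in that notation). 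Thus for $\eta\neq\tilde\sigma_c$ the cluster of $\eta$ carries no protuberance.

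The main obstacle is upgrading ``no protuberance now'' into ``no protuberance along the whole path''. The danger is that the cheap minus$\leftrightarrow$zero flips can reshape the separating layer, and by filling a concave pocket after first flipping its adjacent minus to zero one can alter the cluster shape while only touching, not exceeding, $H(\sigma_s)$; so protuberance-freeness is not preserved by naive inspection, and one must genuinely rule out a reshaping of $\eta$ into $\tilde\sigma_c$ at height $H(\sigma_s)$. To close this gap I would introduce a bond-counting functional of the cluster, in the spirit of the bond computation already used for the lower bound of $V_{\muno}$, that is monotone under every admissible flip and that strictly separates the protuberance-free clusters of $\mathscr{S}\setminus\{\tilde\sigma_c\}$ from $\tilde\sigma_c$; combined with the strict budget $H(\sigma_s)-H(\xi)<\lambda+h$ this forces any reshaping attempt to spend more than the available energy, hence to exceed $H(\sigma_s)$. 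Granting this invariant, the number of pluses can never drop to $n^+_c$ at energy $\le H(\sigma_s)$, so $\muno$ is unreachable below $H(\sigma_s)$ and every path $\underline\omega:\eta\to\muno$ satisfies $\Phi(\underline\omega)>H(\sigma_s)$, as claimed.
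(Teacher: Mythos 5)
Your setup is reasonable as far as it goes: the budget estimate on $\mathcal{M}_{n^+_c+1}$ (which itself needs the analogue of Lemma~\ref{lem:sigma_c_minimum} for $n^+_c+1$ pluses), the classification of plus--removing flips via Table~\ref{tab:heu000}, and the observation that clusters of configurations in $\mathscr{S}\setminus\{\tilde\sigma_c\}$ carry no protuberance are all essentially correct (for the last point ``rectangle plus one cell'' does not by itself give $\tilde\sigma_c$ --- a mid--side protuberance is also of that form --- and you must invoke property e.\ of $\mathscr{S}$ to exclude it). The genuine gap is exactly the step you flag and then ``grant'': the monotone bond--counting functional that forbids reshaping $\eta$ into a protuberance--carrying configuration at heights $\le H(\sigma_s)$. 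That is not a deferrable technicality; it is the entire content of the lemma, and in fact no such invariant can exist, because the reshaping you worry about is genuinely possible within your budget. Take $\eta\in\mathscr{S}\setminus\{\tilde\sigma_c\}$ whose cluster is the quasi--square--plus--protuberance of $\tilde\sigma_c$ with the inner corner plus moved next to the protuberance (a staircase with one concave pocket, and one zero detached from the cluster). Then the four moves: flip the minus guarding the pocket to zero (cost $\lambda-h$, height exactly $H(\sigma_s)=H(\eta)+\lambda-h$); fill the pocket, zero to plus (cost $-(\lambda+h)$); detach the now--superfluous end plus (cost $\lambda+h$, height again exactly $H(\sigma_s)$); flip the freed zero to minus (cost $-(\lambda-h)$) --- produce exactly $\tilde\sigma_c$ without ever exceeding $H(\sigma_s)$, after which the protuberance removal is downhill and the descent to $\muno$ through $\sigma_c$ stays strictly below $H(\sigma_s)$. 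So ``protuberance--freeness'' is not preserved at level $\le H(\sigma_s)$, your target claim (no path from $\eta$ to $\muno$ with all energies $\le H(\sigma_s)$) cannot be established by any invariant, and your scheme can at best yield the non--strict bound $\Phi(\underline{\omega})\ge H(\sigma_s)$.

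This is also where your proposal diverges completely from the paper, whose mechanism you do not reproduce: there one flips the $k\ge 1$ zeros detached from the cluster to minuses, obtaining a local minimum $\eta_k$ with $H(\eta_k)=H(\eta)-k(\lambda-h)$; one then argues that lowering the number of pluses forces passage through a single--protuberance configuration $\eta_m$ with $H(\eta_m)=H(\eta_k)+m(\lambda+h)-(m-1)(\lambda-h)$ and $m\ge k$, and the restriction $h>\lambda/2$ turns this into $H(\eta_m)>H(\sigma_s)$, giving the contradiction. Whatever one thinks of the details of that argument, it is a quantitative counting of the dismantling cost, not a conservation law along the dynamics; your proof, as written, is missing its central step and the step it postulates is false in the form stated.
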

\begin{lemma}\label{lem:step_2_minmax}
  If $\underline{\omega}$ is a path from $\muno$ to $\mathcal{M}_{n^+_c+2}$, then $\Phi(\underline{\omega}) \geq H(\sigma_s)$. 
\end{lemma}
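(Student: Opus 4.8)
The plan is to prove Lemma~\ref{lem:step_2_minmax} by stitching together the two-stage bottleneck analysis developed in the preceding lemmas, namely that any path from $\muno$ to the manifold $\mathcal{M}_{n^+_c+2}$ must cross both a configuration at height at least $H(\tilde\sigma_c)$ (while moving through $\mathcal{M}_{n^+_c+1}$) and then a configuration at height at least $H(\sigma_s)$ (while passing to $\mathcal{M}_{n^+_c+2}$). Since $H(\sigma_s)>H(\tilde\sigma_c)$ by \eqref{eq:energy_sigma_protuberance}--\eqref{eq:energy_saddle}, the dominant bottleneck is the second one, and the goal is to certify that the saddle height $H(\sigma_s)$ cannot be avoided.

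First I would fix an arbitrary path $\underline{\omega}:\muno\to\mathcal{M}_{n^+_c+2}$. Since $\muno\in\mathcal{M}_0$ has zero pluses and the endpoint lies in $\mathcal{M}_{n^+_c+2}$, and since each admissible spin flip changes the number of pluses by at most one, the path must visit $\mathcal{M}_{n^+_c+1}$ before reaching $\mathcal{M}_{n^+_c+2}$. Let $\eta^{(1)}$ be the \emph{last} configuration of the path lying in $\mathcal{M}_{n^+_c+1}$, and consider the final excursion from $\eta^{(1)}$ into $\mathcal{M}_{n^+_c+2}$. The key structural fact, supplied by Lemma~\ref{lem:path_tilde_sigma_c}, is that a path leaving $\tilde\sigma_c$ toward $\mathcal{M}_{n^+_c+2}$ has height at least $H(\sigma_s)$; the work is therefore to reduce the general case to one governed by a configuration behaving like $\tilde\sigma_c$. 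I would argue by cases on the height already attained: if the portion of $\underline{\omega}$ inside $\mathcal{M}_{n^+_c+1}$ ever reaches height $\geq H(\sigma_s)$ we are done immediately, so we may assume it stays strictly below $H(\sigma_s)$.

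Under that assumption, I would invoke Lemma~\ref{lem:step_1_minmax} to conclude that the path attains height at least $H(\tilde\sigma_c)$ somewhere in $\mathcal{M}_{n^+_c+1}$, and then appeal to Lemma~\ref{lem:equivalence_tilde_sigma_c}: any configuration in $\mathcal{M}_{n^+_c+1}$ realizing the value $H(\tilde\sigma_c)$ must lie in the set $\mathscr{S}$. The two relevant sub-cases are whether the path, restricted to $\mathcal{M}_{n^+_c+1}$, passes through $\tilde\sigma_c$ itself or only through other members of $\mathscr{S}$. In the first sub-case, Lemma~\ref{lem:path_tilde_sigma_c} applied to the continuation into $\mathcal{M}_{n^+_c+2}$ forces $\Phi(\underline{\omega})\geq H(\sigma_s)$ directly. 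In the second sub-case, for any $\eta\in\mathscr{S}\setminus\{\tilde\sigma_c\}$ encountered, Lemma~\ref{lem:conf_S_different_tilde_sigmac} guarantees that even the return path $\eta\to\muno$ exceeds $H(\sigma_s)$; since $\underline{\omega}$ connects $\muno$ to such an $\eta$ (as a sub-path of the full path read backwards), its height must exceed $H(\sigma_s)$ as well. Combining the cases yields $\Phi(\underline{\omega})\geq H(\sigma_s)$ for every admissible path, which is the claim.

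I expect the main obstacle to be the bookkeeping in the second sub-case: showing that whenever the minimal-height passage through $\mathcal{M}_{n^+_c+1}$ occurs at a configuration of $\mathscr{S}$ other than $\tilde\sigma_c$, the full path is already forced above $H(\sigma_s)$ by Lemma~\ref{lem:conf_S_different_tilde_sigmac}. One must ensure the sub-path extracted from $\underline{\omega}$ is a genuine path of the form required by that lemma (connecting the $\mathscr{S}$-configuration back to $\muno$), and that no gap in the case analysis lets a path sneak through $\mathcal{M}_{n^+_c+1}$ at height below $H(\tilde\sigma_c)$ yet still reach $\mathcal{M}_{n^+_c+2}$ cheaply --- this is precisely what the one-flip-at-a-time structure together with Lemmas~\ref{lem:step_1_minmax} and~\ref{lem:path_tilde_sigma_c} is designed to exclude.
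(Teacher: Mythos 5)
Your overall architecture is the paper's own: split the path at its passage through $\mathcal{M}_{n^+_c+1}$, use Lemma~\ref{lem:step_1_minmax} for the first bottleneck, identify the configurations of $\mathcal{M}_{n^+_c+1}$ at energy $H(\tilde\sigma_c)$ with the set $\mathscr{S}$ via Lemma~\ref{lem:equivalence_tilde_sigma_c}, and then split into the sub-cases $\tilde\sigma_c$ (concluded by Lemma~\ref{lem:path_tilde_sigma_c}) versus $\mathscr{S}\setminus\{\tilde\sigma_c\}$ (concluded by Lemma~\ref{lem:conf_S_different_tilde_sigmac}, applied to the initial segment of the path read backwards as a path $\eta\to\muno$). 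This is the same decomposition, the same lemmas, and the same case analysis as in the paper.

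There is, however, a concrete gap at the junction between Lemma~\ref{lem:step_1_minmax} and Lemma~\ref{lem:equivalence_tilde_sigma_c}. Lemma~\ref{lem:step_1_minmax} only gives $\Phi\geq H(\tilde\sigma_c)$, and your standing assumption only keeps the heights strictly below $H(\sigma_s)$; but Lemma~\ref{lem:equivalence_tilde_sigma_c} applies only to configurations of $\mathcal{M}_{n^+_c+1}$ whose energy is \emph{exactly} $H(\tilde\sigma_c)$. A priori the path could traverse $\mathcal{M}_{n^+_c+1}$ only at energies strictly between $H(\tilde\sigma_c)$ and $H(\sigma_s)$, in which case it meets neither $\tilde\sigma_c$ nor $\mathscr{S}\setminus\{\tilde\sigma_c\}$, and your dichotomy is not exhaustive. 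The paper closes precisely this hole with a quantization argument: by \eqref{eq:energy_sigma_protuberance} and \eqref{eq:energy_saddle} one has $H(\sigma_s)-H(\tilde\sigma_c)=\lambda-h$, and since $\lambda-h$ is the minimal positive energy difference of the model, the contradiction hypothesis $\Phi(\underline{\omega})<H(\sigma_s)$ forces $\Phi(\underline{\omega})=H(\tilde\sigma_c)$ exactly; the maximum is then realized by some $\tilde\eta\in\mathcal{M}_{n^+_c+1}$ with $H(\tilde\eta)=H(\tilde\sigma_c)$, to which Lemma~\ref{lem:equivalence_tilde_sigma_c} applies. Without this observation (or an equivalent statement that no configuration energy lies strictly between $H(\tilde\sigma_c)$ and $H(\sigma_s)$), the step ``then appeal to Lemma~\ref{lem:equivalence_tilde_sigma_c}'' does not go through; the same observation is also what justifies your claim that the height is attained \emph{at a configuration of} $\mathcal{M}_{n^+_c+1}$, which is not part of the statement of Lemma~\ref{lem:step_1_minmax}. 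The remainder of your argument matches the paper's proof.
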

\noindent The proofs of the previous lemmas are in Section \ref{section:lemmas_minmax}.

\subsection{Proof of Theorem~\ref{thm:transition_time_BCG}}
\par\noindent
By applying \cite[Theorem 4.1]{manzo2004essential} with $\eta_0=\muno$ and our value of $\Gamma$, we get the proof.
\qed

\section{Proof of the lemmas of Section~\ref{s:pro-th}}
\label{s:pro-le}
\par\noindent
In Section \ref{section:lemmas_recurrence} we report the proofs of lemmas related to recurrence property, while in Section \ref{section:lemmas_minmax} we gather the proofs of the lemmas related to the computation of the energy barrier. 

\subsection{Proofs of lemmas for the recurrence property} \label{section:lemmas_recurrence}
\par\noindent
\smallskip
\par\noindent
\textit{Proof of Lemma \ref{lemma:bond_minus_plus}.\/}
Using the Table \ref{tab:heu000}, it is possible to reduce the energy of all configurations with a bond $(+,-)$ except the configuration where the plus and the minus are near three pluses and three minuses respectively. In the latter case, we analyze the two columns (or rows) where the plus and the minus belong to until we find a bond different from $(+,-)$. If there is a bond different from $(+,-)$ then the energy of $\eta$ is reducible using the Table \ref{tab:heu000}, otherwise $\eta$ contains two columns composed by all bonds $(+,-)$. In this case, by analyzing the last bond of the two columns in internal boundary of $\Lambda$, we obtain a configuration that is reducible in energy because of the zero-boundary conditions and according to the Table \ref{tab:heu000} at row 5 (by flipping a minus in to zero) or row 9 (by flipping a plus in to zero). 
\qed

In order to prove the following lemmas we define a \emph{local configuration} of a configuration $\eta \in \mathcal{X}$ the rescricted configuration $\eta_{U_x}$, where $x$ is a site of $\Lambda$ and $U_x=\{ y \in \Lambda \, | \, |x-y|=1\}$. See Figure \ref{fig:plus_tiles} for some examples.
\smallskip
\par\noindent
\textit{Proof of Lemma \ref{lemma:cluster_pluses}.\/}
First of all, suppose that $\eta$ does not contain bonds $(+,-)$, otherwise we conclude by lemma \ref{lemma:bond_minus_plus}.
Using Table \ref{tab:heu000}, we find that the only local configurations containing a plus that are not reducible with a flip are as in Figure \ref{fig:plus_tiles}.

\begin{figure}[H]
\begin{center}
\includegraphics[scale=0.8]{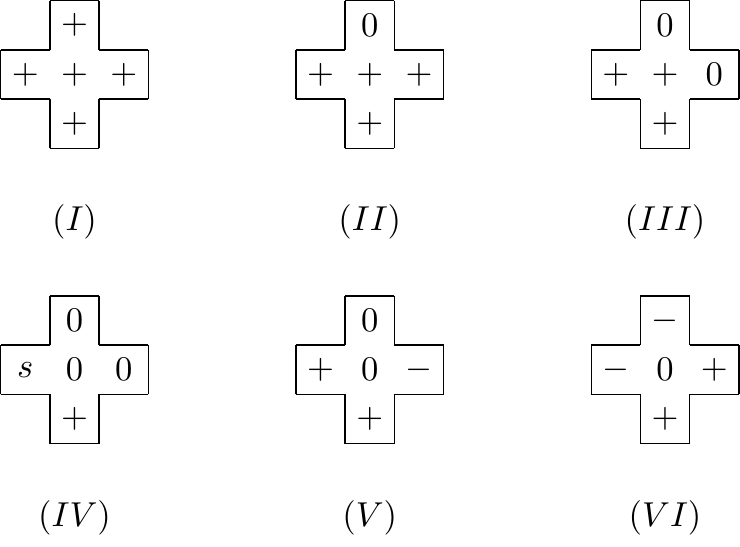}
\end{center}
\caption{Local configurations with at least a plus spin that are not reducible in energy using Table \ref{tab:heu000}. The spin $s$ in picture $(IV)$ takes values in $\{-1,0\}$.}
\label{fig:plus_tiles}
\end{figure}

Let $\eta$ be a configuration that contains at least a local configuration of type $(V)$ in Figure \ref{fig:plus_tiles}. We consider $\eta'$ obtained from $\eta$ by flipping the minus spin to zero and the zero at the center to plus. In this way, we have $H(\eta')<H(\eta)$ and $\Phi(\eta,\eta')-H(\eta)\leq (\lambda-h)$, recalling that $\eta$ does not contain bonds $(+,-)$. \\
Next, suppose that $\eta$ is a configuration that contains at least a local configuration of type $(VI)$. We consider $\eta'$ obtained from $\eta$ in three steps: we flip the two minuses to zero and then we flip the zero at the center to plus. In this way, we have $H(\eta') \leq H(\eta)+2(\lambda-h)-(\lambda+h)$ recalling that $\eta$ does not contain bonds $(+,-)$. For the assumptions $h>\frac{\lambda}{2}$, in particular $h>\frac{\lambda}{3}$, then $H(\eta')<H(\eta)$, and $\Phi(\eta,\eta')-H(\eta)\leq 2(\lambda-h)$ recalling that $\eta$ does not contain bonds $(+,-)$. \\
Follows that a cluster of pluses is composed by only local configurations of types $(I)$, $(II)$, $(III)$ with a plus at the center and only local configurations of type $(IV)$ with a plus in the neighborhood, thus is a rectangle and we have concluded the proof.
\qed

\smallskip
\par\noindent
\textit{Proof of Lemma \ref{lemma:subcritical_cluster}.\/}
We prove the result for the cluster of pluses, the other case is similar. 
Suppose that the cluster of pluses has at least one convex side with length $l_1<\frac{2J}{\lambda+h}$. 
We flip the $l_1$ pluses along the side to zero decreasing in energy with a communication height smaller than or equal to $H(\eta)+(\lambda+h)(l_1-1)<H(\eta)+2J$. Indeed, starting from a corner of the cluster and flipping the first $l_1-1$ pluses, the energy increases by $\lambda+h$ at each flip, since the number of the bonds between two equal spins does not change but a plus is replaced by a zero, see Table \ref{tab:heu000} at row 13. Then, during the $l_1$-th flip, the energy decreases by $2J-(\lambda+h)$, see Table \ref{tab:heu000} at row 12.
\qed

\smallskip
\par\noindent
\textit{Proof of Lemma \ref{lemma:supercritical_cluster}.\/}
We prove the result for the cluster of pluses, the other case is similar. 
Suppose that the cluster of pluses has at least a side with length $l_1>\frac{2J}{\lambda+h}$ at distance strictly greater than two from a minus spin. 
We suppose that there are only zero spins at distance two from the pluses along this side, see Table  \ref{tab:heu000} Then, we consider these $l_1$ zeros and we flip them to plus obtaining $\eta'$ and decreasing in energy. In particular, 
the communication height of the path connecting $\eta$ to $\eta'$ is at most $2J-(\lambda+h)+H(\eta)$ (if the side is convex, otherwise $\Phi(\eta,\eta')=0$ indeed if the side is concave then the energy decreases by $\lambda+h$ see Table \ref{tab:heu000} at row 13), see Table \ref{tab:heu000} at row 12. Indeed the first flip has an energy cost equal to $2J-(\lambda+h)$, since a zero is replaced by a plus and the number of the bonds between two equal spins has decreased by two. The other steps form a downhill path. Thus, denoted by $\underline{\omega}$ this path, we have $\Phi(\underline{\omega})=2J-(\lambda+h)+H(\eta)<2J+H(\eta)$.
\qed


\smallskip
\par\noindent
\textit{Proof of Lemma \ref{lemma:supercritical_cluster_no_space}.\/}
Let $\eta$ be a configuration as in the assumption. Suppose that $\eta$ does not contain bonds of type $(+,-)$ otherwise we conclude applying Lemma \ref{lemma:bond_minus_plus}. Moreover, the cluster of pluses is a rectangle otherwise the statement is proven by Lemma \ref{lemma:cluster_pluses}. 
We consider a configuration $\eta'$ obtained from $\eta$ in the following way. All minuses at distance $\sqrt{2}$ and $2$ from the side of the rectangle with length $l>\frac{2J+\lambda-h}{h}$ in $\eta$ are replaced by zeros. Moreover, all zeros at distance one from the same side are replaced by pluses, see Figure \ref{fig:recurrence_5J}. Next, we construct a path $\eta \to \eta'$ with $\Phi(\underline{\omega})-H(\eta)<5J$ and we show that $H(\eta')<H(\eta)$. In the worst case scenario, all spins at distance $\sqrt{2}$ and $2$ from the rectangle are minuses. Thus, in particular we start flipping the two minuses at distance $\sqrt{2}$ from the side of the rectangle, and the energy increases by $2(\lambda-h)$. Next, we consider one of the $l$ minuses at distance two from the considered side of the rectangle, and we flip it to zero. Then, we flip the nearest zero to plus. Starting from a minus at distance one from the minus considered before, we iterate these two steps ($-1 \to 0$ and $0 \to +1$) for $l-1$ times obtaining $\eta'$ such that $H(\eta')<H(\eta)$. Indeed, the first flip of the minus to zero has an energy cost of $2J+(\lambda-h)$ and the first flip of the zero to plus has an energy cost of $2J-(\lambda+h)$, see Table \ref{tab:heu000} at row 2 and 12 respectively. The rest of the steps has an energy cost of $\lambda-h$ when we flip a minus to zero and $-(\lambda+h)$ when we flip a zero to plus. Thus, we have $H(\eta') \leq H(\eta)+4J+2(\lambda-h)-2hl<H(\eta)$ since $l>\frac{2J+\lambda-h}{h}$, and the communication height along this path is $2(\lambda-h)+[2J+(\lambda-h)]+[2J-(\lambda+h)]+(\lambda-h)=4J+3\lambda-5h<5J$ since we chose $J>>\lambda>h$.

 \begin{figure}[H]
 \begin{center}
     \includegraphics[scale=0.6]{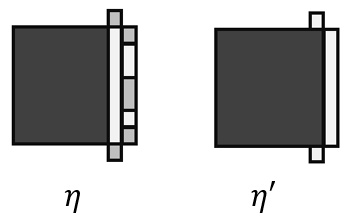}
     \caption{On the left, in dark gray, an example of cluster of pluses of $\eta$ with some minus spins at distance $\sqrt{2}$ and $2$. On the right, the evolution of this cluster in $\eta'$: all minuses at distance $\sqrt{2}$ and $2$ from the cluster are replaced by zeros, and all zeros at distance one are replaced by pluses.} \label{fig:recurrence_5J}
 \end{center}
     \end{figure}
\qed


\smallskip
\par\noindent
\textit{Proof of Lemma \ref{lemma:middle_cluster}.\/}
We observe that if $\eta$ contains a cluster of pluses with at least a side length $l>\frac{2J}{\lambda+h}$ at distance strictly greater than two from a minus spin, then the proof is concluded by Lemma \ref{lemma:supercritical_cluster}. Thus, suppose that there are some minuses at distance $d$ smaller than or equal to two from the cluster of pluses. In particular $\sqrt{2} \leq d \leq 2$, otherwise there is a bond of type $(+,-)$ and we conclude the proof by Lemma \ref{lemma:bond_minus_plus}. Moreover, the cluster of pluses is a rectangle, otherwise the proof is over by Lemma \ref{lemma:cluster_pluses}. We observe that the rectangle of pluses has both side lengths in $(\frac{2J}{\lambda+h},\frac{2J+\lambda-h}{h})$, otherwise we conclude applying Lemma \ref{lemma:subcritical_cluster} or Lemma \ref{lemma:supercritical_cluster_no_space}. 
Denote by $l_+=\lceil\frac{2J}{\lambda+h}\rceil$ and $l_F=\lfloor\frac{2J+\lambda-h}{h}\rfloor$, moreover we indicate by $\tilde l=\lfloor\frac{J+\lambda+h}{h}\rfloor$.
Next, we construct a path $\underline{\omega}$ from $\eta$ to $\eta'$, where $\eta'$ is a configuration such that $H(\eta')<H(\eta)$ and $\Phi(\underline{\omega})-H(\eta)<\Gamma$. In order to find $\eta'$, we distinguish two cases.
Let $m,k$ be the two side lengths of the rectangle of pluses and we suppose $k \geq m$, then we have:
\begin{itemize}
    \item[1.] both sides have length strictly greater then $\tilde l$, that is $k,m \in [\tilde l+1, l_F]$.
    \item[2.] at least one of two side lengths is smaller than $\tilde l$, that is $m \in [l_+, \tilde l]$.
\end{itemize}
In the first case, we obtain $\eta'$ growing the rectangle of pluses as in proof of Lemma \ref{lemma:supercritical_cluster_no_space}. In particular, we grow the side of the rectangle with length $k$ for $l_F-k$ times, that is the rectangle grows up until it reaches the longer side length $l_F$. 
We observe that to grow the side of length $k$, we have to add $m$ pluses along the side of length $m$, see Figure \ref{fig:recurrence_1}. We call $\tilde\eta$ this configuration. Along this first part $\eta \to  \tilde \eta$ of the path $\eta \to  \eta'$, the energy increases, because the rectangle is not supercritical. Then, we will grow up a supercritical rectangle until we obtain $\eta'$ with $H(\eta')<H(\eta)$. Along this last part of path the energy decreases because it is a two-steps downhill path, so the communication height between $\eta$ and $\eta'$ is the same between $\eta$ and $\tilde \eta$. Then, as proof of Lemma \ref{lemma:supercritical_cluster_no_space}, we have
\begin{equation}
    \Delta H\text{(side growth of length $m$)} \leq 4J+2(\lambda-h)-2hm
\end{equation}
Thus, we obtain
\begin{align}
& \Delta H \text{(total growth)} \leq ( l_F-k)\Delta H \text{(growth side of length $m$)} \notag \\
& \leq \Big( \frac{2J+\lambda-h}{h}-k\Big) (4J+2(\lambda-h)-2hm).
\end{align}
To find an upper bound for the communication height, we have to sum the energy difference from the rectangle with longer side length $k$ to $l_F$ with the energy cost to reach the rectangle with side length $l_F+1$. In particular, we conclude finding the following upper bound 
\begin{align}
  \Phi(\eta,\eta')-H(\eta) & \leq \sum_{j=k}^{l_F} \Delta H \text{(growth side of length $m$)} \notag \\
  & +(4J+3\lambda-5h) \notag \\
  & \leq (4J+2(\lambda-h)-2h m) (l_F-k+1) \notag \\
  &+(4J+3\lambda-5h) \notag \\
  & \leq (4J+2(\lambda-h)-2h (\tilde l+1)) ( l_F-\tilde l) \notag \\
  &+(4J+3\lambda-5h) \notag \\
  & < 
  \Gamma^*.
\end{align}
where the second inequality follows from $k,m \geq \tilde l$, and the last one follows from $\tilde l=\lfloor\frac{J+\lambda+h}{h}\rfloor$ and $J>>\lambda>h$. 

 \begin{figure}[H]
 \begin{center}
     \includegraphics[scale=0.5]{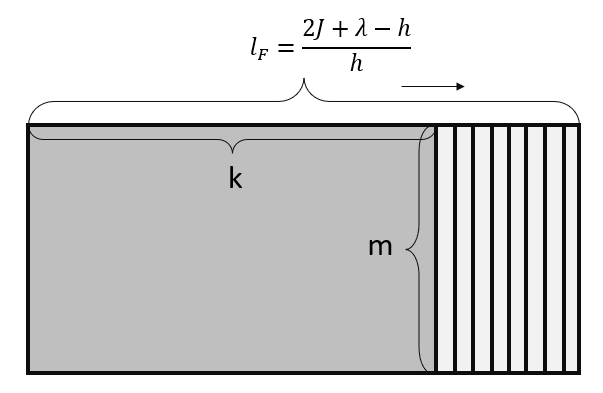}
     \caption{The rectangular cluster of pluses with side lengths $k$ and $m$ grows the side with length $m$ for $l_F-k$ times. In this way we obtain a configuration containing a rectangular cluster of pluses with side lengths $l_F$ and $m$.}\label{fig:recurrence_1}
 \end{center}
     \end{figure}

In the second case, we obtain $\eta'$ shrinking the rectangle of pluses as in proof of Lemma \ref{lemma:subcritical_cluster}. In particular, we cut the side of the rectangle with length $m$ until the cluster of pluses is replaced by a cluster of zeros, see Figure \ref{fig:recurrence_2}. 
First of all, we prove that $H(\eta')<H(\eta)$. We observe that
\begin{align}\label{eq:k_min}
    k \leq l_F< \frac{2Jm}{(\lambda+h)m-2J},
\end{align}
where the second inequality is due to $h>\frac{\lambda}{2}$. Then, by \eqref{eq:k_min} we have
\begin{align}
    H(\eta)-H(\eta') & =2J(k+m)-(\lambda+h)km>0.
\end{align}

To find an upper bound for the communication height, first of all we compute the energy to cut a side of the rectangle and the communication height along this part of the path $\underline{\omega}$. 
For the first $k-1$ times, we have
\begin{equation}
    \Delta H \text{(shrink side of length $m$)}=(\lambda+h)m-2J
\end{equation} 
And $\Phi(\underline{\omega})-H(\eta)=(\lambda+h)m$. Indeed, when we cut $k-1$ sides of length $m$, we obtain a configuration with a rectangle $1 \times m$, so the path toward $\eta'$ is a downhill path. 
Thus, 
\begin{align}
  \Phi(\eta,\eta')-H(\eta) & \leq \sum_{j=1}^{k-2} \Delta H \text{(shrink side of length $m$)} \notag \\
  & +(\lambda+h)m \notag \\
  & = [(\lambda+h)m-2J] (k-2)+(\lambda+h)m \notag \\
  & < [(\lambda+h)\tilde l-2J] (l_F-2)+(\lambda+h) \tilde l \notag \\
  & < \frac{2J^2}{h}=\Gamma^*.
\end{align}
where for the first inequality we used $m \leq \tilde l$ and $k\leq l_F$. The second inequality follows from the values of $\tilde l$, $l_F$ and the assumption $h>\frac{\lambda}{2}$, $J >>\lambda>h$. 

\begin{figure}[H]
\begin{center}
    \includegraphics[scale=0.6]{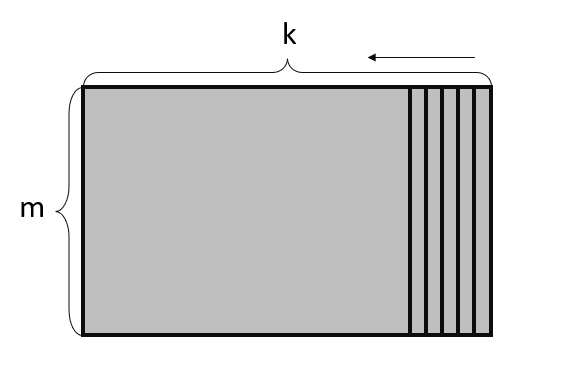}
    \caption{The rectangular cluster of pluses with side lengths $k$ and $m$ shrinks until it is totally replaced by a rectangular cluster of zeros with the same size.}\label{fig:recurrence_2}
\end{center}
    \end{figure}

\qed

\smallskip
\par\noindent
\textit{Proof of Lemma~\ref{lemma:path_0_+}.\/}
To prove the result, we provide a path from $\zero$ to $\puno$.
We define our path $\underline{\omega}: \zero \mapsto \puno$ as a sequence of configurations from $\zero$ to $\puno$ with increasing clusters \emph{as close as possible to quasi-square}, see Figure \ref{fig:path_+_0}. 
\begin{figure}[H]
\begin{center}
    \includegraphics[scale=0.5]{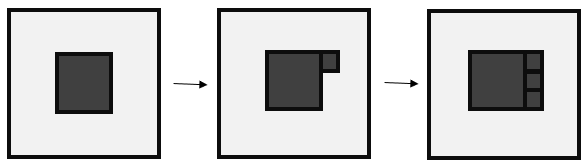}
    \caption{A part of the path $\underline{\omega}: \zero \mapsto \puno$. The white part represents the region with zero spins, the dark gray region is the cluster of pluses. We remark that the first flip from zero to plus can occur at any site of $\Lambda$ with the same probability, this is the case of the homogeneous nucleation.}\label{fig:path_+_0}
\end{center}
    \end{figure}
We construct a path in which at each step we flip one spin from zero to plus. We flip the spin at the origin and then we add clockwise three square units to obtain the first square with side length $l=2$. Then we flip the zero spins on the top of the square $2 \times 2$, adding consecutive square units until we obtain a quasi-square $2 \times 3$. Next we flip the zero spins along the longest side to obtain a square $3 \times 3$. We go on in the same manner flipping consecutive zero spins at distance one to the cluster of pluses. We iterate this nucleation process until the quasi-square takes up all the space $\Lambda$. In the following we compute the communication height of this procedure. First of all, we compute the energy cost between the configuration $\zero$ and a configuration with a rectangular cluster of pluses with side lengths $m$ and $n$, called $\sigma_{m,n}$,
\begin{align}\label{eq:difference_energy_rectangle}
    H(\sigma_{m,n})-H(\zero)=2J(n+m)-(\lambda+h)mn
\end{align}
where $2(m+n)$ is the number of bonds $(0,+)$ and $mn$ is the number of pluses in $\sigma_{m,n}$. The equation \eqref{eq:difference_energy_rectangle} attains the maximum for $(m,n)=\Big(\frac{2J}{\lambda+h},\frac{2J}{\lambda+h}\Big)$, that corresponds to a configuration with a square of pluses with side length $\tilde n=\lfloor\frac{2J}{\lambda+h}\rfloor +1$. Starting from $\sigma_{\tilde n, \tilde n}$ to reach the configuration $\sigma_{\tilde n+1, \tilde n}$, the energy cost is given by the first step and its value is $2J-(\lambda+h)$, see Table \ref{tab:heu000} at row 12, the rest of the path is a downhill path. Thus, recalling that $H(\zero)=0$, using the value of $\tilde n$ and the assumption $\lambda>h$, we have
\begin{align}
& \Phi(\zero,\puno)-H(\zero) \leq \Phi(\sigma_{\tilde n, \tilde n},\sigma_{\tilde n+1,\tilde n})-H(\zero) \notag \\
&=H(\sigma_{\tilde n, \tilde n})+2J-(\lambda+h)-H(\zero) \notag \\
&=4J\tilde n-(\lambda+h)\tilde n^2+2J-(\lambda+h) \notag \\
&=\frac{4J^2}{\lambda+h}+2J-2(\lambda+h) <\frac{2J^2}{h}<\Gamma.
\end{align}
\qed

\subsection{Proofs of lemmas for the energy barrier}\label{section:lemmas_minmax}
\par\noindent
 \begin{proof}[Proof of Corollary \ref{lem:sequence_of_minuses}]
If a configuration contains a strip of minuses as in the assumptions, then there exists a cluster of minuses containing this strip with at least a side length $l>\frac{2J}{\lambda-h}$ at distance strictly greater than two from a plus spin, then we conclude by applying Lemma \ref{lemma:supercritical_cluster}. 
 \end{proof}
\begin{proof}[Proof of Lemma \ref{lem:region_zero_minus_1}]
Let $\eta$ be a configuration as in the assumptions. 
We distinguish two cases: (i) $\eta$ contains at least a cluster of minuses with shape different from a rectangle, (ii) $\eta$ contains only cluster of minuses with rectangular shape. In the first case, there is at least a zero spin with two minus spins at distance one, then we find $\eta'$ by using Table \ref{tab:heu000} at row 4 (by flipping this zero in to a minus). In the second case, we find $\eta'$ by applying either Lemma \ref{lemma:subcritical_cluster} or Lemma \ref{lemma:supercritical_cluster}, according to the side length of the cluster of minuses.
\end{proof}
\begin{proof}[Proof of Lemma \ref{lem:region_zero_minus_2}]
Consider $\eta$ and $\eta'$ as in the assumption. 
The energy difference between $\eta$ and $\eta'$ is given by 
\begin{align}
     H(\eta')-H(\eta) & =2J(l_1-2+l_2-2) \notag \\
     & -(\lambda-h)(l_1-2)(l_2-2) \notag \\
     & <2J\frac{8J}{\lambda-h}-(\lambda-h)\Big(\frac{4J}{\lambda-h}\Big)^2=0.
 \end{align}
To compute the communication height $\Phi(\eta,\eta')$, we argue as in proof of Lemma \ref{lemma:path_0_+}. Indeed the computation of $\Phi(\eta,\eta')$ is similar to one of $\Phi(\zero,\puno$, hence it is strictly smaller than $\Gamma^*$.
\end{proof}

\begin{proof}[Proof of Lemma \ref{lem:zeros_column_L}]
Let $\eta$ be a configuration as in the assumption and we suppose by contradiction that $\eta \in M$. 
First of all, we observe that if $\eta$ contains at least one of the local configurations in Figure \ref{fig:zero_distance_plus} (or one of their rotations), then there exist $\eta' \in \mathcal{M}_{n^+_c}$ such that $H(\eta')<H(\eta)$ by using Table \ref{tab:heu000}, thus $\eta \not \in M$.
\begin{figure}[H]
\begin{center}
    \includegraphics[scale=0.8]{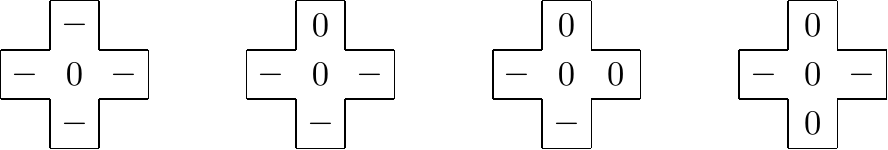}
   \caption{If $\eta$ contains one of these local configuration, then it reducible in energy by flipping the zero in the center in to minus, see Table \ref{tab:heu000}.}\label{fig:zero_distance_plus}
\end{center}
   \end{figure}
From now on, we suppose that $\eta$ does not contain the previous local configurations in Figure \ref{fig:zero_distance_plus}.
For the assumption, $\eta$ contains at least a column (or a row) with only zero spins, then $\eta$ contains at least one of the configurations in Figure \ref{fig:column_zeros}.
\begin{figure}[H]
\begin{center}
    \includegraphics[scale=0.8]{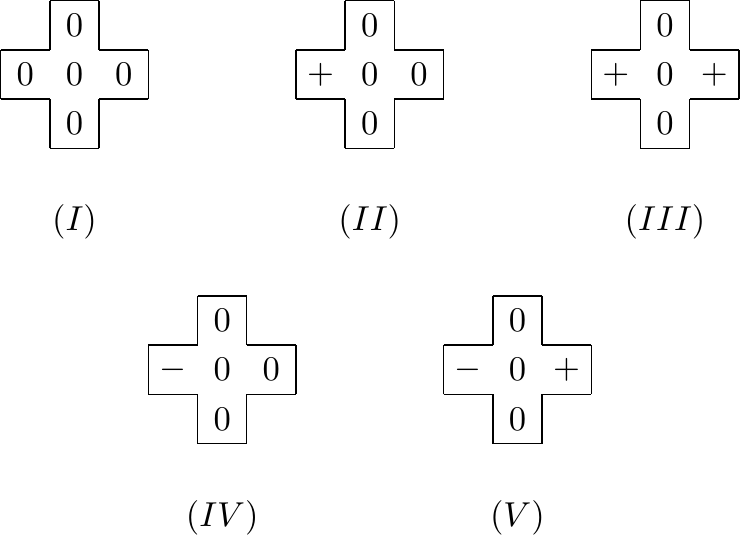}
   \caption{Local configurations with the center site along a column filled by only zero spins.}\label{fig:column_zeros}
\end{center}
   \end{figure}
We observe that $\eta$ does not contain only local configurations of type $(I)$ among those in Figure \ref{fig:column_zeros}, indeed $\eta \in \mathcal{M}_{n^+_c}$. Moreover, we show that if $\eta$ contains only local configurations of type $(I)$ and $(II)$ among those in Figure \ref{fig:column_zeros}, then $\eta \not \in M$. Indeed, in this case $\eta$ does not contain minus spins and by \cite{alonso1996three} we have $H(\eta)\geq H(\xi)$ where $\xi$ is the configuration with a quasi-square of pluses in a sea of zeros, and for $L$ large enough 
we have
\begin{align}
    & H(\xi)= 4J(2l_c-1)-(\lambda+h)l_c(l_c-1),\\
    & H(\sigma_c)=4J(2l_c-1)-(\lambda+h)l_c(l_c-1) +4JL \notag \\
    & \qquad \,\,\,\, -(\lambda-h)(L^2-l_c(l_c-1)-(2l_c-1)),
\end{align}
that is $H(\xi)>H(\sigma_c)$, and $\eta \not \in M$.
With the same argument, we may state that if $\eta$ contains only local configurations of type $(I)$, $(II)$ and $(III)$ among those in Figure \ref{fig:column_zeros}, then $\eta \not \in M$.

Thus, we suppose that $\eta$ contains at least a local configurations of type $(IV)$ or $(V)$ and 
we start to analyze the two columns (or rows) that contain the pair $(-,0)$ until we find a pair $(\eta(x),0)$ such that $\eta(x) \neq -1$. 
First of all, we observe that if the strip of minuses in the first column has a length smaller than $\frac{2J}{\lambda-h}$, then there exists a configuration $\eta' \in \mathcal{M}_{n^+_c}$ with $H(\eta')<H(\eta)$ by Lemma \ref{lemma:subcritical_cluster}, and so $\eta \not \in M$.
Moreover, if $\eta(x)=+1$ then we find $\eta' \in \mathcal{M}_{n^+_c}$ such that $H(\eta')<H(\eta)$ by using Table \ref{tab:heu000}, and also in this case $\eta \not \in M$. 
Thus, the unique possible pair $(\eta(x),0)$ is $(0,0)$.
In this case, there is a plus spin at distance two from the strip of minuses, otherwise $\eta$ satisfies the assumptions of Corollary \ref{lem:sequence_of_minuses} and so $\eta \not \in M$, see Figure \ref{fig:column}. 

\begin{figure}[H]
\begin{center}
    \includegraphics[scale=0.9]{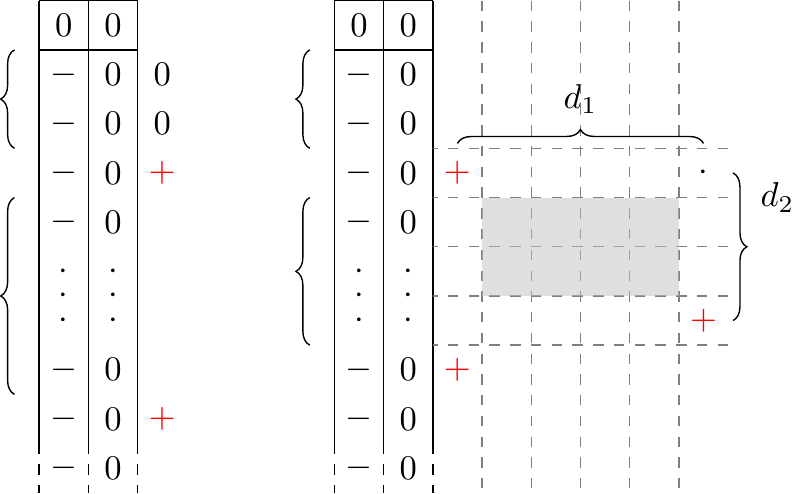}
    \caption{Neighborhood of the column with only zeros spins with attached a strips of minuses.}\label{fig:column}
\end{center}
    \end{figure}

Moreover, for every configuration that contains a pair of two consecutive columns filled by minuses and zeros, there are some plus spins that split the strips of minuses in parts with length smaller than $\frac{2J}{\lambda-h}$, see Figure \ref{fig:column} for an example, otherwise we can reduce the energy of $\eta$ by applying Corollary \ref{lem:sequence_of_minuses}, and so $\eta \not \in M$.
This implies that the distance between two pluses at distance two from the strip of minuses is smaller than $\frac{2J}{\lambda-h}$, see Figure \ref{fig:column}.

Starting from the pair $(0,0)$, we focus on the first plus at distance two from the column of minuses and we consider the plus in the nearest column, see Figure \ref{fig:column}. We observe that the region between these pluses contains only zero and minus spins for construction. In the following, we will prove that this region is a rectangle with both side lengths smaller than $\lfloor\frac{4J}{\lambda-h}\rfloor+2$. 
Indeed, if this region is a rectangle with side length greater than $\lfloor\frac{4J}{\lambda-h}\rfloor+2$ and it contains only zero spins, then we can apply Lemma \ref{lem:region_zero_minus_2} and so $\eta \not \in M$. However, if this region contains some minus spin then we may apply Lemma \ref{lem:region_zero_minus_1}, indeed the assumption and $\eta_{R\setminus \partial^+ R} \neq \muno_{R\setminus \partial^+ R}$ is satisfied otherwise $\eta$ contains the last local configuration in Figure \ref{fig:zero_distance_plus}. 
Hence, the considered region is a rectangle with both side lengths smaller than $\lfloor\frac{4J}{\lambda-h}\rfloor+2$.
Let $d_1$ be the distance between the two columns containing the two plus spins, $d_2$ be the distance between the two rows containing the two plus spins, then $d_1,d_2 < \lfloor\frac{4J}{\lambda-h}\rfloor+2$, see Figure \ref{fig:column}. 
Thus, the Euclidean distance between the two pluses has to be smaller than $\sqrt{2} \Big(\lfloor\frac{4J}{\lambda-h}\rfloor+2 \Big)$. So, we can compute the maximal size of the minimal rectangle containing all plus spins. Indeed the diagonal of this rectangle is $n^+_c \sqrt{2} \Big(\lfloor\frac{4J}{\lambda-h}\rfloor+2 \Big)$ and its side lengths are smaller than $l_R=n^+_c \Big(\lfloor\frac{4J}{\lambda-h}\rfloor+2 \Big)$.

Let $$\tilde \partial^+ R=\partial^+ R \cup \{ x \in \Lambda \setminus R \, : \, |x-y|=\sqrt{2} \,\,\,\,\, \forall \, y \in R\},$$ 
the region $\Lambda \setminus (R\cup \tilde \partial^+ R)$ can be composed by two, three or four rectangles that circumscribing $R$, see Figure \ref{fig:R_Lambda}. We consider the rectangle $R_M$ with maximal area among them and we prove that it has side lengths strictly greater than $\lfloor \frac{4J}{\lambda-h} \rfloor+2$. The maximal rectangle contained in $\Lambda \setminus (R\cup \tilde \partial^+ R)$ has side lengths $(L,x)$ with $x \geq \frac{L}{2}-n^+_c \Big(\lfloor\frac{4J}{\lambda-h}\rfloor+2 \Big)-1$. In particular, we have $L, x>\lfloor \frac{4J}{\lambda-h} \rfloor+2$, since $L>\Big(\frac{2J}{\lambda-h}\Big)^3$. Therefore, for every position of $R$ in $\Lambda$, there is a rectangle $R_M$ that contains only minus spins, otherwise it satisfies the assumption of either Lemma \ref{lem:region_zero_minus_1} or Lemma \ref{lem:region_zero_minus_2}, and so $\eta \not \in M$.
Moreover, there is a strip of minus with length $y> \lfloor\frac{2J}{\lambda-h}\rfloor$, see Figure \ref{fig:R_Lambda}, attached to $R_M$. Thus, the rectangle $S_M$ attached to $R_M$, see Figure \ref{fig:R_Lambda}, is filled by only minus spins, otherwise we can apply Corollary \ref{lem:sequence_of_minuses} and $\eta\not \in M$. Follows that the column with length $L$ filled by only zero spins is not in $\Lambda \setminus (R \cup \tilde \partial^+ R)$, then it is in $\tilde \partial^+ R \cup R$. However, every column (and row) in $\tilde \partial^+ R \cup R$ has length strictly smaller than $L$, thus it is a contradiction.
We can conclude $\eta \not \in M$.
\end{proof}
%
    %
\begin{proof}[Proof of Lemma \ref{lem:cluster_pluses_different_quasi_square}]
Let $\eta$ be a configuration as in the assumption and suppose by contradiction that $\eta \in M$. 
Let $n^0_{\eta}$ be the number of the zero spins in $\eta$.
We first show that if a column or a row 
contains only plus and zero spins, then $\eta \not \in M$. 
Suppose that $\eta$ contains a row $r$ with only plus and zero spins 
and we consider the maximal sequence of $N>0$ consecutive columns that intersects $r$ without plus spins. This set of consecutive columns forms a rectangle $R_{L,N}$ and we note that $N>\lfloor\frac{2J}{\lambda-h}\rfloor+2$, indeed $\frac{L}{n^+_c}> \lfloor\frac{2J}{\lambda-h}\rfloor+2$ see Condition \ref{c:mod007}. If one of them contains only zero spins, then $\eta \not \in M$ by Lemma \ref{lem:zeros_column_L}. Thus, we may apply Lemma \ref{lem:region_zero_minus_1} and we obtain $\eta\not \in M$.
\begin{figure}[H]
\begin{center}
        \includegraphics[scale=0.5]{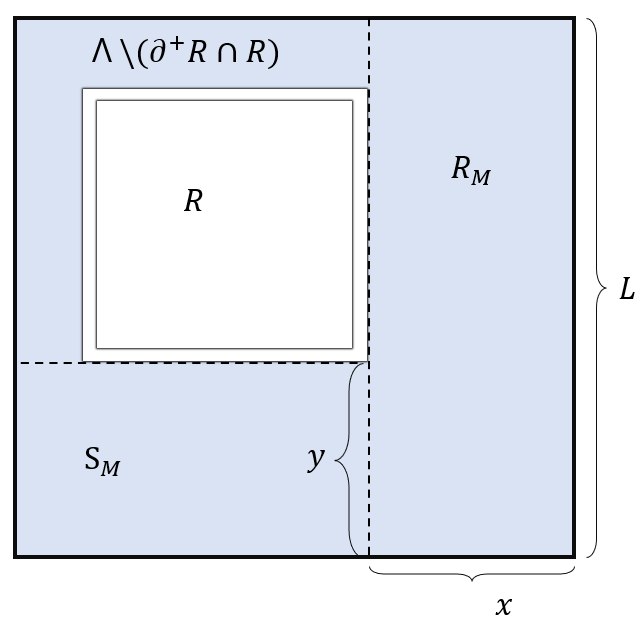}
        \caption{The minimal value of $x$ and $y$ is 
        $\frac{L}{2}-n^+_c \Big(\lfloor\frac{4J}{\lambda-h}\rfloor+2 \Big)-1$,
        when $R$ centered in the middle of $\Lambda$. In each case $L, x>\lfloor \frac{4J}{\lambda-h} \rfloor+2$.}
        \label{fig:R_Lambda}
\end{center}
    \end{figure}
\begin{figure}[H]
\begin{center}
        \includegraphics[scale=0.5]{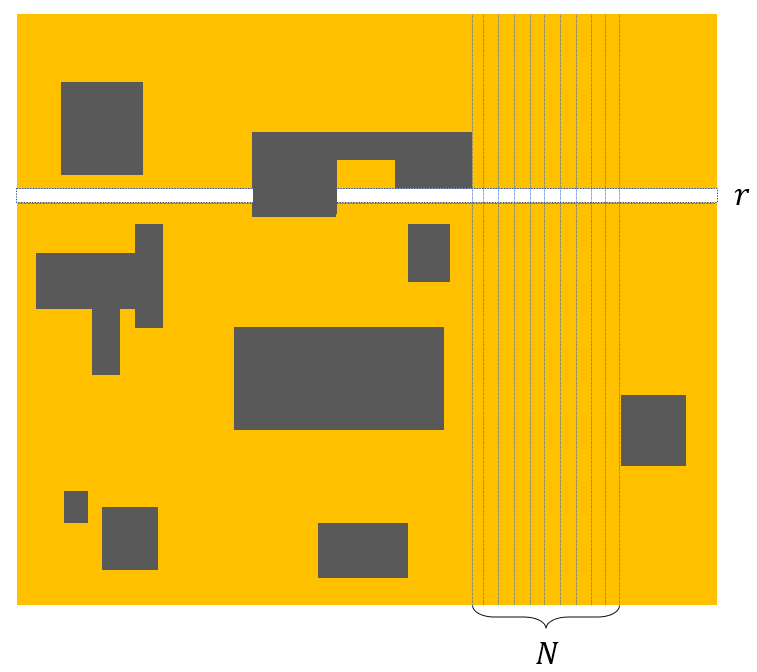}
        \caption{In the picture, the cluster of pluses are in dark gray. The white region indicates the zero region, while the yellow region contains a mixture of zeros and minuses. The set of the $N$ column without pluses that intersect $r$ is the rectangle $R_{L,N}.$}
        \label{fig:RS}
\end{center}
    \end{figure}
Follows that, for each column and row 
that contains at least plus, one of the following conditions holds:
\begin{itemize}
    \item[a)]  there are two bonds $(+,0)$ and at least two bonds $(-,0)$. No bond $(+,-)$ is present. In this case the energy contribution is at least $4J$, by the definition of the hamiltonian function \ref{mod005}, and we denote by $\alpha_1$ the number of these columns and rows. 
    See Figure \ref{fig:RS}.
    \item[b)] there are a bond $(+,-)$, at least a bond $(-,0)$ and a bond $(+,x)$ where $x \in \{-1,0\}$. No more than one bond $(+,0)$ is present. The energy contribution is at least $6J$ and we denote by $\alpha_2$ the number of these columns and rows. See Figure \ref{fig:RS}.
    \item[c)] there are either at least four bonds $(+,0)$ and at least two bonds $(-,0)$, or at least a bond $(+,-)$, at least a bond $(-,0)$ and more than one bond $(+,x)$ where $x \in \{-1,0\}$. The energy contribution is at least $6J$ and we denote by $\alpha_3$ the number of these columns and rows. See Figure \ref{fig:RS}
\end{itemize}
Moreover, we observe that in $\eta$ there are no column filled with only zero spins, otherwise we apply Corollary \ref{lem:sequence_of_minuses}. Then, 
the energy contribution along every column and every row is at least $2J$ according to the zero-boundary conditions. We denote by $\alpha_4$ 
the number of these columns and rows.

We note that $\sum_{i=1}^4\alpha_i=2L$, since we found a partition of the set of columns and rows in $\Lambda$ according to the presence or the absence of pluses.
Moreover, we observe that by \cite{alonso1996three} a cluster of pluses with fixed area $n^+_c=l_c(l_c-1)$ has perimeter $p \geq 2(2l_c-1)$. In particular, the cluster with area $n^+_c$ has minimal perimeter if and only if it is a quasi-square with semi-perimeter $2l_c-1$. In our case, the cluster of pluses has area $n^+_c$ and a shape different from a quasi-square for assumption, then its semi-perimeter is strictly greater than $2l_c-1$. We note that the semi-perimeter of such cluster coincide with the number of columns and rows with a plus, that is $\sum_{i=1}^3\alpha_i \geq 2l_c-1$.

Let $n^-_{\eta}$ be the number of minuses in $\eta$, we can write the energy function of $\eta$ as
\begin{align}\label{eq:energy_columns_rows}
    H(\eta) & \geq 4J\alpha_1+6J\alpha_2+6J \alpha_3+2J\alpha_4 \notag \\
    &-n^+_c(\lambda+h)-n^-_{\eta}(\lambda-h),
\end{align}
and we have
\begin{align}\label{eq:energy_eta_no_quasi_quad}
     H(\eta) &\geq 4J(\alpha_1+\alpha_2+\alpha_3)+2J(\alpha_2+\alpha_3) \notag \\
    & +2J(2L-\alpha_1-\alpha_2-\alpha_3)-n^+_c(\lambda+h) \notag \\
    &-n^-_{\eta}(\lambda-h) \notag \\
    & = 2J (\alpha_1+\alpha_2+\alpha_3)+2J(\alpha_2+\alpha_3+2L) \notag \\
    &-n^+_c(\lambda+h)-n^-_{\eta}(\lambda-h).
\end{align}
Recalling the remark \ref{rem:conf_critica}, we rewrite the energy of $\sigma_c$ and in the following we compare $H(\sigma_c)$ with the energy of $\eta$ in \eqref{eq:energy_eta_no_quasi_quad}.  
\begin{align}\label{eq:energy_critical_conf}
H(\sigma_c) &= 4J (2l_c-1)+2J(2L-(2l_c-1)) \notag \\
& -n^+_c(\lambda+h)-n^-_{\sigma}(\lambda-h) \notag \\
& =2J (2l_c-1)+4JL-n^+_c(\lambda+h)-n^-_{\sigma_c}(\lambda-h).
\end{align}
We distinguish two cases according to the number of zeros in $\eta$:
\begin{itemize}
    \item[1.] $k > 2l_c-1=n^0_{\sigma_c}$;
    \item[2.] $k\leq 2l_c-1=n^0_{\sigma_c}$.
\end{itemize}
In the first case, by \eqref{eq:energy_eta_no_quasi_quad} and recalling $\sum_{i=1}^3\alpha_i \geq 2l_c-1$, we obtain
\begin{align}\label{eq:eta_no_quasi_quad_sigma_c}
   &H(\eta)-H(\sigma_c) \geq 2J(\alpha_2+\alpha_3) +(n^-_{\sigma_c}-n^-_{\eta})(\lambda-h) \notag \\
   &=2J(\alpha_2+\alpha_3) +(n^0_{\eta}-n^0_{\sigma_c})(\lambda-h) >0
\end{align}
since $\alpha_2,\alpha_3\geq 0$ and $n^0_{\eta}-n^0_{\sigma_c}>0$.

In the second case, we observe that $\eta$ has to contain a bond $(+,-)$ since the semi-perimeter of its cluster of pluses is strictly greater than $2l_c-1$ and the number of zeros $k$ is smaller than $2l_c-1$. This means that either $\alpha_2 \geq 1$ or $\alpha_3 \geq 1$. 
In particular, if $\alpha_3=0$ (and $\alpha_1 \geq 1$) then $\eta$ contains a single cluster of pluses with a shape different from a quasi-square and in this case $\alpha_1+\alpha_2 >2l_c-1$. Hence, by using \eqref{eq:energy_eta_no_quasi_quad} and \eqref{eq:energy_critical_conf}, we obtain
\begin{align}\label{eq:energy_eta_no_quasi_quad_diff}
   H(\eta)-H(\sigma_c) & > 2J(\alpha_2+\alpha_3) +(n^-_{\sigma_c}-n^-_{\eta})(\lambda-h) \notag \\
   & \geq 2J+(n^0_{\eta}-n^0_{\sigma_c})(\lambda-h) \notag \\
   &= 2J+(k-2l_c+1)(\lambda-h) \notag \\
   &\geq 2J+(2-2l_c)(\lambda-h) \geq 0
\end{align}
where the last inequality follows by \eqref{eq:critical_length}, $k\geq 1$ and $J>>\lambda>h>\frac{\lambda}{2}$.

Otherwise, if $\alpha_3 \geq 1$ we note that $\eta$ contains at least two disconnected clusters of pluses and for the geometry of the lattice, also in this case we have $\alpha_1+\alpha_2 >2l_c-1$. Thus, arguing as above, we obtain the same result as in \eqref{eq:energy_eta_no_quasi_quad_diff}.
\end{proof}

%
\begin{proof}[Proof of Lemma \ref{lem:min_crit_quad_no_boundary}]
Let $\eta \in \text{argmax}_{\xi \in \mathcal{M}_{n^+_c}}H(\xi)$. By Lemma \ref{lem:cluster_pluses_different_quasi_square} we have that $\eta$ contains a single quasi-square $Q$ of pluses. Moreover, we may apply Lemma \ref{lem:region_zero_minus_1} or Lemma \ref{lem:region_zero_minus_2} in the region $\Lambda \setminus Q$, then we have $\eta_{\Lambda \setminus (Q \cup \partial^+ Q)}=-1_{\Lambda \setminus (Q \cup \partial^+ Q)}$, otherwise $\eta \not \in M$. 
\end{proof}

\begin{proof}[Proof of Lemma \ref{lem:sigma_c_minimum}]
Let $\eta \in \mathcal{M}_{n^+_c}$. By Lemma \ref{lem:min_crit_quad_no_boundary}, we have that $\eta_Q=+1_Q$ and $\eta_{{\Lambda \setminus (Q \cap \partial^+ Q)}}=-1_{\Lambda \setminus (Q \cap \partial^+ Q)}$, otherwise $\eta \not \in M$.
We will prove that if $\eta \in M$, then $\eta_{\partial^+Q}=0_{\partial^+Q}$. Suppose that there exists $x,y \in \partial^+ Q \cap \Lambda$, $|x-y|=1$, such that $\eta(x)=0$ and $\eta(y)=-1$, then we find $\eta' \in \mathcal{M}_{n^+_c}$ with $H(\eta')<H(\eta)$ by applying Table \ref{tab:heu000} at row 5 (by flipping the minus in $y$ in to zero). Then, we have either $\eta(x)=-1$ for all $x\in \partial^+ Q \cap \Lambda$, or $\eta(x)=0$ for all $x\in \partial^+ Q \cap \Lambda$. However, in the first case there exists $x,y \in \partial^+ Q \cap \Lambda$, $|x-y|=1$, such that $\eta(x)=\eta(y)=-1$, then by flipping the two minuses in $x$ and $y$ into zero, we obtain a configuration $\eta' \in \mathcal{M}_{n^+_c}$ with $H(\eta')=H(\eta)-2J+2(\lambda-h)<H(\eta)$ by applying Table \ref{tab:heu000} at rows 3 and 5. Thus, $\eta_{\partial^+Q \cap \Lambda}=0_{\partial^+Q \cap \Lambda}$ otherwise $\eta \not \in M$. \\
According to the zero-boundary conditions, the energy of $\eta$ depends on the position of $Q$, then $\eta$ can contain either a frame, or a chopped boundary frame or a chopped corner frame. The energy of these three configuration is computed in \eqref{eq:energy_frame} and by using \eqref{eq:difference_energy_frame}, we can conclude that $\sigma_c$ is the unique $\text{argmax}_{\xi \in \mathcal{M}_{n^+_c}}H(\xi)$.
\end{proof}

\begin{proof}[Proof of Lemma \ref{lem:path_sigma_c}]
Consider a path $\underline{\omega}$ as in the assumption, and we suppose by contradiction that $\Phi(\underline{\omega})<H(\tilde \sigma_c)$. If there exists $\eta \in \mathcal{M}_{n^+_c+1}\cap \underline{\omega}$ such that $\sigma_c \sim \eta$, then by Table \ref{tab:heu000}, we obtain  
\begin{equation}
    \Phi(\underline{\omega})\geq H(\eta)=H(\sigma_c)+4J-(\lambda+h)>H(\tilde\sigma_c).
\end{equation}
Then, we have $\sigma_c \sim \eta$ where $\eta \in \mathcal{M}_{n^+_c}\cap \underline{\omega}$. 
We note that, according to Table \ref{tab:heu000}, if the configuration contains a cluster of pluses with a quasi-square shape, then the minimal energy contribution to add a plus is $2J-(\lambda+h)$. This is possible by flipping a zero with \,\, 
{\footnotesize
$
\begin{matrix}
&0&\\[-0.3em]
+&\cdot&0 \\[-0.3em]
\, &0& \,
\end{matrix}
$ 
}
\,\, nearest neighbors into a plus, however all zero spins in $\sigma_c$ have only a plus and at most two zeros nearest neighbors. Thus, we obtain $\eta$ from $\sigma_c$ by flipping a minus into zero in order to create a zero with \,\, 
{\footnotesize
$
\begin{matrix}
\, & 0 & \, \\[-0.3em]
+ & \cdot & 0 \\[-0.3em]
\, & 0 & \,
\end{matrix}
$
}
\,\,  nearest neighbors, otherwise $\Phi(\underline{\omega})\geq H(\tilde\sigma_c)$. We note that the minimal energy contribution to obtain $\eta$ is $(\lambda-h)$ when we flip a minus with at most two minuses nearest neighbors to zero, see Table \ref{tab:heu000}. Then, we obtain from $\eta$ a configuration $\eta' \in \mathcal{M}_{n^+_c}$ by adding a plus in the site where there is the zero with \,\, 
{\footnotesize
$
\begin{matrix}
\, & 0 & \, \\[-0.3em]
+ & \cdot & 0 \\[-0.3em]
\, & 0 & \,
\end{matrix}
$ 
}
\,\, nearest neighbors.   
However, also in this case we have $\Phi(\underline{\omega})\geq H(\eta')=H(\sigma_c)+(\lambda-h)+2J-(\lambda+h)=H(\tilde \sigma_c)$ and this is a contradiction.
We observe that if we remove more than one minus before adding a plus, the communication height is even greater. 
\end{proof}

\begin{proof}[Proof of Lemma \ref{lem:step_1_minmax}]
Let $\underline{\omega}=(\eta_1,....\eta_n)$, $n \in \mathbb{N}$, be a path as in the assumption. We consider the part of 
$\underline{\omega}$ from $\mathcal{M}_{n^+_c}$ to $\mathcal{M}_{n^+_c+1}$ and let $\eta_i \in \mathcal{M}_{n^+_c} \cap \underline{\omega}$ such that $\eta_i \sim \eta_{i+1} \in \mathcal{M}_{n^+_c +1}$. If $\eta_i \equiv \sigma_c$, then we conclude by applying Lemma \ref{lem:path_sigma_c}. Thus, suppose that $\eta_i \not \equiv \sigma_c$ and assume by contradiction that $\Phi(\underline{\omega})<H(\tilde \sigma_c)$. We have $H(\eta_{i})>H(\sigma_c)$ by Lemma \ref{lem:sigma_c_minimum} and in particular since $\eta_i \not \equiv \sigma_c$ from the Table \ref{tab:heu000} we have $H(\eta_i)=H(\sigma_c)+2Ja+b(\lambda-h)$ with $a\in \mathbb{N}$ and $b\in \mathbb{Z}$ such that $2Ja+b(\lambda-h)>0$. Moreover, $\eta_i \in \mathcal{M}_{n^+_c}$, $\tilde\sigma_c \in \mathcal{M}_{n^+_c+1}$ and $H(\eta_i)<H(\tilde\sigma_c)$, in particular $H(\eta_i) \leq H(\tilde\sigma_c)-2J+(\lambda+h)$ according to Table \ref{tab:heu000}. Then, by \eqref{eq:energy_sigma_protuberance} we have 
\begin{align}
   H(\sigma_c)&<H(\sigma_c)+2Ja+b(\lambda-h) \notag \\
   &=H(\eta_i) \leq H(\tilde\sigma_c)-2J+(\lambda+h) \notag \\
   &=H(\sigma_c)+(\lambda-h).
\end{align}
Follows that $a=0$, $b=1$ and so 
\begin{equation}\label{eq:eta_i_sigma_c}
    H(\eta_i)=H(\sigma_c)+(\lambda-h).
\end{equation}
This also implies that $\eta_i$ contains one less minus spin with respect to $\sigma_c$.
Moreover, $\eta_i \sim \eta_{i+1}$ then $\eta_{i}$ and $\eta_{i+1}$ differs for only one plus. Hence, let $\alpha \in \mathbb{Z}$, by using Table \ref{tab:heu000}, \eqref{eq:eta_i_sigma_c} and \eqref{eq:energy_sigma_protuberance}, we obtain
\begin{align}
    H(\tilde \sigma_c) & >\Phi(\underline{\omega})\geq H(\eta_{i+1}) \notag \\
    &=H(\eta_{i})+2J \alpha-(\lambda+h) \notag \\
    &=H(\sigma_c)+(\lambda-h)+2J \alpha-(\lambda+h) \notag \\
    &=H(\tilde \sigma_c)+2J(\alpha-1).
\end{align}
 Thus $\alpha \leq 0$. This implies that $H(\eta_{i+1}) \leq H(\eta_{i})-(\lambda+h)$ and, according to Table \ref{tab:heu000}, we have that $\eta_i$ contains (at least) a zero spin with at least two pluses nearest neighbors, that is  
 \begin{center}
{\footnotesize
$
\begin{matrix}
\, &  0 & \, \\[-0.3em]
+ & \cdot & 0 \\[-0.3em]
\, & + & \,
\end{matrix}
$
}
, \,\, 
{\footnotesize
$
\begin{matrix}
\, &  0 & \, \\[-0.3em]
+ & \cdot & + \\[-0.3em]
\, & + & \,
\end{matrix}
$
}
, \,\,
{\footnotesize
$
\begin{matrix}
\, &  - & \, \\[-0.3em]
+ & \cdot & + \\[-0.3em]
\, & + & \,
\end{matrix}
$
}
, \,\,
{\footnotesize
$
\begin{matrix}
\, &  + & \, \\
+ & \cdot & + \\
\, & + & \,
\end{matrix}
$
}. 
\end{center}
This implies that the cluster of pluses in $\eta_i$ has at least a convex corner and so it has a shape different from a quasi-square. For this reason and the fact that $\eta_i$ contains one less minus spin with respect to $\sigma_c$, we proceed as in proof of Lemma \ref{lem:cluster_pluses_different_quasi_square} and we can apply \eqref{eq:eta_no_quasi_quad_sigma_c}. Thus, we obtain
 \begin{align}
H(\eta_i)>H(\sigma_c)+(\lambda-h)
 \end{align}
 and this is a contradiction for \eqref{eq:eta_i_sigma_c}.
\end{proof}

\begin{proof} [Proof of Lemma \ref{lem:path_tilde_sigma_c}]
   Let $\underline{\omega}$ be a path from $\tilde \sigma_c \in \mathcal{M}_{n^+_c+1}$ to $\mathcal{M}_{n^+_c+2}$ without loops. Then, if there exists $\eta\sim \tilde \sigma_c$ such that $\eta\in\underline{\omega}\cap\mathcal{M}_{n^+_c+2}$ then by Table \ref{tab:heu000}, we have $\Phi(\underline{\omega}) \geq H(\tilde \sigma_c)+2J-(\lambda+h)$ and we conclude by \eqref{eq:energy_sigma_protuberance} and \eqref{eq:energy_saddle}. Otherwise, if along $\underline{\omega}$ we have that $\tilde\sigma_c \sim \eta$ with $\eta \in \mathcal{M}_{n^+_c+1}$ then we conclude $\Phi(\underline{\omega}) \geq H(\tilde \sigma_c)+(\lambda-h)$ by using Table \ref{tab:heu000}, \eqref{eq:energy_sigma_protuberance} and \eqref{eq:energy_saddle}.
\end{proof}

\begin{proof} [Proof of Lemma \ref{lem:equivalence_tilde_sigma_c}]
    Let $\eta \in \mathcal{M}_{n^+_c+1}$ be a configuration such that $H(\eta)=H(\tilde \sigma_c)$. Then, by using the same partition of the columns and rows in the proof of Lemma \ref{lem:cluster_pluses_different_quasi_square} (see conditions a.,b.,c. and Figure \ref{fig:RS}, we have
    \begin{align}
        &H(\eta)-H(\tilde \sigma_c)=4J(\alpha_1-2l_c)+2J(\alpha_4-2L+2l_c) \notag \\
        &+6J(\alpha_2+\alpha_3)+(n^-_c-n^-_\eta)(\lambda-h) \notag \\
        &=2J(2\alpha_1+\alpha_4+3\alpha_2+3\alpha_3-2L-2l_c) \notag \\
        &+(n^-_c-n^-_\eta)(\lambda-h).
    \end{align}
    Since $H(\eta)=H(\tilde \sigma_c)$, follows that
    \begin{align}
      \begin{cases}
        &2\alpha_1+\alpha_4+3(\alpha_2+\alpha_3)=2(L+l_c) \notag \\
        &n^-_c=n^-_\eta
        \end{cases}  
    \end{align}
    The second equality implies that the number of minuses in $\eta$ is the same as in $\tilde \sigma_c$. Recalling that $\sum_{i=1}^4 \alpha_i=2L$, we obtain
    $\alpha_1+2(\alpha_2+\alpha_3)=2l_c$. Moreover, $\sum_{i=1}^3 \alpha_i \geq 2l_c$ indeed the minimal semi-perimeter of a cluster of pluses with area $n^+_c$ is $2l_c$ by \cite{alonso1996three}. Then, we derives $\alpha_2+\alpha_3=0$. From the definitions of $\alpha_2$ and $\alpha_3$, it follows that the bonds of type $(+,-)$ are not present and that the number of bonds $(+,0)$ in each column and row is at most two. Hence, $\alpha_1=2l_c$ and $\alpha_4=2(L-l_c)$. This implies that the union of the clusters of pluses has a semi-perimeter equal to $2l_c$. 
    By \cite{alonso1996three}, such union of clusters has to be contained either in a square with both side length $l_c$ or in a rectangle with side length $l_c-1$ and $l_c+1$. Moreover, the cluster of pluses can not contain more than one protuberance along each side (otherwise $\alpha_3 \neq 0$), see Figure \ref{fig:examples_cluster_pluses_in_square}. We note that the cluster is in the corner of $\Lambda$, otherwise either the number of zeros is greater than the number of zeros in $\tilde \sigma_c$, or $\eta$ contains at least a bond $(+,-)$. Finally, we observe that in case the configuration contains more than one strip of minuses in at least a column or a row of $\Lambda$, then $\eta$ contains a column among the $\alpha_4$ columns with an energy contribution of $4J$ instead of $2J$ and so $H(\eta) > H(\tilde \sigma_c)$. For the same reason, the protuberance attached to the cluster is at distance one from the boundary of $\Lambda$, see Figure \ref{fig:examples_cluster_pluses_in_square}. We may conclude that $\eta \in \mathscr{S}$.     
    \begin{figure}[H]
\begin{center}
    \includegraphics[scale=0.24]{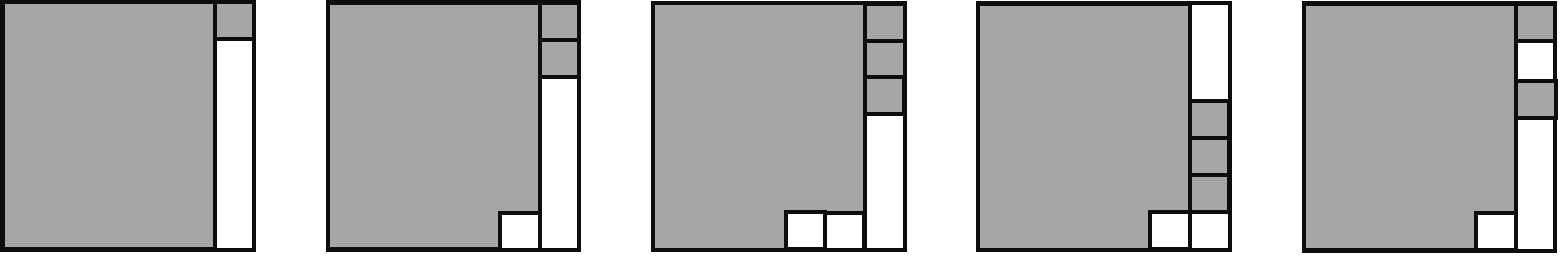}
    \caption{ Examples of clusters of pluses with area $n^+_c$ and the same perimeter of a square with side length $l_c$. The first three clusters are contained in a configuration of $\mathscr{S}$.}\label{fig:examples_cluster_pluses_in_square}
\end{center}
    \end{figure}
\end{proof}

\begin{proof}[Proof of Lemma \ref{lem:conf_S_different_tilde_sigmac}]
We consider a configuration $\eta \in \mathscr{S} \setminus \{ \tilde \sigma_c\}$. By the property of $\mathscr{S}$, the cluster of pluses of every configuration in $\mathscr{S}$ has perimeter equal to $2l_c$, but the number of the sites along the external-boundary of this cluster (the so called \emph{site-perimeter}) changes according to the number of its concave angles. We note that $\tilde \sigma_c$ contains $2l_c$ sites along the external-boundary of the cluster of pluses, instead the other configurations contains $k>1$ concave angles (see the second and the third pictures in Figure \ref{fig:examples_cluster_pluses_in_square} for two examples with $k=2$). Moreover, every configuration in $\mathscr{S}$ contains $2l_c$ zero spins, thus $\eta \not \equiv \tilde \sigma_c$ contains $k \geq 1$ zero spins at distance strictly greater than one from the cluster. In particular, this zero spins are in the minimal rectangle containing the cluster of pluses and they are attached from the zero spins at distance one from the cluster in order not to create more than one strip of minuses in each column and row of $\Lambda$. \\
We consider the configuration $\eta_k$ obtained from $\eta$ by flipping these $k$ zero spins to minus, see Figure \ref{fig:conf_S_path}. For the definition of the set $\mathcal{S}$ and by using Table \ref{tab:heu000}, we have $H(\eta_k)=H(\eta)-k(\lambda-h)$. By Table \ref{tab:heu000}, we note that $\eta_k$ is a local minimum, indeed every path from it is an up-hill path. In particular, since to reach $\muno$, $\underline{\omega}$ must cross all the manifold $\mathcal{M}_{n^+_c+1}$, $\mathcal{M}_{n^+_c}$, $\mathcal{M}_{n^+_c-1}$, ..., $\mathcal{M}_{0}$, then we consider $\eta_m$ be the configuration with a protuberance with cardinality one obtained from $\eta_k$, see Figure \ref{fig:conf_S_path} for an example of $\eta_m$. The path that connected $\eta_k$ with $\eta_m$ is a two-steps down-hill path, where the up-hill is the minimal positive energy quantum $\lambda-h$, thus $\Phi(\eta,\eta_m) \geq \Phi(\underline{\omega})$. We obtain $H(\eta_m) = H(\eta_k)+m(\lambda+h)-(m-1)(\lambda-h)$ by flipping $m$ pluses to zero and $m-1$ zeros to minus as in Figure \ref{fig:conf_S_path} and by using Table \ref{tab:heu000}. We note that $m \geq k$, indeed the cluster of pluses contains $n^+_c+1$ pluses in a rectangle $l_c \times l_c$ (or $(l_c-1) \times (l_c+1)$) then when it contains $\gamma$ concave angles then the protuberance has cardinality more than $\gamma$ because the pluses that are not present in the sites of the concave angles must be located along the protuberance to be inside the rectangle. 

Assume by contradiction that $\Phi(\underline{\omega}) \leq H(\sigma_s)$. Thus, we have

\begin{align}
    H(\sigma_s) & \geq H(\eta_m) \geq H(\eta_k)+m(\lambda+h)-(m-1)(\lambda-h) \notag \\
    &\geq H(\eta)-(k+m-1)(\lambda-h)+m(\lambda+h) \notag \\
    &= H(\tilde\sigma_c)-(k+m-1)(\lambda-h)+m(\lambda+h) \notag \\
    & \geq H(\tilde\sigma_c)+(1-k)\lambda+(2m+k-1)h \notag \\
    & = H(\sigma_s)-k\lambda+(2m+k)h \notag \\
    & \geq H(\sigma_s)-k\lambda+3kh
\end{align}
where the equality is obtained by \eqref{eq:energy_sigma_protuberance} and \eqref{eq:energy_saddle} and the last inequality follows from $m \geq k$. Then, we obtain a contradiction, indeed $H(\sigma_s) \geq H(\sigma_s)-k\lambda+3kh \geq H(\sigma_s)-\lambda+3h>H(\sigma_s)$, since $k \geq 1$ and $h>\lambda/2$.
\begin{figure}[H]
\begin{center}
    \includegraphics[scale=0.5]{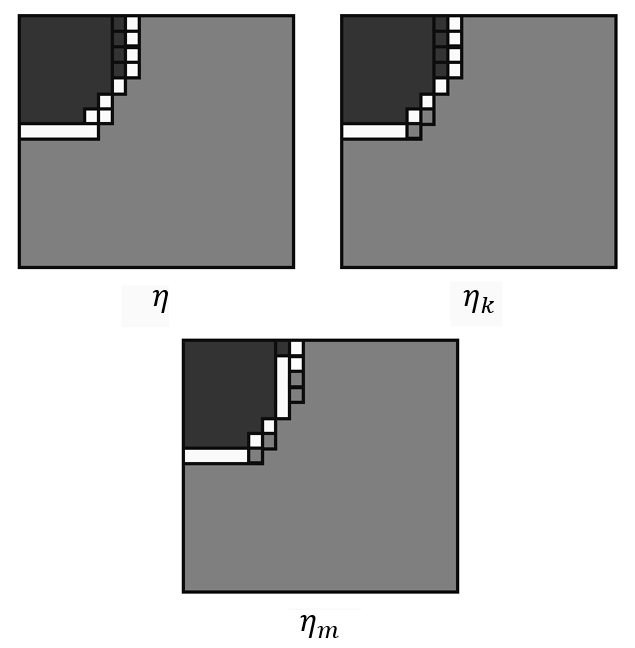}
    \caption{On the left, a configuration $\eta \in \mathscr{S}$. In the middle $\eta_k$, with $k=2$ in this case. On the right, an example of the configuration $\eta_m$ with $m=3$ and such that $H(\eta_m)=H(\eta_k)+m(\lambda+h)-(m-1)(\lambda-h)$.}\label{fig:conf_S_path}
\end{center}
    \end{figure}
\end{proof}

\begin{proof}[Proof of Lemma \ref{lem:step_2_minmax}]
Let $\underline{\omega}=(\eta_1,...,\eta_k,...,\eta_n)$, $n \in \mathbb{N}$, be a path as in the assumption where $\eta_1 \in \mathcal{M}_{n^+_c}$ and $\eta_n \in \mathcal{M}_{n^+_c+2}$. Thus, along $\underline{\omega}$, there exists a configuration $\eta_k \in \mathcal{M}_{n^+_c+1}$. Let $\underline{\omega}_i$ and $\underline{\omega}_f$ be respectively the initial and the final part of the path $\underline{\omega}$, i.e., $\underline{\omega}_i=(\eta_1,...,\eta_k)$ and $\underline{\omega}_f=(\eta_{k+1},...,\eta_n)$. By Lemma \ref{lem:step_1_minmax}, we have $\Phi(\underline{\omega}_i) \geq H(\tilde \sigma_c)$. This implies that $\Phi(\underline{\omega}) \geq \Phi(\underline{\omega}_i) \geq H(\tilde \sigma_c)$. 
Assume by contradiction $\Phi(\underline{\omega}) < H(\sigma_s)$, then
\begin{equation}
    H(\sigma_s) > \Phi(\underline{\omega}) \geq H(\tilde \sigma_c)=H(\sigma_s)-(\lambda-h),
\end{equation}
where the last equality follows by \eqref{eq:energy_sigma_protuberance} and \eqref{eq:energy_saddle}. Thus, $\Phi(\underline{\omega})=\Phi(\underline{\omega}_i)=H(\tilde \sigma_c)$ because $\lambda-h$ is the minimal energy difference of the model. Follows that there exists a configuration $\tilde \eta$ along $\underline{\omega}_i$ such that $H(\tilde \eta)=H(\tilde \sigma_c)$ and $\tilde \eta \in \mathcal{M}_{n^+_c+1}$. Thus, by Lemma \ref{lem:equivalence_tilde_sigma_c}, we obtain $\tilde \eta \in \mathscr{S}$. If $\tilde \eta \neq \tilde \sigma_c$, then by Lemma \ref{lem:conf_S_different_tilde_sigmac} we obtain $\Phi(\underline{\omega}) \geq \Phi(\underline{\omega}_i) >H(\sigma_s)$ and we conclude the proof. Otherwise $\tilde \eta \equiv \tilde \sigma_c$ and we conclude by applying Lemma \ref{lem:path_tilde_sigma_c}.
\end{proof}

\par\noindent

\appendix
\renewcommand{\thesubsection}{\Alph{section}.\arabic{subsection}}
\renewcommand{\theequation}{\Alph{section}.\arabic{equation}}
\renewcommand{\thefigure}{\Alph{section}.\arabic{figure}}

\begin{acknowledgments}
ENMC acknowledges that this work has been done under the 
framework of GNMF.
\end{acknowledgments}
\end{multicols}
\bibliographystyle{abbrv}
\bibliography{cjs-bcg_zero}
\end{document}